\def\R{\mathbb{R}}
\def\rP{\mathbb{P}}
\def\rE{\mathbb{E}}
\def\spn{\mathop{\rm span}}
\def\Lip{\mathop{\rm Lip}}
\def\av{\mathop{\rm {av}}}
\def\argmin{\mathop{\rm arg\, min}}
\def\F{{\mathcal F}}
\def\C{{\mathcal C}}
\def\P{{\mathcal P}}
\def\tpi{{\tilde \pi}}
\def\hQ{{\hat Q}}
\def\hmu{{\hat \mu}}
\def\sX{{\mathsf X}}
\def\sA{{\mathsf A}}
\def\sE{{\mathsf E}}
\def\sW{{\mathsf W}}
\def\tbeta{\tilde{\beta}}
\newtheorem{assumption}{Assumption}
\begin{document}

\title{Learning Mean-Field Games with Discounted and Average Costs}

\author{\name Berkay Anahtarci \email berkay.anahtarci@ozyegin.edu.tr\\
       \addr Department of Natural and Mathematical Sciences\\
       \"{O}zye\u{g}in University\\
       \.{I}stanbul, Turkey
       \AND
       \name Can Deha Kariksiz \email deha.kariksiz@ozyegin.edu.tr\\     
       \addr Department of Natural and Mathematical Sciences\\
       \"{O}zye\u{g}in University\\
       \.{I}stanbul, Turkey
       \AND
       \name Naci Saldi \email naci.saldi@bilkent.edu.tr \\
       \addr Department of Mathematics\\
       Bilkent University\\
       Ankara, Turkey
       }
       
\editor{NA}

\maketitle

\begin{abstract}
 We consider learning approximate Nash equilibria for discrete-time mean-field games with stochastic nonlinear state dynamics subject to both average and discounted costs. To this end, we introduce a mean-field equilibrium (MFE) operator, whose fixed point is a mean-field equilibrium, i.e., equilibrium in the infinite population limit. We first prove that this operator is a contraction, and propose a learning algorithm to compute an approximate mean-field equilibrium by approximating the MFE operator with a random one. Moreover, using the contraction property of the MFE operator, we establish the error analysis of the proposed learning algorithm. We then show that the learned mean-field equilibrium constitutes an approximate Nash equilibrium for finite-agent games.

\end{abstract}

\begin{keywords}
Mean-field games, approximate Nash equilibrium, fitted $Q$-iteration algorithm, discounted-cost, average-cost.
\end{keywords}

\section{Introduction}\label{sec1}

We consider learning approximate Nash equilibria in discrete-time stochastic dynamic games with a large population of identical agents in a mean-field interaction. The usual approach to analyse these game models is to study the infinite-population limit of the problem. This idea was utilized in the works of \cite{HuMaCa06}, \cite{LaLi07}, where mean-field games (MFGs) were introduced to obtain an approximate Nash equilibria for continuous-time differential games with a large number of agents interacting via a mean-field term (i.e., the empirical distribution of the local states). For studies of continuous-time mean-field games with various models and cost functions, 
see  \cite{HuCaMa07,TeZhBa14,Hua10,BeFrPh13,Ca11,CaDe13,GoSa14,MoBa16}.

Our goal in this paper is to learn approximate Nash equilibria for a class of stochastic dynamic games by considering stationary mean-field games in the infinite population limit. In particular, we establish an algorithm to learn \emph{stationary} or \emph{oblivious} mean-field equilibrium (see \cite{WeBeRo05,WeBeRo08}) in the infinite population limit and make use of the learned equilibrium in the finite agent setting as an approximate Nash equilibrium. 

In stationary mean-field games, a generic agent competes against a stationary mean-field term that time-homogeneously models the collective behaviour of other agents (\cite{WeBeRo05}), and therefore solves a Markov decision process (MDP) with a constraint on the stationary distribution of the state. The stationary mean-field equilibrium, which consists of a policy and a state measure, is the concept of equilibrium in the infinite-population limit. This pair must satisfy the Nash certainty equivalence (NCE) principle (\cite{HuMaCa06}), which states that the policy should be optimal under a given state measure, and when the generic agent applies this policy the resulting stationary distribution of the agent's state must be the same as the state measure. The existence of stationary mean-field equilibrium can be established via Kakutani's fixed point theorem under quite mild assumptions. Furthermore, it can be shown that when the number of agents is large enough, the policy in stationary mean-field equilibrium is an approximate Nash equilibrium for a finite-agent setting (\cite{AdJoWe15}). 


In the literature, \cite{WeBeRo10} propose an algorithm for computing oblivious equilibrium in a stationary mean-field industry dynamics model. \cite{AdJoWe15} study a stationary mean-field game model with a countable state-space under an infinite-horizon discounted-cost criterion. 
\cite{HuMa19} consider stationary mean-field games with binary action space, where they establish the existence and the uniqueness of the stationary mean-field equilibrium. \cite{WeLi22} consider stationary mean-field games with a continuum of states and actions, and establish a novel uniqueness result for stationary mean-field equilibrium 
\cite{GoMoSo10} study both stationary and non-stationary mean-field games with a finite state space over a finite horizon and establish the existence and uniqueness of the mean-field equilibrium for both cases. \cite{ElLiNi13,MoBa15,NoNa13,MoBa16-cdc} consider discrete-time mean-field games with linear state dynamics. \cite{SaBaRaSIAM,SaBaRaMOR1} consider a discrete-time non-stationary mean-field game with Polish state and action spaces under the discounted-cost optimality criterion for fully-observed case and partially-observed case, respectively. \cite{SaBaRaMOR2} consider a discrete-time risk-sensitive non-stationary mean-field game with Polish state and action spaces. \cite{Bis15,Wie19,WiAl05,Sal20} study discrete-time non-stationary mean-field games subject to the average-cost optimality criterion. 

We point out that, except the linear model and the paper of \cite{WeBeRo10}, the studies mentioned above only establish the existence and uniqueness of the mean-field equilibrium, and they propose no algorithm with convergence guarantee to compute this mean-field equilibrium when the model is known. In our recent work (\cite{AnKaSa19}), we  study this problem for a very general class of models, propose a value iteration algorithm, and prove the convergence of this algorithm to the stationary mean-field equilibrium. In this current paper, we generalize this algorithm to the model-free setting using fitted $Q$-iteration (see \cite{AnMuSz07}); that is, we propose a learning algorithm to compute an equilibrium solution for discrete-time stationary mean-field games under the discounted-cost and average-cost optimality criteria. 

Learning in stationary mean-field games has become prominent in recent years. In the continuous-time setup, \cite{YiMeMeSh14}  develop a learning algorithm for a mean-field oscillator game model to obtain approximate Nash equilibrium. In the discrete case, \cite{KaFrHa11} consider the learning of equilibrium policy in static anonymous games with countably many players.  \cite{YaLuLiZhZhWa18} establish a learning algorithm for classical stochastic games via mean-field approximation by factoring the $Q$-function in terms of actions. \cite{SuAd18, SuAd19} consider learning gradient-based equilibria in stationary mean-field games and develop a two-time scale stochastic gradient ascent algorithm, respectively.
\cite{GuHuXuZh19} develop a $Q$-learning algorithm to obtain stationary mean-field equilibria for finite state-action stationary mean-field games, where the convergence analysis depends on contractivity assumptions on the operators involved in the algorithm. \cite{ElPeLaGePi19} establish a fictitious play iterative learning algorithm to compact state-action non-stationary mean-field games under finite-horizon discounted cost criterion, where the dynamics of the state and the one-stage cost function satisfy certain structure. They also propose an error analysis of the learning algorithm for the game model with deterministic state dynamics. 
\cite{CaLaTa19} study linear-quadratic mean-field control and establish the convergence of policy gradient algorithm. \cite{FuYaChWa19} develop an actor-critic algorithm to learn mean-field equilibrium for linear-quadratic mean-field games. \cite{YaYeTrXuZh18} consider a mean-field game in which agents can control their transition probabilities without any restriction. In this case, the action space becomes the set of distributions on the state space. Using this specific structure, they can transform a mean-field game into an equivalent deterministic Markov decision process by enlarging the state and action spaces, and then, apply classical reinforcement learning algorithms to compute mean-field equilibrium. \cite{CaLaTa19-b} apply a similar analysis to mean-field control problems, and the convergence of the $Q$-learning algorithm for deterministic systems is established.

In this paper, we develop a learning algorithm that guarantees convergence in a discrete-time stationary mean-field game with nonlinear stochastic state dynamics.
We take into account the average cost criterion, in contrast to earlier research that mainly dealt with discounted cost or finite-horizon total cost criteria. It's also important to note that the majority of the aforementioned works with convergence guarantees focus on finite state and finite action settings, whereas we assume that the action space is a compact and uncountable subset of a finite dimensional Euclidean space. In general, it is more challenging to deal with this assumption. Furthermore, we prove that our algorithm converges to the global stationary mean-field equilibrium rather than local stationary mean-field equilibria. We also establish easily verifiable conditions on the system components for the convergence of the learning algorithm, which is lacking in some of the prior works mentioned above.  

Our learning algorithm performs two-steps in each iteration. In the first step, given any mean-field term, an optimal policy is learned via a fitted $Q$-iteration algorithm. Then, using this optimal policy and the current mean-field term, the next mean-field term is computed in the second step by an empirical estimate of the transition probability, which is obtained via simulation. We prove that the policy obtained by this algorithm is close to the mean-field equilibrium policy, and can therefore be used as an approximate Nash equilibrium for a finite-agent game if the number of agents is sufficiently large.

The error analysis of our learning process depends crucially on the determination of the contraction of the operator providing the stationary mean-field equilibrium. It is necessary to prove that the optimal policy is Lipschitz continuous with respect to the current mean-field term since the optimal policy corresponding to the current mean-field term affects the next mean-field term through the value iteration algorithm. Although establishing the Lipschitz continuity of the optimal $Q$ function with respect to the mean-field term is straightforward, it is quite challenging to do the same for the optimal policy. To overcome this, we assume that the function in the optimality equation is strongly convex and has Lipschitz continuous gradient. In our recent work (\cite{AnKaSa20-ar}) we provide a different approach by introducing a strongly convex regularization function in the one-stage cost function that helps us to obtain Lipschitz continuity of the optimal policy with respect to the mean-field term via duality between strong convexity and smoothness, therefore eliminating the need for strong convexity and smoothness assumptions on the system components when establishing the Lipschitz regularity of the optimal policy with respect to the mean-field term. Although this regularization approach allows us to relax the assumptions on the system components, it adds bias to the equilibrium solution since regularization in general favours randomized policies over deterministic policies, and causes the regularized stationary mean-field equilibrium to deviate from the true stationary mean-field equilibrium as a result of the additional regularization term in the one-stage cost function.

The paper is organized as follows. In Section~\ref{sec3}, we introduce the mean-field game and define the mean-field equilibrium. In Section~\ref{known-model} and Section~\ref{av-known-model}, we introduce MFE operator when the model is known for discounted-cost and average-cost, respectively. In Section~\ref{sec2} and Section~\ref{av-sec2}, we formulate the finite-agent version of the game problem  for discounted-cost and average-cost, respectively. In Section~\ref{unknown-model} and Section~\ref{av-unknown-model}, we propose and perform the error analysis of the learning algorithm for the unknown model and prove that learned mean-field equilibrium constitutes an approximate Nash equilibrium for finite-agent games  for discounted-cost and average-cost, respectively. In Section~\ref{num_ex}, we propose a numerical example.  Section~\ref{conc} concludes the paper.

\smallskip

\noindent\textbf{Notation.} 
For a finite set $\sE$, we let $\P(\sE)$ and $M(\sE)$ denote the set of all probability distributions on $\sE$ and the set of real-valued functions on $\sE$, respectively. In this paper, $\P(\sE)$ is always endowed with $l_1$-norm $\|\cdot\|_1$. We let $m(\cdot)$ denote the Lebesgue measure on appropriate finite dimensional Euclidean space $\R^d$. For any $a \in \R^d$ and $\rho > 0$, let $B(a,\rho) \coloneqq \{b: \|a-b\| \leq \rho\}$, where $\|\cdot\|$ denotes the Euclidean norm. Let $Q: \sE_1 \times \sE_2 \rightarrow \R$, where $\sE_1$ and $\sE_2$ are two sets. Then, we define $Q_{\min}(e_1) \coloneqq \inf_{e_2 \in \sE_2} Q(e_1,e_2)$. The notation $v\sim \nu$ means that the random element $v$ has distribution $\nu$.

\section{Mean-field games and mean-field equilibria}\label{sec3}

In this paper, we consider a discrete-time mean-field game with state space $\sX$ and action space $\sA$. Here, $\sX$ is a finite set with the discrete metric $d_{\sX}(x,y)=1_{\{x \neq y\}}$ and $\sA$ is a convex and compact subset of a finite dimensional Euclidean space $\R^{\dim_{\sA}}$ equipped with the Euclidean norm $\|\,\cdot\,\|$\footnote{In this work, by updating several definitions appropriately, all results are still true if one metrizes the finite dimensional Euclidean space $\R^{\dim_{\sA}}$ with $l_p$-norm for $p\geq 1$.}.   
The state dynamics evolve according to the transition probability $p : \sX \times \sA \times \P(\sX) \to \P(\sX)$; that is, given the current state $x(t)$, action $a(t)$, and state-measure $\mu$, the next state $x(t+1)$ is distributed as follows:
$$
x(t+1) \sim p(\cdot|x(t),a(t),\mu). 
$$
In this model, a policy $\pi$ is a conditional distribution on $\sA$ given $\sX$; that is, $\pi:\sX\rightarrow\P(\sA)$. Let $\Pi$ denote the set of all policies. A policy $\pi$ is deterministic if $\pi(\,\cdot\,|x) = \delta_{f(x)}(\,\cdot\,)$ for some $f:\sX\rightarrow\sA$. Let $\Pi_d$ denote the set of all deterministic policies.

Although we name this model as mean-field game, it is indeed neither a game nor a Markov decision process (MDP) in the strict sense. This model is in between them. Similar to the MDP model, we have a single agent with Markovian dynamics that has an objective function to minimize. However, similar to the game model, this agent should also compete with the collective behaviour of other agents. We model this collective behaviour by an exogenous \textit{state-measure} $\mu \in \P(\sX)$\footnote{In classical mean-field game literature, the exogenous behaviour of the other agents is in general modelled by a state measure-flow $\{\mu_t\}$, $\mu_t \in \P(\sX)$ for all $t$, which means that total population behaviour is non-stationary. In this paper, we only consider the stationary case; that is, $\mu_t = \mu$ for all $t$. Establishing a learning algorithm for the non-stationary case is more challenging and is a future research direction.}. By law of large numbers, this measure $\mu$ should be consistent with the state distribution of this single agent when the agent applies its optimal policy. The precise mathematical description of the problem is given as follows.

If we fix a state-measure $\mu \in \P(\sX)$, which represents the collective behaviour of the other agents, the evolution of the state and action of a generic agent is governed by transition probability $p : \sX \times \sA \times \P(\sX) \to \P(\sX)$, policy $\pi: \sX \rightarrow \P(\sA)$, and initial distribution $\eta_0$ of the state; that is,
\begin{align}
x(0) &\sim \eta_0, \,\,\,
x(t) \sim p(\,\cdot\,|x(t-1),a(t-1),\mu), \text{ } t\geq1, \nonumber \\
a(t) &\sim \pi(\,\cdot\,|x(t)), \text{ } t\geq0. \nonumber
\end{align} 
For this model, a policy $\pi^{*} \in \Pi$ of a generic agent is optimal for $\mu$ if
\begin{align}
J_{\mu}(\pi^{*}) = \inf_{\pi \in \Pi} J_{\mu}(\pi), \nonumber
\end{align}
where 
\begin{align}
J_{\mu}(\pi) &= E^{\pi}\biggl[ \sum_{t=0}^{\infty} \beta^t c(x(t),a(t),\mu) \biggr] \nonumber \\
\intertext{or}
J_{\mu}(\pi) &= \limsup_{T\rightarrow\infty} \frac{1}{T} E^{\pi}\biggl[ \sum_{t=0}^{T-1} c(x(t),a(t),\mu) \biggr]. \nonumber
\end{align}
Here, the first cost function is the discounted-cost with discount factor $\beta \in (0,1)$ and the second cost function is the average-cost. The measurable function $c: \sX \times \sA \times \P(\sX) \rightarrow [0,\infty)$ is the one-stage cost function. With these definitions, to introduce the optimality criteria of the model, we need the following two set-valued mappings.  

The first set-valued mapping $\Psi : \P(\sX) \rightarrow 2^{\Pi}$ is defined as
$$\Psi(\mu) = \{\pi \in \Pi: \pi \text{ is optimal for }  \mu \text{ }\text{when} \text{ } \eta_0 = \mu\}.$$
Hence, $\Psi(\mu)$ is the set of optimal policies for a given state-measure $\mu$ when the initial distribution $\eta_0$ is equal to $\mu$ as well.

We define the second set-valued mapping $\Lambda : \Pi \to 2^{\P(\sX)}$ as follows: for any $\pi \in \Pi$, the state-measure $\mu_{\pi} \in \Lambda(\pi)$ if it satisfies the following fixed point equation: 
\begin{align}
\mu_{\pi}(\,\cdot\,) = \sum_{x \in \sX}\int_{\sA} p(\,\cdot\,|x,a,\mu_{\pi})  \, \pi(da|x) \, \mu_{\pi}(x). \nonumber
\end{align}
Note that if $\mu_{\pi} \in \Lambda(\pi)$ and $\eta_0 = \mu_{\pi}$, then $x(t) \sim \mu_{\pi}$ for all $t\geq 0$ under policy $\pi$. 
If there is no assumption on the transition probability $p : \sX \times \sA \times \P(\sX) \to \P(\sX)$, we may have $\Lambda(\pi) = \emptyset$ for some $\pi$. However, under Assumption~\ref{as1}, we always have  $\Lambda(\pi)$ non-empty, which will be proved in Lemma~\ref{auxlemma}.

We can now define the notion of equilibrium for mean-field games via the mappings $\Psi$, $\Lambda$ as follows.

\begin{definition}
A pair $(\pi_*,\mu_*) \in \Pi \times \P(\sX)$ is a \emph{mean-field equilibrium} if $\pi_* \in \Psi(\mu_*)$ and $\mu_* \in \Lambda(\pi_*)$; that is, $\pi_*$ is an optimal policy for $\mu_*$ and $\mu_*$ is the stationary distribution of the states under policy $\pi_*$ and initial distribution $\mu_*$. 
\end{definition}

In the literature, the existence of mean-field equilibria has been established for the discounted-cost in \cite{SaBaRaSIAM} and for the average-cost in \cite{Wie19,Sal20}. Our aim in this paper is to develop a learning algorithm for computing an approximate mean-field equilibrium in the model-free setting. To that end, we define the following relaxed version of mean-field equilibrium.

\begin{definition}
Let $(\pi_*,\mu_*) \in \Pi_d \times \P(\sX)$ be a \emph{mean-field equilibrium}. A policy $\pi_{\varepsilon} \in \Pi_d$ is an $\varepsilon$-mean-field equilibrium policy if 
$$
\sup_{x \in \sX} \|\pi_{\varepsilon}(x)-\pi_*(x)\| \leq \varepsilon.
$$ 
\end{definition}
Note that in above definition, we require that $\pi_*$ is deterministic. Indeed, this is the case under the assumptions stated below. Therefore, without loss of generality, we can place this restriction on $\pi_*$.

With this definition, our goal now is to learn an $\varepsilon$-mean-field equilibrium policy under the model-free set-up. To this end, we will impose certain assumptions on the components of the mean-field game model. Before doing this, we need to give some definitions. Let us define $M_{\tau}(\sX)$ as the set of real-valued functions on $\sX$ bounded by $\|c\|_{\infty}/(1-\tau)$. Here, $\tau = \beta$ if the objective function is discounted-cost and $\tau=\beta^{\av}$ (see Assumption~\ref{av-as2}) if the objective function is average-cost. Let $F: \sX \times M_{\tau}(\sX) \times \P(\sX) \times \sA \rightarrow \R$ be given by
$$
F: \sX \times M_{\tau}(\sX) \times \P(\sX) \times \sA \ni (x,v,\mu,a) \mapsto 
c(x,a,\mu) + \xi \sum_{y \in \sX} v(y) \, p(y|x,a,\mu) \in \R,
$$
where $\xi=\beta$ if the objective function is discounted-cost and $\xi=1$ if the objective function is average-cost. We may now state our assumptions.

\begin{assumption}
\label{as1}
\begin{itemize}
\item [ ]
\item [(a)] The one-stage cost function $c$ satisfies the following Lipschitz bound:  
\begin{align}
|c(x,a,\mu) - c(\hat{x},\hat{a},\hat{\mu})| &\leq L_1 \, \left(d_{\sX}(x,\hat{x})+ \|a-\hat{a}\| + \|\mu-\hat{\mu}\|_1 \right), 
\end{align}
for every $ x,\hat{x} \in  \sX$, $a, \hat{a} \in \sA$, and  $\mu, \hat{\mu} \in \P(\sX)$.
\item [(b)] The stochastic kernel $p(\,\cdot\,|x,a,\mu)$ satisfies the following Lipschitz bound:
\begin{align}
&\|p(\cdot|x,a,\mu) - p(\cdot|\hat{x},\hat{a},\hat{\mu})\|_1 \leq K_1 \, \left(d_{\sX}(x,\hat{x})+\|a -\hat{a}\| + \|\mu-\hat{\mu}\|_1\right),
\end{align}
for every $ x,\hat{x} \in  \sX$, $a, \hat{a} \in \sA$, and  $\mu, \hat{\mu} \in \P(\sX)$.
\item [(c)] There exists $\alpha > 0$ such that for any $a \in \sA$ and $\delta >0$, we have 
$$
m\left( B(a,\delta) \cap \sA \right) \geq \min\left\{ \alpha \, m(B(a,\delta)), m(\sA) \right\}.
$$  
\item [(d)] For any $x \in \sX$, $v \in M_{\tau}(\sX)$, and $\mu \in \P(\sX)$, $F(x,v,\mu,\cdot)$ is $\rho$-strongly convex. Moreover, the gradient $\nabla F(x,v,\mu,a): \sX \times M_{\tau}(\sX) \times \P(\sX) \times \sA \rightarrow \R^d$ of $F$ with respect to $a$ satisfies the following Lipschitz bound:
\begin{align}
\sup_{a \in \sA} \|\nabla F(x,v,\mu,a) - \nabla F(\hat{x},\hat{v},\hat{\mu},a)\| \leq K_F \, \left(d_{\sX}(x,\hat{x})+\|v-\hat{v}\|_{\infty} + \|\mu-\hat{\mu}\|_1\right), \nonumber  \nonumber
\end{align}
for every $ x,\hat{x} \in  \sX$, $v, \hat{v} \in M_{\tau}(\sX)$, and  $\mu, \hat{\mu} \in \P(\sX)$.
\end{itemize}
\end{assumption}

Let us motivate these assumptions. First, assumptions (a) and (b) are standard conditions in stochastic control theory to obtain a rate of convergence bound for learning algorithms. Assumption (c) is needed to bound the $l_{\infty}$-norm  of Lipschitz continuous functions on $\sA$ with their $l_2$-norm. Assumption (d) is imposed to guarantee Lipschitz continuity of the optimal policy with respect to the corresponding state-measure. Indeed, this condition is equivalent to the standard assumption that guarantees Lipschitz continuity, with respect to unknown parameters, of the optimal solutions of the convex optimization problems  \cite[Theorem 4.51]{BoSh00}.

\begin{example}\label{example1}
Let us consider the industry dynamics model, introduced in \cite{WeBeRo05,WeBeRo08}, where the state $x(t) \in \sX$ of the system gives the quality level of the firm, and the state lives in the finite set $\sX = \{0,\ldots,m\}$. Given the mean-field term $\mu$, the state of the system evolves in the following form:
$$
x(t+1) = \min\{x(t)+h(a(t),\mu,w(t)),m\},
$$
where $a(t) \in \sA = [0,K]$ is the action, which denotes the investment of the agent to increase its quality, $h:\sA\times\P(\sX)\times\sW \rightarrow \sX$, and $w(t) \in \sW$ is the independent noise\footnote{The state dynamics of the model in \cite{WeBeRo05,WeBeRo08} does not depend on the mean-field term $\mu$. For generality, we assume that there is such a dependence in our example.}. In this model, the problem is to maximize the discounted reward, which is equivalent to minimizing the negative of the discounted reward. Therefore, in the minimization formulation, one-stage cost function has the following form:
$$
c(x,a,\mu) = c_2(a) - c_1(x,\mu),
$$
where $c_1(x,\mu)$ is the profit of the firm and $c_2(a)$ is the cost of the investment. For this model, Assumption~\ref{as1}-(c) is true with $\alpha=1$. To have Assumption~\ref{as1}-(d), we need to assume that: (i) $c_2(a)$ is differentiable and $\rho$-strongly convex (this is true if, for instance, $\displaystyle d^2c_2/da^2 \geq \rho$), (ii) $\rP[h(a,\mu,w)=l]$ is convex in $a$ and continuously differentiable in $(a,\mu)$, for all $l \in \sX$. Indeed, for any $x$, $v$, and $\mu$, the function $F(x,v,\mu,a)$ has the following form:
\begin{align}
&F(x,v,\mu,a) = c_2(a)-c_1(x,\mu) + \xi \sum_{y \in \sX} v(y) \, \rP[\min\{x+h(a,\mu,w),m\} = y] \nonumber \\
&= c_2(a)-c_1(x,\mu) + \xi \left[ \sum_{x \leq y <m} v(y) \, \rP[h(a,\mu,w) = y-x] + v(m) \, \rP[h(a,\mu,w) \geq m-x] \right] \nonumber \\
&= c_2(a)-c_1(x,\mu) + \xi \left[\sum_{0 \leq y <m} v(y) \, \rP[h(a,\mu,w) = y-x] + \sum_{l=m-x}^{m} v(m) \, \rP[h(a,\mu,w) = l] \right], \nonumber 
\end{align}
where the last equality is true since $\rP[h(a,\mu,w) = y-x]=0$ if $y<x$. Since $c_2(a)$ is $\rho$-strongly convex and $\rP[h(a,\mu,w)=l]$ is convex in $a$, the function $F(x,v,\mu,a)$ is $\rho$-strongly convex in $a$. Moreover, for every $ x,\hat{x} \in  \sX$, $v, \hat{v} \in M_{\tau}(\sX)$, and  $\mu, \hat{\mu} \in \P(\sX)$, we have \begin{align}
&\sup_{a \in \sA} |\nabla F(x,v,\mu,a) - \nabla F(\hat{x},\hat{v},\hat{\mu},a)| \leq \sup_{a \in \sA} |\nabla F(x,v,\mu,a) - \nabla F(\hat{x},v,\mu,a)|
\nonumber  \\
&\phantom{xxxxx}+\sup_{a \in \sA} |\nabla F(\hat{x},v,\mu,a) - \nabla F(\hat{x},\hat{v},\mu,a)| + \sup_{a \in \sA} |\nabla F(\hat{x},\hat{v},\mu,a) - \nabla F(\hat{x},\hat{v},\hat{\mu},a)| \label{aux-eq0}.
\end{align}
To bound the terms in the sum (\ref{aux-eq0}), let us define the following constants:
\begin{align}
\Theta_1 &\coloneqq \sup_{\mu,a,l} |\nabla_a \rP[h(a,\mu,w)=l]| \nonumber \\
\Theta_2 &\coloneqq \sup_{a,\mu} \sum_{l=0}^{m-1} \left|\nabla_a \rP[h(a,\mu,w)=l] - \nabla_a \rP[h(a,\mu,w)=l+1]\right| \nonumber \\
\Theta_3 &\coloneqq \sup_a \sum_{l=0}^{m} \sup_{\mu} \|\nabla_{a,\mu} \rP[h(a,\mu,w)=l]\| \nonumber \\
\Theta_4 &\coloneqq \sup_{a,\mu} \sum_{l=0}^{m} |\nabla_{a} \rP[h(a,\mu,w)=l]|, \nonumber
\end{align}
where $\nabla_a \rP[h(a,\mu,w)=l]$ and $\nabla_{a,\mu} \rP[h(a,\mu,w)=l]$ are the gradients of $\rP[h(a,\mu,w)=l]$ with respect to $a$ and $(a,\mu)$, respectively. Since $\rP[h(a,\mu,w)=l]$ is continuously differentiable with respect to $(a,\mu)$, and the sets $\sA$ and $\P(\sX)$ are compact, the constants above are well-defined.

Without loss of generality, suppose $x \leq \hat{x}$. Considering the first term in the sum (\ref{aux-eq0}), for all $a \in \sA$ we have 
\begin{align}
&|\nabla F(x,v,\mu,a) - \nabla F(\hat{x},v,\mu,a)| \nonumber \\
&\leq \xi \bigg|\sum_{0 \leq y <m} v(y) \, \nabla_a \rP[h(a,\mu,w) = y-x] + \sum_{m-x\leq l \leq m} v(m) \, \nabla_a \rP[h(a,\mu,w) = l] \nonumber \\
&\phantom{xxxxxxxxxx}\sum_{0 \leq y <m} v(y) \, \nabla_a \rP[h(a,\mu,w) = y-\hat{x}] + \sum_{m-\hat{x}\leq l \leq m} v(m) \, \nabla_a \rP[h(a,\mu,w) = l] \bigg| \nonumber \\
&\leq \xi \bigg| \sum_{0 \leq y <m} v(y) \, \sum_{l=y-\hat{x}}^{y-x-1} \left( \nabla_a \rP[h(a,\mu,w)=l] - \nabla_a \rP[h(a,\mu,w)=l+1] \right)\bigg| \nonumber \\
&\phantom{xxxxxxxxxx} + \xi \, \bigg| \sum_{m-\hat{x} \leq y <m-x} v(m) \, \nabla_a \rP[h(a,\mu,w) = l] \bigg| \nonumber \\
&\leq \xi \, \frac{\|c\|_{\infty}}{1-\tau} \, (\Theta_2+\Theta_1) \, |x-\hat{x}|. \label{aux-eq1}
\end{align}
For the second term in the sum (\ref{aux-eq0}), for all $a \in \sA$ we have 
\begin{align}
&|\nabla F(\hat{x},v,\mu,a) - \nabla F(\hat{x},\hat{v},\mu,a)| \nonumber \\
&\leq \xi \bigg|\sum_{\hat{x} \leq y <m} v(y) \, \nabla_a \rP[h(a,\mu,w) = y-\hat{x}] + \sum_{m-\hat{x}\leq l \leq m} v(m) \, \nabla_a \rP[h(a,\mu,w) = l] \nonumber \\
&\phantom{xxxxxxxxxx}\sum_{\hat{x} \leq y <m} \hat{v}(y) \, \nabla_a \rP[h(a,\mu,w) = y-\hat{x}] + \sum_{m-\hat{x}\leq l \leq m} \hat{v}(m) \, \nabla_a \rP[h(a,\mu,w) = l] \bigg| \nonumber \\
&\leq \xi \, \Theta_4 \, \|v-\hat{v}\|_{\infty}. \label{aux-eq2}
\end{align}
Finally, for the third term in the sum (\ref{aux-eq0}), for all $a \in \sA$, we have
\begin{align}
&|\nabla F(\hat{x},\hat{v},\mu,a) - \nabla F(\hat{x},\hat{v},\hat{\mu},a)| \nonumber \\
&\leq \xi \bigg|\sum_{\hat{x} \leq y <m} \hat{v}(y) \, \nabla_a \rP[h(a,\mu,w) = y-\hat{x}] + \sum_{m-\hat{x}\leq l \leq m} \hat{v}(m) \, \nabla_a \rP[h(a,\mu,w) = l] \nonumber \\
&\phantom{xxxxxxxxxx}\sum_{\hat{x} \leq y <m} \hat{v}(y) \, \nabla_a \rP[h(a,\hat{\mu},w) = y-\hat{x}] + \sum_{m-\hat{x}\leq l \leq m} \hat{v}(m) \, \nabla_a \rP[h(a,\hat{\mu},w) = l] \bigg| \nonumber \\
&\leq \xi \, \frac{\|c\|_{\infty}}{1-\tau}\, \bigg|\sum_{0 \leq l \leq m} \bigg(\nabla_a \rP[h(a,\mu,w) = l] -\nabla_a \rP[h(a,\hat{\mu},w) = l] \bigg) \bigg|. \label{aux-eq3}
\end{align}
By the mean-value theorem, there exists $\tilde{\mu}$ such that $$
\nabla_a \rP[h(a,\mu,w) = l] -\nabla_a \rP[h(a,\hat{\mu},w) = l] = \nabla_{a,\mu} \rP[h(a,\tilde{\mu},w) = l] \cdot (\mu-\hat{\mu})
$$
Hence, (\ref{aux-eq3}) can be bounded from above as follows:
\begin{align}
(\ref{aux-eq3}) \leq \xi \, \frac{\|c\|_{\infty}}{1-\tau}\, \Theta_3 \, \|\mu-\hat{\mu}\|_1. \label{aux-eq4}  
\end{align}
Bringing together the upper bounds in (\ref{aux-eq1}), (\ref{aux-eq2}), and (\ref{aux-eq4}), we get
\begin{align}
&\sup_{a \in \sA} |\nabla F(x,v,\mu,a) - \nabla F(\hat{x},\hat{v},\hat{\mu},a)| \nonumber \\
&\leq \max \bigg\{\xi \, \frac{\|c\|_{\infty \, m}}{1-\tau} \, (\Theta_2+\Theta_1) ,\xi \, \Theta_4,\xi \, \frac{\|c\|_{\infty}}{1-\tau}\, \Theta_3\bigg\} \left(d_{\sX}(x,\hat{x})+\|v-\hat{v}\|_{\infty} + \|\mu-\hat{\mu}\|_1\right) \label{aux-eq5} \\
&\eqqcolon K_F \left(d_{\sX}(x,\hat{x})+\|v-\hat{v}\|_{\infty} + \|\mu-\hat{\mu}\|_1\right). \nonumber
\end{align}
since $|x-\hat{x}| \leq m \, d_{\sX}(x,\hat{x})$. Hence, Assumption~\ref{as1}-(d) is true for this model under the conditions (i) and (ii). Note that the bound in (\ref{aux-eq5}) is fairly crude. By using further properties of the transition probability and the one-stage cost function in addition to conditions (i) and (ii) in specific examples, one can significantly improve this bound. Moreover, instead of the $l_1$-norm on the set of distributions on the state space $\sX$, if we use Wasserstein distance of order 1, it is also possible to improve the bound in (\ref{aux-eq5}) by altering the related analysis according to this distance. 
\end{example}

In this paper, we first consider learning in discounted-cost MFGs. Then, we turn our attention to the average-cost case. In the sequel, we first introduce a mean-field equilibrium (MFE) operator for discounted-cost, which can be used to compute mean-field equilibrium when the model is known. We prove that this operator is a contraction. Then, under model-free setting, we approximate this MFE operator with a random one and establish a learning algorithm.
Using this learning algorithm, we obtain $\varepsilon$-mean-field equilibrium policy with high confidence. To obtain the last result, it is essential that MFE operator is contraction. After we complete the analysis for discounted-cost, we study average-cost setting by applying the same strategy.

Before we move on to the next section, for completeness, let us prove the following result. 

\begin{lemma}\label{auxlemma}
Under Assumption~\ref{as1}, for any $\pi$, the set $\Lambda(\pi)$ is non-empty.  
\end{lemma}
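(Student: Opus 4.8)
The plan is to realize $\Lambda(\pi)$ as the fixed-point set of a self-map on the simplex $\P(\sX)$ and to invoke Brouwer's fixed-point theorem. Concretely, for a fixed $\pi \in \Pi$, define $T_{\pi} : \P(\sX) \to \P(\sX)$ by
$$
T_{\pi}(\mu)(\,\cdot\,) = \sum_{x \in \sX} \int_{\sA} p(\,\cdot\,|x,a,\mu) \, \pi(da|x) \, \mu(x).
$$
First I would check that $T_{\pi}$ indeed maps $\P(\sX)$ into itself: for each fixed $\mu$, the right-hand side is a convex combination of the probability measures $p(\,\cdot\,|x,a,\mu)$ with weights $\pi(da|x)\,\mu(x)$ that integrate to $1$ (since $\sum_{x}\mu(x)=1$ and $\int_{\sA}\pi(da|x)=1$), hence $T_{\pi}(\mu) \in \P(\sX)$. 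By construction, $\mu \in \Lambda(\pi)$ if and only if $\mu = T_{\pi}(\mu)$, so it suffices to produce a fixed point of $T_{\pi}$.

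Since $\sX$ is finite, $\P(\sX)$ is a nonempty, compact, convex subset of $\R^{|\sX|}$, so Brouwer's theorem applies as soon as $T_{\pi}$ is continuous. To establish continuity — in fact Lipschitz continuity in the $l_1$-norm — I would add and subtract the mixed term $\sum_{x}\int_{\sA} p(\,\cdot\,|x,a,\hat{\mu})\,\pi(da|x)\,\mu(x)$ and split $\|T_{\pi}(\mu)-T_{\pi}(\hat{\mu})\|_1$ into two pieces. The first piece is controlled by the Lipschitz bound on the kernel in Assumption~\ref{as1}(b), yielding a factor $K_1\|\mu-\hat{\mu}\|_1$ after using $\sum_{x}\mu(x)=1$; the second piece uses $\|p(\,\cdot\,|x,a,\hat{\mu})\|_1=1$ and reduces to $\|\mu-\hat{\mu}\|_1$. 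Combining these gives
$$
\|T_{\pi}(\mu)-T_{\pi}(\hat{\mu})\|_1 \leq (K_1+1)\,\|\mu-\hat{\mu}\|_1 ,
$$
so $T_{\pi}$ is continuous. Brouwer's fixed-point theorem then produces $\mu_{\pi} \in \P(\sX)$ with $T_{\pi}(\mu_{\pi})=\mu_{\pi}$, i.e.\ $\mu_{\pi} \in \Lambda(\pi)$, which is therefore non-empty.

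I do not expect a genuine obstacle in this lemma; the points requiring care are minor. First, one must confirm that the inner integral $\int_{\sA} p(\,\cdot\,|x,a,\mu)\,\pi(da|x)$ is well-defined and continuous in $\mu$, which is guaranteed by compactness of $\sA$ together with the boundedness and Lipschitz continuity of $p$. Second, the Lipschitz constant $K_1+1$ may exceed $1$, so $T_{\pi}$ need not be a contraction and Banach's fixed-point theorem is unavailable; this is precisely why I would appeal to Brouwer rather than to a contraction argument. Note that Assumptions~\ref{as1}(a), (c), and (d) play no role here — only the continuity of the kernel in its measure argument is needed.
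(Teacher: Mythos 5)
Your proof is correct, but it takes a genuinely different and in fact more elementary route than the paper. The paper works with the set-valued map $L_{\pi}:\P(\sX)\rightarrow 2^{\P(\sX)}$ that sends $\mu$ to the set of invariant distributions of the frozen kernel $\int_{\sA}p(\,\cdot\,|x,a,\mu)\,\pi(da|x)$; it invokes the Krylov--Bogolubov theorem for nonemptiness of each $L_{\pi}(\mu)$, checks convexity and the closed-graph property (the latter via continuous convergence of the integrands under Assumption~\ref{as1}(b)), and concludes with Kakutani's fixed-point theorem. You instead apply the one-step map $T_{\pi}(\mu)=\sum_{x}\int_{\sA}p(\,\cdot\,|x,a,\mu)\,\pi(da|x)\,\mu(x)$, whose fixed points satisfy exactly the defining equation of $\Lambda(\pi)$, and you get existence from Brouwer's theorem once $(K_1+1)$-Lipschitz continuity is verified -- which your two-term decomposition does correctly. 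On a finite state space the two fixed-point equations coincide ($\mu$ invariant for its own frozen kernel), so both arguments are valid; yours avoids the set-valued machinery entirely and needs only Assumption~\ref{as1}(b), while the paper's formulation is the one that generalizes directly to Polish state spaces (where the one-step map need not have a fixed point without further compactness/tightness structure and one passes to invariant-measure sets and Kakutani--Fan--Glicksberg). Your closing remarks about why Banach is unavailable and which assumptions are actually used are accurate.
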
  

\begin{proof}
Recall that for any $\pi \in \Pi$, the state-measure $\mu_{\pi} \in \Lambda(\pi)$ if it satisfies the following fixed-point equation: 
\begin{align}
\mu_{\pi}(\,\cdot\,) = \sum_{x \in \sX} \int_{\sA} p(\,\cdot\,|x,a,\mu_{\pi})  \, \pi(da|x) \, \mu_{\pi}(x). \label{ccc}
\end{align}
Let us define the set-valued mapping $L_{\pi}:\P(\sX)\rightarrow 2^{\P(\sX)}$ as follows: given $\mu \in \P(\sX)$, a probability measure $\hat{\mu} \in L_{\pi}(\mu)$ if it is an invariant distribution of the transition probability $\int_{\sA} p(\,\cdot\,|x,a,\mu) \, \pi(da|x)$; that is
$$
\hat{\mu}(\,\cdot\,) = \sum_{x \in \sX} \int_{\sA} p(\,\cdot\,|x,a,\mu) \,\pi(da|x) \, \hat{\mu}(x). 
$$ 
Note that the transition probability $\int_{\sA} p(\,\cdot\,|x,a,\mu) \,\pi(da|x)$ is Feller continuous, and since $\sX$ is finite, the sequence of $n$-step transition probabilities are tight for any $x \in \sX$.
Therefore, we can apply Krylov-Bogoliubov theorem \cite[Theorem 4.17]{Hai06}, and obtain that  $L_{\pi}(\mu)$ is non-empty for each $\mu \in \P(\sX)$. Moreover, $L_{\pi}(\mu)$ is also convex for each $\mu \in \P(\sX)$. If we can prove that $L_{\pi}$ has a closed graph, by Kakutani's fixed point theorem \cite[Corollary 17.55]{AlBo06}, we can conclude that $L_{\pi}$ has a fixed point $\hat{\mu}$; that is, $\hat{\mu}$ satisfies (\ref{ccc}). Hence, $\hat{\mu} \in \Lambda(\pi)$. 

To this end, let $(\mu_n,\hat{\mu}_n) \rightarrow (\mu,\hat{\mu})$, where $\hat{\mu}_n \in L_{\pi}(\mu_n) $ for each $n$. Note that $L_{\pi}$ has a closed graph if $\hat{\mu} \in L_{\pi}(\mu)$. For each $n$, we have 
$$
\hat{\mu}_n(y) = \sum_{x \in \sX} \int_{\sA} p(y|x,a,\mu_n) \,\pi(da|x) \, \hat{\mu}_n(x), \,\, \forall y \in \sX.
$$
For all $y \in \sX$, the left part of the above equation converges to $\mu(y)$ since $\hat{\mu}_n \rightarrow \hat{\mu}$ and the right part of the same equation converges to 
$$
\sum_{x \in \sX} \int_{\sA} p(y|x,a,\mu) \,\pi(da|x) \, \hat{\mu}(x)
$$
since $\hat{\mu}_n \rightarrow \hat{\mu}$ and $\int_{\sA} p(y|\,\cdot\,,a,\mu_n) \,\pi(da|\,\cdot\,)$ converges to $\int_{\sA} p(y|\,\cdot\,,a,\mu) \,\pi(da|\,\cdot\,)$ continuously by Assumption~\ref{as1} \cite[Theorem 3.5]{Lan81}\footnote{Suppose $g$, $g_n$ ($n\geq1$) are uniformly bounded measurable functions on metric space $\sE$. The sequence of functions $g_n$ is said to converge to $g$ continuously if $\lim_{n\rightarrow\infty}g_n(e_n)=g(e)$ for any $e_n\rightarrow e$ where $e \in \sE$. In this case, \cite[Theorem 3.5]{Lan81} states that if $\mu_n \rightarrow \mu$ weakly, then $\int_{\sE}g_n(e) \, \mu_n(de) \rightarrow \int_{\sE} g(e) \, \mu(de)$. If $\sE$ is finite with discrete metric, then weak convergence of probability measures on $\sE$ is equivalent to $l_1$-convergence.}. Hence, we have 
$$
\hat{\mu}(\,\cdot\,) = \sum_{x \in \sX} \int_{\sA} p(\,\cdot\,|x,a,\mu) \,\pi(da|x) \, \hat{\mu}(x).
$$
In other words, $\hat{\mu} \in L_{\pi}(\mu)$, and so, $L_{\pi}$ has a closed graph. This completes the proof. 
\end{proof}

\section{Mean-field equilibrium operator for discounted-cost}\label{known-model}

In this section, we introduce a mean-field equilibrium
(MFE) operator for discounted-cost, whose fixed point is a mean-field equilibrium. We prove that this operator is a contraction. Using this result, we then establish the convergence of the learning algorithm that gives approximate mean-field equilibrium policy. To that end, in addition to Assumption~\ref{as1}, we impose the assumption below. But, before that, let us define the constants:

\begin{align}
c_{\bf m} \coloneqq \|c\|_{\infty}, \,\, Q_{\bf m} \coloneqq \frac{c_{\bf m}}{1-\beta}, \,\, Q_{\Lip} \coloneqq \frac{L_1}{1-\beta K_1/2}. \label{constants1}
\end{align}

\begin{assumption}\label{as2}
We assume that 
$$\hspace{-25pt}\frac{3K_1}{2} \left(1+\frac{K_F}{\rho}\right)+\frac{K_1K_F Q_{\Lip}}{\rho(1-\beta)} < 1,$$
where $Q_{\Lip} > 0$.
\end{assumption}

This assumption is used to ensure that the MFE operator is a contraction, which is crucial to establish the error analysis of the learning algorithm. Note that using Banach fixed point theorem, one can also compute the mean-field equilibrium by applying the MFE operator recursively to obtain successive approximations (i.e., Picard iteration). However,  even if it is restrictive, it is not possible to prove convergence of the learning algorithm without a contraction condition. In addition, imposing a contraction condition is a common method in learning mean-field games (see \cite{GuHuXuZh19,FuYaChWa19}). 

Note that, given any state-measure $\mu$, the value function $J_{\mu}$ of policy $\pi$ with initial state $x$ is given by
$$
J_{\mu}(\pi,x) \coloneqq E^{\pi}\biggl[ \sum_{t=0}^{\infty} \beta^t c(x(t),a(t),\mu) \, \bigg| \, x(0) = x \biggr]. 
$$
Then, the optimal value function is defined as $J_{\mu}^*(x) \coloneqq \inf_{\pi \in \Pi} J_{\mu}(\pi,x)$ for all $x \in \sX$. Using $J_{\mu}^*$, we can characterize the set of optimal policies $\Psi(\mu)$ for $\mu$ as follows. Firstly, $J_{\mu}^*(x)$ is the unique fixed point of the Bellman optimality operator $T_{\mu}$, which is a $\beta$-contraction with respect to the $\|\cdot\|_{\infty}$-norm:
$$
J_{\mu}^*(x) = \min_{a \in \sA} \bigg[c(x,a,\mu) + \beta \sum_{y \in \sX} J_{\mu}^*(y) \, p(y|x,a,\mu) \bigg] \eqqcolon T_{\mu}J_{\mu}^*(x).
$$
Additionally, if $f^*: \sX \rightarrow \sA$ attains the minimum in the equation above for all $x \in \sX$ as follows
\begin{align}
&\min_{a \in \sA} \bigg[c(x,a,\mu) + \beta \sum_{y \in \sX} J_{\mu}^*(y) \, p(y|x,a,\mu) \bigg] = c(x,f^*(x),\mu) + \beta \sum_{y \in \sX} J_{\mu}^*(y) \, p(y|x,f^*(x),\mu), \nonumber
\end{align}
then the policy $\pi^*(a|x) = \delta_{f^*(x)}(a) \in \Pi_d$ is optimal for $\mu$ and for any initial distribution $\eta_0$. We refer the reader to \cite[Chapter 4]{HeLa96} and \cite[Chapter 8]{HeLa99} for these classical results in MDP theory.

We can also obtain a similar characterization by using the optimal $Q$-function instead of the optimal value function $J_{\mu}^*$. Indeed, we define the optimal $Q$-function as
$$
Q_{\mu}^*(x,a) = c(x,a,\mu) + \beta \sum_{y \in \sX} J_{\mu}^*(y) \, p(y|x,a,\mu). 
$$
Note that $Q_{\mu,\min}^*(x) \coloneqq \min_{a \in \sA} Q_{\mu}^*(x,a) = J_{\mu}^*(x)$ for all $x \in \sX$, and so, we have
$$
Q_{\mu}^*(x,a) = c(x,a,\mu) + \beta \sum_{y \in \sX} Q_{\mu,\min}^*(y) \, p(y|x,a,\mu)  \eqqcolon H_{\mu}Q_{\mu}^*(x,a), 
$$
where $H_{\mu}$ is the Bellman optimality operator for $Q$-functions. It is straightforward to prove that $H_{\mu}$ is a $\|\cdot\|_{\infty}$-contraction with modulus $\beta$ and the unique fixed point of $H_{\mu}$ is $Q_{\mu}^*$. Hence, we can develop a $Q$-iteration algorithm to compute $Q_{\mu}^*$, and the policy $\pi^*(a|x) = \delta_{f^*(x)}(a) \in \Pi_d$ is optimal for $\mu$ and for any initial distribution $\eta_0$, if $Q_{\mu}^*(x,f^*(x)) = Q_{\mu,\min}^*(x)$ for all $x \in \sX$. The advantage of $Q$-iteration algorithm is that one can adapt this algorithm to the model-free setting via $Q$-learning.

Let us recall the following fact about $l_1$ norm on the set probability distributions on finite sets \cite[p. 141]{Geo11}. Suppose that there exists a real valued function $g$ on a finite set $\sE$. Then, for any pair of probability distributions $\mu,\nu$ on $\sE$, we have 
\begin{align}
\left|\sum_{e} g(e) \, \mu(e) - \sum_{e} g(e) \, \nu(e) \right| \leq \frac{\spn(g)}{2} \, \|\mu-\nu\|_{1}, \label{tv-bound}
\end{align}
where $\spn(g) \coloneqq \sup_{e \in \sE} g(e) - \inf_{e \in \sE} g(e)$
is the span-seminorm. Using this result, we can prove the following fact about optimal value functions.

\begin{lemma}\label{lip-value}
For any $\mu  \in \P(\sX)$, the optimal value function $Q^{*}_{\mu,\min}$ is $Q_{\Lip}$-Lipschitz continuous; that is, 
$$
|Q^{*}_{\mu,\min}(x)-Q^{*}_{\mu,\min}(y)| \leq Q_{\Lip} \, d_{\sX}(x,y).
$$
\end{lemma}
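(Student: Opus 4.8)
The plan is to exploit the Bellman fixed-point equation for $J^{*}_{\mu}=Q^{*}_{\mu,\min}$ together with the span bound (\ref{tv-bound}) and Assumption~\ref{as1}, and to close a self-referential estimate that is made possible by the discrete metric on $\sX$. The crucial observation is that, since $d_{\sX}(x,y)=1_{\{x\neq y\}}$, the Lipschitz constant of any $g$ on $\sX$ coincides with its span: $\sup_{x\neq y}|g(x)-g(y)|/d_{\sX}(x,y)=\spn(g)$. Hence it suffices to bound $\spn(J^{*}_{\mu})$, and this span is automatically finite because $0\le J^{*}_{\mu}\le Q_{\bf m}$.

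First I would fix $x,\hat{x}\in\sX$ and let $a^{*}\in\sA$ attain the minimum defining $J^{*}_{\mu}(\hat{x})$ (a minimizer exists by compactness of $\sA$ and continuity). Using $a^{*}$ as a suboptimal action at $x$ in the Bellman equation gives
$$
J^{*}_{\mu}(x)-J^{*}_{\mu}(\hat{x}) \le \bigl[c(x,a^{*},\mu)-c(\hat{x},a^{*},\mu)\bigr] + \beta\Bigl[\textstyle\sum_{y} J^{*}_{\mu}(y)\,p(y|x,a^{*},\mu) - \sum_{y} J^{*}_{\mu}(y)\,p(y|\hat{x},a^{*},\mu)\Bigr].
$$
I would then bound the first bracket by $L_{1}\,d_{\sX}(x,\hat{x})$ via Assumption~\ref{as1}(a), and the second bracket by $\tfrac{\spn(J^{*}_{\mu})}{2}\,\|p(\cdot|x,a^{*},\mu)-p(\cdot|\hat{x},a^{*},\mu)\|_{1}\le \tfrac{\spn(J^{*}_{\mu})}{2}K_{1}\,d_{\sX}(x,\hat{x})$ via (\ref{tv-bound}) and Assumption~\ref{as1}(b). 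Interchanging the roles of $x$ and $\hat{x}$ yields the same bound on the absolute value, so
$$
|J^{*}_{\mu}(x)-J^{*}_{\mu}(\hat{x})| \le \Bigl(L_{1}+\tfrac{\beta K_{1}}{2}\spn(J^{*}_{\mu})\Bigr)\,d_{\sX}(x,\hat{x}).
$$
Taking the supremum over $x\neq\hat{x}$ and using that this supremum equals $\spn(J^{*}_{\mu})$, I obtain $\spn(J^{*}_{\mu})\le L_{1}+\tfrac{\beta K_{1}}{2}\spn(J^{*}_{\mu})$, i.e. $\spn(J^{*}_{\mu})\,(1-\tfrac{\beta K_{1}}{2})\le L_{1}$. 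Since $Q_{\Lip}=L_{1}/(1-\beta K_{1}/2)>0$ forces $1-\beta K_{1}/2>0$, I can divide to conclude $\spn(J^{*}_{\mu})\le Q_{\Lip}$, which is exactly the claimed $Q_{\Lip}$-Lipschitz bound.

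The main obstacle is the self-referential structure: the span of $J^{*}_{\mu}$ is bounded in terms of itself, and two things make this legitimate and sharp. First, I must use the span bound (\ref{tv-bound}) rather than the cruder estimate $\|J^{*}_{\mu}\|_{\infty}\|p-\hat{p}\|_{1}$; the latter would only yield a constant involving $Q_{\bf m}$, not the self-improving constant $Q_{\Lip}$. Second, the identification of the Lipschitz constant with the span is specific to the discrete metric, and it is precisely what lets me absorb the transition term and solve the inequality for $\spn(J^{*}_{\mu})$. If one prefers to avoid invoking a priori finiteness of the span, an equivalent route is to run value iteration $J_{0}\equiv 0$, $J_{n+1}=T_{\mu}J_{n}$ and prove inductively that each $J_{n}$ is $Q_{\Lip}$-Lipschitz: the map $L\mapsto L_{1}+\tfrac{\beta K_{1}}{2}L$ is increasing with fixed point $Q_{\Lip}$, so the Lipschitz constants remain below $Q_{\Lip}$ and the bound passes to the uniform limit $J^{*}_{\mu}$.
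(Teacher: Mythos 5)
Your main argument is correct, and it is a genuinely different route from the paper's. The paper proves that the Bellman operator $T_{\mu}$ maps a $K$-Lipschitz function to an $(L_1+\beta K K_1/2)$-Lipschitz function, then runs value iteration from a $K$-Lipschitz seed with $K<L_1$ and tracks the increasing sequence of Lipschitz constants $K_n \uparrow Q_{\Lip}$, passing the bound to the uniform limit. You instead work directly with the fixed point: you test the Bellman minimum at $x$ with the minimizer $a^{*}$ for $\hat{x}$, apply (\ref{tv-bound}) and Assumption~\ref{as1}(a),(b) to get $|J^{*}_{\mu}(x)-J^{*}_{\mu}(\hat{x})|\leq (L_1+\tfrac{\beta K_1}{2}\spn(J^{*}_{\mu}))\,d_{\sX}(x,\hat{x})$, and then close the self-referential inequality $\spn(J^{*}_{\mu})\leq L_1+\tfrac{\beta K_1}{2}\spn(J^{*}_{\mu})$ using the a priori finiteness $\spn(J^{*}_{\mu})\leq Q_{\bf m}$ and the positivity of $1-\beta K_1/2$ (which, as you note, is forced by $Q_{\Lip}>0$ in Assumption~\ref{as2}, and is also implicitly needed for the paper's geometric series). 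Your approach is shorter and avoids the induction and limit step, but it leans on two features specific to this setting: the identification of the optimal Lipschitz constant with the span, which is particular to the discrete metric on a finite $\sX$, and the boundedness of $J^{*}_{\mu}$. The paper's iterative argument is the one that would survive on a general metric state space, and your fallback via value iteration with the increasing map $L\mapsto L_1+\tfrac{\beta K_1}{2}L$ is essentially the paper's proof.
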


\proof
Fix any $\mu \in \P(\sX)$. Let $u: \sX \rightarrow \R$ be $K$-Lipschitz continuous for some $0<K<L_1$. Then $g = u/K$ is $1$-Lipschitz continuous and therefore, for all $a \in \sA$ and $z,y \in \sX$ we have
\begin{align}
\biggl | \sum_{x} u(x) p(x|z,a,\mu) - \sum_{x} u(x) p(x|y,a,\mu) \biggr | 
&= K \biggl | \sum_{x} g(x) p(x|z,a,\mu) - \sum_{x} g(x) p(x|y,a,\mu) \biggr | \nonumber \\
&\leq \frac{K}{2} \, \|p(\,\cdot\,|z,a,\mu) - p(\,\cdot\,|y,a,\mu)\|_1 \nonumber \quad \text{(by (\ref{tv-bound}))}\\
&\leq \frac{KK_1}{2} \, d_{\sX}(z,y), \quad \text{(by Assumption~\ref{as1})} \nonumber
\end{align}
since $\sup_x g(x) - \inf_x g(x) \leq 1$. Hence, the contractive operator $T_{\mu}$ maps a $K$-Lipschitz function $u$ to a $L_1+\beta K K_1/2$-Lipschitz function, indeed, for all $z,y \in \sX$
\begin{align}
| T_{\mu}u(z) &- T_{\mu}u(y) | \nonumber \\
&\leq \sup_{a} \biggl \{ |c(z,a,\mu) - c(y,a,\mu)| + \beta \biggl | \sum_{x} u(x) p(x|z,a,\mu) - \sum_{x} u(x) p(x|y,a,\mu) \biggr | \biggr \}\nonumber \\
&\leq L_1 d_{\sX}(z,y) + \beta \frac{K K_1}{2} d_{\sX}(z,y) = \biggl(L_1 + \beta \frac{K K_1}{2}\biggr) d_{\sX}(z,y). \nonumber
\end{align}
Now we apply $T_{\mu}$ recursively to obtain the sequence $\{T_{\mu}^n u\}$ by letting $T_{\mu}^n u = T_{\mu} (T_{\mu}^{n-1} u )$, which converges to the optimal value function $Q^{*}_{\mu,\min}$ by Banach fixed point theorem. Clearly, by mathematical induction, we have for all $n\geq1$, $T_{\mu}^n u$ is $K_n$-Lipschitz continuous, where $K_n = L_1 \sum_{i=0}^{n-1} (\beta K_1/2)^i + K (\beta K_1/2)^n$. Since $K < L_1$, then $K_n \leq K_{n+1}$ for all $n$ and therefore, $K_n \uparrow Q_{\Lip}$. Hence, $T_{\mu}^n u$ is $Q_{\Lip}$-Lipschitz continuous for all $n$, and therefore, $Q^{*}_{\mu,\min}$ is also  $ Q_{\Lip}$-Lipschitz continuous.
\endproof

Before introducing the mean-field equilibrium (MFE) operator, we first define the set ${\cal C}$ on which the $Q$-functions live. We let ${\cal C}$ consist of all $Q$-functions $Q:\sX \times \sA \rightarrow \R$ such that $Q(x,\cdot)$ is $Q_{\Lip}$-Lipschitz and $\rho$-strongly convex for every $x \in \sX$ with $\|Q\|_{\infty} \leq Q_{\bf m}$, and the gradient $\nabla Q$ of $Q$ with respect to $a$ satisfies the bound 
$$
\sup_{a \in \sA} \|\nabla Q(x,a) - \nabla Q(\hat{x},a)\| \leq K_F \, d_{\sX}(x,\hat{x}),
$$
for every $x, \hat{x} \in  \sX$.

The MFE operator defined as a composition of the operators $H_1$ and $H_2$, where $H_1: \P(\sX) \rightarrow {\cal C}$ is defined as $H_1(\mu) = Q_{\mu}^*$ (optimal $Q$-function for $\mu$), and $H_2: \P(\sX) \times {\cal C} \rightarrow \P(\sX)$ is defined as
\begin{align}
H_2(\mu,Q)(\cdot) \coloneqq \sum_{x \in \sX} p(\cdot|x,f_{Q}(x),\mu) \, \mu(x),\label{h2op}
\end{align}
where $f_{Q}(\,\cdot\,) \coloneqq \argmin_{a\in\sA} Q(\,\cdot\,,a)$ is the unique minimizer  of  $Q \in {\cal C}$ by $\rho$-strong convexity.
Here, $H_1$ computes the optimal $Q$-function given the current state-measure, and $H_2$ computes the next state-measure given the current state-measure and the corresponding optimal $Q$-function. Therefore, the MFE operator is given by
\begin{align}
H: \P(\sX) \ni \mu \mapsto H_2\left(\mu,H_1(\mu)\right) \in \P(\sX).\label{mfeop}
\end{align} 
In this section, our goal is to prove that $H$ is a contraction.

\begin{remark}\label{remh2}
Note that we can alternatively define the operator $H_2$ as a mapping from $\C$ to $\P(\sX)$ as follows: $H_2(Q) = \mu$ if $\mu$ satisfies the following fixed point equation:
$$
\mu(\cdot) \coloneqq \sum_{x \in \sX} p(\cdot|x,f_{Q}(x),\mu) \, \mu(x).
$$
Notice that $H_2$ is a well-defined operator since such a state-measure $\mu \in \P(\sX)$ exists for any $Q$ by Lemma~\ref{auxlemma}. Hence, we may define the MFE operator as $H(\mu) \coloneqq H_2(H_1(\mu))$. In this case, one can prove that this operator has the same contraction coefficient as the original MFE operator given in (\ref{mfeop}). However, although the original $H_2 $ operator in (\ref{h2op}) can effortlessly be approximated via computing the empirical estimate of $p(\cdot|x,f_{Q}(x),\mu)$ for each $x \in \sX$, which is possible since $|\sX| < \infty$, approximating the new $H_2$ operator is quite costly. Indeed, we need to compute a fixed point of some equation in this case. Therefore, there is no advantage to replace original $H_2$ with the new one.   
\end{remark}

In the following lemma, we prove that $H_1$ is Lipschitz continuous, which will later used to prove that $H$ operator is a contraction.

\begin{lemma}\label{n-lemma1}
The mapping $H_1$ is Lipschitz continuous on $\P(\sX)$ with the Lipschitz constant $K_{H_1}$, where 
$$\displaystyle K_{H_1} \coloneqq \frac{Q_{\Lip}}{1-\beta}$$.
\end{lemma}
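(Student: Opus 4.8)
The plan is to exploit the fixed-point characterization $Q_{\mu}^* = H_{\mu} Q_{\mu}^*$ of the optimal $Q$-function and to compare two such fixed points directly. Fix $\mu,\hat\mu \in \P(\sX)$ and an arbitrary pair $(x,a) \in \sX \times \sA$. Subtracting the defining equation
$$
Q_{\mu}^*(x,a) = c(x,a,\mu) + \beta \sum_{y \in \sX} Q_{\mu,\min}^*(y)\, p(y|x,a,\mu)
$$
from the corresponding one for $\hat\mu$, I would split the difference $Q_{\mu}^*(x,a) - Q_{\hat\mu}^*(x,a)$ into three terms: (i) the cost difference $c(x,a,\mu) - c(x,a,\hat\mu)$; (ii) the term $\beta\sum_{y} \bigl[Q_{\mu,\min}^*(y) - Q_{\hat\mu,\min}^*(y)\bigr]\,p(y|x,a,\mu)$, in which the kernel is frozen at $\mu$; and (iii) the term $\beta\sum_{y} Q_{\hat\mu,\min}^*(y)\,\bigl[p(y|x,a,\mu) - p(y|x,a,\hat\mu)\bigr]$, in which the value function is frozen at $\hat\mu$.

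Each term is then bounded in turn. Term (i) is at most $L_1 \|\mu - \hat\mu\|_1$ by Assumption~\ref{as1}(a). For term (ii), since $p(\,\cdot\,|x,a,\mu)$ is a probability measure, it is bounded by $\beta\,\|Q_{\mu,\min}^* - Q_{\hat\mu,\min}^*\|_{\infty}$; moreover, because the pointwise minimum is non-expansive in the sup-norm, $\|Q_{\mu,\min}^* - Q_{\hat\mu,\min}^*\|_{\infty} \leq \|Q_{\mu}^* - Q_{\hat\mu}^*\|_{\infty}$, so term (ii) contributes the self-referential quantity $\beta\,\|Q_{\mu}^* - Q_{\hat\mu}^*\|_{\infty}$. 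For term (iii) I would invoke Lemma~\ref{lip-value}: the function $Q_{\hat\mu,\min}^*$ is $Q_{\Lip}$-Lipschitz on $\sX$ with the discrete metric, so its span-seminorm satisfies $\spn(Q_{\hat\mu,\min}^*) \leq Q_{\Lip}$; inequality (\ref{tv-bound}) together with the kernel Lipschitz bound of Assumption~\ref{as1}(b) then yields the bound $\beta\,\frac{Q_{\Lip}}{2}\,K_1\,\|\mu - \hat\mu\|_1$.

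Collecting the three bounds and taking the supremum over $(x,a)$ gives
$$
\|Q_{\mu}^* - Q_{\hat\mu}^*\|_{\infty} \leq \left(L_1 + \beta\,\frac{Q_{\Lip} K_1}{2}\right)\|\mu-\hat\mu\|_1 + \beta\,\|Q_{\mu}^* - Q_{\hat\mu}^*\|_{\infty}.
$$
Moving the last term to the left-hand side and dividing by $1-\beta$, the result follows once one recognizes the algebraic identity $L_1 + \beta\, Q_{\Lip} K_1/2 = Q_{\Lip}$, which is immediate from the definition $Q_{\Lip} = L_1/(1-\beta K_1/2)$ in (\ref{constants1}). This simplification is the only real subtlety: the constant $Q_{\Lip}$ has been defined precisely so that the numerator collapses to $Q_{\Lip}$, producing the clean constant $K_{H_1} = Q_{\Lip}/(1-\beta)$.

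The main obstacle, such as it is, is organizational rather than technical. One must handle the self-referential appearance of $\|Q_{\mu}^* - Q_{\hat\mu}^*\|_{\infty}$ on both sides, which is legitimate because this quantity is finite (both functions are bounded by $Q_{\bf m}$), and one must correctly bound the span-seminorm of the optimal value function via Lemma~\ref{lip-value} in order to apply (\ref{tv-bound}). Both steps are routine once the three-term decomposition is in place, and the freezing-one-argument-at-a-time strategy is exactly what makes the cost, value, and kernel perturbations separable.
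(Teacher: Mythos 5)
Your proposal is correct and follows essentially the same route as the paper's proof: the same three-term decomposition of $Q_{\mu}^{*}-Q_{\hat\mu}^{*}$ via the fixed-point equation, the same use of (\ref{tv-bound}) with Lemma~\ref{lip-value} and Assumption~\ref{as1}, and the same self-referential inequality resolved by moving $\beta\|Q_{\mu}^{*}-Q_{\hat\mu}^{*}\|_{\infty}$ to the left; your closing observation that $L_1+\beta K_1 Q_{\Lip}/2=Q_{\Lip}$ just makes explicit the algebra the paper leaves implicit. The only piece you omit is the paper's preliminary check that $H_1$ is well defined (i.e., maps into $\mathcal{C}$), which is ancillary to the Lipschitz estimate itself.
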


\proof
First of all, $H_1$ is well-defined; that is, it maps any $\mu \in \P(\sX)$ into ${\cal C}$. Indeed, recall that $Q_{\mu}^*$ is the fixed point of the contractive operator $H_{\mu}$:
\begin{align}
Q_{\mu}^*(x,a) = c(x,a,\mu) + \beta \sum_{y \in \sX} Q_{\mu,\min}^*(y) \, p(y|x,a,\mu). \nonumber
\end{align}
Then, using Assumption~\ref{as1}-(a),(b),(d), it is straightforward to prove that $H_1(\mu) \in {\cal C}$. Indeed, the only non-trivial fact is the $Q_{\Lip}$-Lipschitz continuity of $ Q_{\mu}^{*}$ on $\sX \times \sA$. To this end, let $(x,a), (\hat{x},\hat{a}) \in \sX \times \sA$ be arbitrary. Then, 
\begin{align}
|Q_{\mu}^{*}(x,a)&-Q_{\mu}^{*}(\hat{x},\hat{a})| \nonumber \\
&= |c(x,a,\mu) + \beta \sum_{y} Q_{\mu,\min}^{*}(y) p(y|x,a,\mu) - c(\hat{x},\hat{a},\mu) - \beta \sum_{y} Q_{\mu,\min}^{*}(y) p(y|\hat{x},\hat{a},\mu)| \nonumber \\
&\leq L_1 (d_{\sX}(x,\hat{x})+\|a-\hat{a}\|)  + \beta \frac{K_1 Q_{\Lip}}{2} \, (d_{\sX}(x,\hat{x})+\|a-\hat{a}\|),\nonumber   
\end{align}
where the last inequality follows from (\ref{tv-bound}) and Lemma~\ref{lip-value}. Hence, $Q_{\mu}^{*}$ is $Q_{\Lip}$-Lipschitz continuous.

Now, we prove that $H_1$ is $K_{H_1}$-Lipschitz on $\P(\sX)$. For any $\mu,\hmu \in \P(\sX)$, we have 
\begin{align}
&\|H_1(\mu) - H_1(\hmu)\|_{\infty} = \|Q_{\mu}^*-Q_{\hmu}^*\|_{\infty} \nonumber \\
&= \sup_{x,a} \left| c(x,a,\mu) + \beta \sum_{y} \hspace{-3pt} Q_{\mu,\min}^*(y) p(y|x,a,\mu) - c(x,a,\hmu) - \beta \sum_{y} \hspace{-3pt} Q_{\hmu,\min}^*(y)  p(y|x,a,\hmu) \right| \nonumber \\
&\leq L_1 \, \|\mu-\hmu\|_1 + \beta \left| \sum_{y} Q_{\mu,\min}^*(y) p(y|x,a,\mu) - \sum_{y} Q_{\mu,\min}^*(y) p(y|x,a,\hmu) \right| \nonumber \\
&\phantom{xxxxxxxxxxxxxxxxxxx}+ \beta \left| \sum_{y} Q_{\mu,\min}^*(y) p(y|x,a,\hmu) - \sum_{y} Q_{\hmu,\min}^*(y) p(y|x,a,\hmu) \right| \nonumber \\
&\leq L_1 \, \|\mu-\hmu\|_1 + \beta \, \frac{K_1Q_{\Lip}}{2} \, \|\mu-\hmu\|_1 + \beta \, \|Q_{\mu}^*-Q_{\hmu}^*\|_{\infty}, \nonumber
\end{align}
where the last inequality follows from (\ref{tv-bound}) and Lemma~\ref{lip-value}. 
\endproof

Now, using Lemma~\ref{n-lemma1}, we can prove that $H$ is a contraction on $\P(\sX)$.

\begin{proposition}\label{MFE-con}
The mapping $H$ is a contraction with contraction  on $\P(\sX)$ constant $K_{H}$, where
$$
K_H \coloneqq \frac{3K_1}{2} \left(1+\frac{K_F}{\rho}\right)+\frac{K_1K_FK_{H_1}}{\rho}. 
$$
\end{proposition}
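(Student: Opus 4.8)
The plan is to estimate $\|H(\mu)-H(\hmu)\|_1$ directly by passing through hybrid measures that alter one ingredient of $H_2(\mu,H_1(\mu))(\cdot)=\sum_{x\in\sX}p(\cdot|x,f_{Q^*_\mu}(x),\mu)\,\mu(x)$ at a time. Writing $f_\mu:=f_{Q^*_\mu}=\argmin_{a\in\sA}Q^*_\mu(\cdot,a)$, I would introduce
\begin{align*}
A(\cdot) &= \textstyle\sum_{x\in\sX} p(\cdot|x,f_{\mu}(x),\mu)\,\mu(x)=H(\mu)(\cdot), &
B(\cdot) &= \textstyle\sum_{x\in\sX} p(\cdot|x,f_{\mu}(x),\mu)\,\hmu(x),\\
C(\cdot) &= \textstyle\sum_{x\in\sX} p(\cdot|x,f_{\mu}(x),\hmu)\,\hmu(x), &
D(\cdot) &= \textstyle\sum_{x\in\sX} p(\cdot|x,f_{\hmu}(x),\hmu)\,\hmu(x)=H(\hmu)(\cdot),
\end{align*}
so that $H(\mu)-H(\hmu)=(A-B)+(B-C)+(C-D)$, where $B-C$ only changes the measure argument of $p$, $C-D$ only changes the selected action, and $A-B$ only changes the weighting measure. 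The goal is to bound the three $l_1$-distances by multiples of $\|\mu-\hmu\|_1$ whose coefficients sum to exactly $K_H$; since $K_{H_1}=Q_{\Lip}/(1-\beta)$, the resulting constant $K_H<1$ is precisely Assumption~\ref{as2}.

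The analysis of the action-dependent pieces rests on two stability estimates for the minimizer map, both consequences of the Lipschitz stability of minimizers of $\rho$-strongly convex functions (the standard convex-analysis result cited as \cite[Theorem 4.51]{BoSh00}). First, for fixed $\mu$, since $Q^*_\mu\in\C$ has $K_F$-Lipschitz gradient in $x$, one gets $\|f_\mu(x)-f_\mu(x')\|\le (K_F/\rho)\,d_{\sX}(x,x')$. Second, for fixed $x$, writing $Q^*_\mu(x,\cdot)=F(x,Q^*_{\mu,\min},\mu,\cdot)$ and applying Assumption~\ref{as1}-(d) together with the non-expansiveness of the minimum and Lemma~\ref{n-lemma1}, one gets
$$
\|f_\mu(x)-f_{\hmu}(x)\|\le \tfrac1\rho\sup_{a}\|\nabla F(x,Q^*_{\mu,\min},\mu,a)-\nabla F(x,Q^*_{\hmu,\min},\hmu,a)\|\le \tfrac{K_F}{\rho}\big(\|Q^*_{\mu,\min}-Q^*_{\hmu,\min}\|_\infty+\|\mu-\hmu\|_1\big)\le \tfrac{K_F(1+K_{H_1})}{\rho}\|\mu-\hmu\|_1 .
$$
Granting these, the two easier terms follow from Assumption~\ref{as1}-(b): $\|B-C\|_1\le K_1\|\mu-\hmu\|_1$ (Lipschitz dependence of $p$ on its measure argument, after pulling the $\hmu(x)$-weights out by the triangle inequality), and $\|C-D\|_1\le K_1\sum_x\hmu(x)\|f_\mu(x)-f_{\hmu}(x)\|\le \tfrac{K_1K_F(1+K_{H_1})}{\rho}\|\mu-\hmu\|_1$.

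I expect the term $A-B$ to be the main obstacle: a naive triangle inequality bounds it by $\sum_x|\mu(x)-\hmu(x)|\sum_y p(y|x,f_\mu(x),\mu)=\|\mu-\hmu\|_1$, i.e. coefficient $1$, which destroys contractivity. The remedy is to exploit that the signed measure $\mu-\hmu$ has zero total mass. Splitting it into its positive and negative parts, each of mass $\tfrac12\|\mu-\hmu\|_1$, and normalizing them to probability measures, the $l_1$-distance $\|A-B\|_1$ reduces to $\tfrac12\|\mu-\hmu\|_1$ times a convex combination of pairwise kernel differences $\|p(\cdot|x,f_\mu(x),\mu)-p(\cdot|x',f_\mu(x'),\mu)\|_1$; by Assumption~\ref{as1}-(b) (with $d_{\sX}(x,x')=1$ for $x\ne x'$) and the first minimizer estimate, each such difference is at most $K_1\big(1+K_F/\rho\big)$. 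This yields the crucial extra factor $\tfrac12$, giving $\|A-B\|_1\le \tfrac{K_1}{2}\big(1+\tfrac{K_F}{\rho}\big)\|\mu-\hmu\|_1$. Adding the three coefficients,
$$
\tfrac{K_1}{2}\Big(1+\tfrac{K_F}{\rho}\Big)+K_1+\tfrac{K_1K_F}{\rho}\big(1+K_{H_1}\big)=\tfrac{3K_1}{2}\Big(1+\tfrac{K_F}{\rho}\Big)+\tfrac{K_1K_FK_{H_1}}{\rho}=K_H,
$$
completing the proof that $H$ is a contraction with constant $K_H$.
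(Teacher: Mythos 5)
Your proposal is correct and follows essentially the same route as the paper: a telescoping decomposition of the pushforward measures, Lipschitz stability of the strongly convex minimizer $f_{Q^*_\mu}$ in both $x$ and $\mu$ (via Lemma~\ref{n-lemma1}), and the zero-mass/contraction-coefficient argument (your Hahn-decomposition step is exactly the content of Lemma~\ref{KoRa08}, which the paper cites) to secure the factor $\tfrac12$ on the weight-change term. The only cosmetic difference is that you split the kernel perturbation into two hybrid steps ($B-C$ and $C-D$) where the paper handles the action and measure arguments of $p$ in a single application of Assumption~\ref{as1}-(b), and you invoke the abstract stability theorem for strongly convex minimizers where the paper writes out the first-order optimality argument; the coefficients sum to the same $K_H$.
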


\proof
Fix any $\mu, \hmu \in \P(\sX)$. Note that, since $Q_{\mu}^* = H_{\mu} Q_{\mu}^*$, the mapping $f_{Q_{\mu}^*}(x)$ is the unique minimizer of $F(x,Q_{\mu,\min}^*,\mu,\cdot)$. Similarly, $f_{Q_{\hmu}^*}(y)$ is the unique minimizer of $F(y,Q_{\hmu,\min}^*,\hmu,\cdot)$. For any $x,y \in \sX$, we define 
$a = f_{Q_{\mu}^*}(x)$ and $r = f_{Q_{\hmu}^*}(y) - f_{Q_{\mu}^*}(x).$ As $a$ is the unique minimizer of a strongly convex function $F(x,Q_{\mu,\min}^*,\mu,\cdot)$, by the first-order optimality condition, we have 
$$
\nabla \, F\left(x,Q_{\mu,\min}^*,\mu,a\right) \cdot r \geq 0. 
$$
For $a+r$ and $F(y,Q_{\hmu,\min}^*,\hmu,\cdot)$, by first-order optimality condition, we also have   
$$
\nabla \, F\left(y,Q_{\hmu,\min}^*,\hmu,a+r\right) \cdot r \leq 0. 
$$
Therefore, by $\rho$-strong convexity of $F$ in Assumption~\ref{as1}-(d) and \cite[Lemma 3.2]{HaRa19}, we have
\begin{align}
-\nabla F(y,Q_{\hmu,\min}^*,\hmu,a) \cdot r &\geq -\nabla F(y,Q_{\hmu,\min}^*,\hmu,a) \cdot r + \nabla F(y,Q_{\hmu,\min}^*,\hmu,a+r) \cdot r \nonumber \\
&\geq \rho \, \|r\|^2. \label{st1}
\end{align}
Similarly, by Assumption~\ref{as1}-(d), we also have
\begin{align}
-\nabla F(y,Q_{\hmu,\min}^*,\hmu,a) \cdot r &\leq -\nabla F(y,Q_{\hmu,\min}^*,\hmu,a) \cdot r + \nabla F(x,Q_{\mu,\min}^*,\mu,a) \cdot r \nonumber \\
&\leq \|r\| \, \|\nabla F(x,Q_{\mu,\min}^*,\mu,a)-\nabla F(y,Q_{\hmu,\min}^*,\hmu,a)\| \nonumber \\
&\leq K_F \, \|r\| \left(d_{\sX}(x,y) + \|Q_{\mu,\min}^*-Q_{\hmu,\min}^*\|_{\infty} + \|\mu-\hmu\|_1\right) \nonumber \\
&\leq K_F \, \|r\| \left(d_{\sX}(x,y) + \|Q_{\mu}^*-Q_{\hmu}^*\|_{\infty} + \|\mu-\hmu\|_1\right). \label{st2}
\end{align}
Combining (\ref{st1}) and (\ref{st2}) yields 
$$ \rho \, \|r\|^2 \leq K_F \, \|r\|  \left(d_{\sX}(x,y) + \|Q_{\mu}^*-Q_{\hmu}^*\|_{\infty} + \|\mu-\hmu\|_1\right). $$
Since $r = f_{Q_{\hmu}^*}(y) - f_{Q_{\mu}^*}(x),$ we obtain
\begin{align}
\|f_{Q_{\hmu}^*}(y) - f_{Q_{\mu}^*}(x)\| &\leq \frac{K_F}{\rho} \,\left(d_{\sX}(x,y) + \|Q_{\mu}^*-Q_{\hmu}^*\|_{\infty} + \|\mu-\hmu\|_1 \right) \nonumber \\
&= \frac{K_F}{\rho} \,\left(d_{\sX}(x,y) + \|H_1(\mu)-H_1(\hmu)\|_{\infty} + \|\mu-\hmu\|_1 \right) \nonumber \\
&\leq \frac{K_F}{\rho} \,\left(d_{\sX}(x,y) + K_{H_1} \|\mu-\hmu\|_1 + \|\mu-\hmu\|_1 \right). \label{perturbation}
\end{align}
Therefore, $f_{Q_{\mu}^*}(x)$ is Lipschitz continuous with respect to $(x,\mu)$.

Now, using (\ref{perturbation}), we have
\begin{align}
\|H_2(\mu,H_1(\mu)) - H_2(\hmu,H_1(\hmu))\|_1
&= \sum_{y} \, \bigg| \sum_{x} \, p(y|x,f_{Q_{\mu}^*}(x),\mu),\mu) \, \mu(x) \nonumber \\
&\phantom{xxxxxxxxxxxxxx}- \sum_{x} \, p(y|x,f_{Q_{\hmu}^*}(x),\hmu) \, \hmu(x) \biggr| \nonumber \\
&\leq \sum_{y} \, \bigg| \sum_{x} \, p(y|x,f_{Q_{\mu}^*}(x),\mu) \, \mu(x) \nonumber \\
&\phantom{xxxxxxxxxxxxxx}- \sum_{x} \, p(y|x,f_{Q_{\hmu}^*}(x),\hmu) \, \mu(x) \biggr| \nonumber \\
&+ \sum_{y} \, \bigg| \sum_{x} \, p(y|x,f_{Q_{\hmu}^*}(x),\hmu) \, \mu(x) \nonumber \\
&\phantom{xxxxxxxxxxxxxx}- \sum_{x} \, p(y|x,f_{Q_{\hmu}^*}(x),\hmu) \, \hmu(x) \biggr| \nonumber \\
&\overset{(I)}{\leq} \sum_{x} \left\|p(\cdot|x,f_{Q_{\mu}^*}(x),\mu)-p(\cdot|x,f_{Q_{\hmu}^*}(x),\hmu) \right\|_1 \mu(x) \nonumber \\
&\phantom{xxxxxxxxxxxxxxxxxxx}+ \frac{K_1}{2} \left( 1 + \frac{K_F}{\rho}\right) \, \|\mu-\hmu\|_1 \nonumber \\
&\leq K_1 \left( \|f_{Q_{\mu}^*}(x)-f_{Q_{\hmu}^*}(x)\| + \|\mu-\hmu\|_1 \right) \nonumber \\
&\phantom{xxxxxxxxxxxxxxxxxxx}+ \frac{K_1}{2} \left( 1 + \frac{K_F}{\rho}\right) \, \|\mu-\hmu\|_1 \nonumber \\
&\leq  K_H \, \|\mu-\hmu\|_1.
\end{align}
Note that (\ref{perturbation}) and Assumption~\ref{as1}-(b) lead to
$$
\|p(\cdot|x,f_{Q_{\hmu}^*}(x),\hmu)-p(\cdot|y,f_{Q_{\hmu}^*}(y),\hmu)\|_1 \leq K_1 \left( 1 + \frac{K_F}{\rho}\right).
$$
Hence, (I) follows from Lemma~\ref{KoRa08}. This completes the proof. 
\endproof

\begin{remark}
Note that in the MDP theory, it is normally not required to establish the Lipschitz continuity of the optimal policy. Indeed, the Lipschitz continuity of the optimal value function is in general needed, which can be established straightforward as in Lemma~\ref{lip-value}. However, in mean-field games, since the optimal policy $f_{Q^*_{\mu}}$ directly affects the behaviour of the next state-measure through
$$
	H_2(\mu,Q_{\mu}^{*})(\cdot) = \sum_x p(\cdot|x,f_{Q^*_{\mu}},\mu) \, \mu(x),
$$
one must also establish the Lipschitz continuity of the optimal policy $f_{Q^*_{\mu}}$ in this case. This is indeed the key point in the proof of Proposition~\ref{MFE-con}.

In \cite{AnKaSa20-ar}, we establish the Lipschitz continuity of the optimal policy by introducing a regularization term into a one-stage cost function. This significantly relaxes conditions on the system components in Assumption~\ref{as1}-(d) and simplifies the analysis. However, the regularization term adds some bias to the equilibrium solution (i.e., it in general favours randomized policies over deterministic policies) and also causes the regularized stationary mean-field equilibrium to deviate from true stationary mean-field equilibrium as a result of the additional regularization term in the one-stage cost function. 
\end{remark}

Under Assumption~\ref{as1} and Assumption~\ref{as2}, $H$ is a contraction mapping. Therefore, by the Banach fixed point theorem, $H$ has a unique fixed point. Let $\mu_* \in \P(\sX)$ be this unique fixed point and let $Q_{\mu_*}^{*} = H_1(\mu_*)$. Define the policy $\pi_*(\,\cdot\,|x) = \delta_{f_{Q^*_{\mu_*}}(x)}(\,\cdot\,)$. Then, one can prove that the pair $(\pi_*,\mu_*)$ is a mean-field equilibrium. Indeed, note that $(\mu_*,Q_{\mu_*}^*)$ satisfies the following equations
\begin{align}
\mu_{*}(\cdot) &= \sum_{x \in \sX} p(\cdot|x,a,\mu_*) \, \pi_*(a|x) \, \mu_{*}(x), \label{opt2} \\
Q_{\mu_*}^*(x,a) &= c(x,a,\mu_*) + \beta \sum_{y \in \sX} Q_{\mu_*,\min}^*(y) \, p(y|x,a,\mu_*). \label{opt1} 
\end{align}
Here, (\ref{opt1}) implies that $\pi_* \in \Psi(\mu_*)$ and (\ref{opt2}) implies that $\mu_* \in \Lambda(\pi_*)$. Hence, $(\pi_*,\mu_*)$ is a mean-field equilibrium. Hence, we can compute this mean-field equilibrium via applying $H$ recursively starting from arbitrary state-measure. This indeed leads to a value iteration algorithm for computing mean-field equilibrium. However, if the model is unknown; that is the transition probability $p$ and the one-stage cost function $c$ are not available to the decision maker, we replace $H$ with a random operator and establish a learning algorithm via this random operator. To prove the convergence of this learning algorithm, the contraction property of $H$ is crucial, as stated before.

\section{Finite-Agent Game for Discounted-cost}\label{sec2}

The mean-field game model defined in Section~\ref{sec3} is indeed the infinite-population version of the finite-agent game model with mean-field interactions, which will be described in this section. In this model, there are $N$-agents and for every time step $t \in \{0,1,2,\ldots\}$ and every agent $i \in \{1,2,\ldots,N\}$, $x^N_i(t) \in \sX$ and $a^N_i(t) \in \sA$ denote the state and the action of Agent~$i$ at time $t$, respectively. Moreover, 
\begin{align}
e_t^{(N)}(\,\cdot\,) \coloneqq \frac{1}{N} \sum_{i=1}^N \delta_{x_i^N(t)}(\,\cdot\,) \in \P(\sX) \nonumber
\end{align}
denote the empirical distribution of the agents' states at time $t$. For each $t \ge 0$, next states $(x^N_1(t+1),\ldots,x^N_N(t+1))$ of agents have the following conditional distribution given current states $(x^N_1(t),\ldots,x^N_N(t))$ and actions $(a^N_1(t),\ldots,a^N_N(t))$:
\begin{align}
&(x^N_1(t+1),\ldots,x^N_N(t+1)) \sim \bigotimes^N_{i=1} p\big(\,\cdot\,\big|x^N_i(t),a^N_i(t),e^{(N)}_t\big). \nonumber 
\end{align}
A \emph{policy} $\pi$ for a generic agent in this model is a conditional distribution on $\sA$ given $\sX$; that is, agents can only use their individual states to design their actions. The set of all policies for Agent~$i$ is denoted by $\Pi_i$. Hence, under $\pi \in \Pi_i$, the conditional distribution of the action $a_i^N(t)$ of Agent~$i$ at time $t$ given its state $x_i^N(t)$ is 
$$
a_i^N(t) \sim \pi(\,\cdot\,|x_i^N(t)).
$$ 
Therefore, the information structure of the problem is decentralized. 
The initial states $\{x^N_i(0)\}_{i=1}^N$ are independent and identically distributed according to the initial distribution $\eta_0$.

We let ${\boldsymbol \pi}^{(N)} \coloneqq (\pi^1,\ldots,\pi^N)$, $\pi^i \in \Pi_i$, denote an $N$-tuple of policies for all the agents in the game. Under such an $N$-tuple of policies, for Agent~$i$, the discounted-cost is given by
\begin{align}
J_i^{(N)}({\boldsymbol \pi}^{(N)}) &= E^{{\boldsymbol \pi}^{(N)}}\biggl[\sum_{t=0}^{\infty}\beta^{t}c(x_{i}^N(t),a_{i}^N(t),e^{(N)}_t)\biggr]. \nonumber 
\end{align}
Since agents are coupled through their dynamics and cost functions via the empirical distribution of the states, the problem is indeed a classical game problem. Therefore, the standard notion of optimality is a player-by-player one. 

\begin{definition}
An $N$-tuple of policies ${\boldsymbol \pi}^{(N*)}= (\pi^{1*},\ldots,\pi^{N*})$ constitutes a \emph{Nash equilibrium} if
\begin{align}
J_i^{(N)}({\boldsymbol \pi}^{(N*)}) = \inf_{\pi^i \in \Pi_i} J_i^{(N)}({\boldsymbol \pi}^{(N*)}_{-i},\pi^i) \nonumber
\end{align}
for each $i=1,\ldots,N$, where ${\boldsymbol \pi}^{(N*)}_{-i} \coloneqq (\pi^{j*})_{j\neq i}$.
\end{definition}

We note that obtaining a Nash equilibria is in general prohibitive for finite-agent game model due to the decentralized nature of the information structure of the problem and the large number of agents (see \cite[pp. 4259]{SaBaRaSIAM}). Therefore, it is of interest to seek an approximate Nash equilibrium, whose definition is given below.

\begin{definition}
An $N$-tuple of policies ${\boldsymbol \pi}^{(N*)}= (\pi^{1*},\ldots,\pi^{N*})$ constitutes a \emph{$\delta$-Nash equilibrium} if
\begin{align}
J_i^{(N)}({\boldsymbol \pi}^{(N*)}) \leq \inf_{\pi^i \in \Pi_i} J_i^{(N)}({\boldsymbol \pi}^{(N*)}_{-i},\pi^i) + \delta \nonumber
\end{align}
for each $i=1,\ldots,N$, where ${\boldsymbol \pi}^{(N*)}_{-i} \coloneqq (\pi^{j*})_{j\neq i}$.
\end{definition}

In finite-agent mean-field game model, if the number of agents is large enough, one can obtain a $\delta$-Nash equilibrium by studying the infinite-population limit $N\rightarrow\infty$ of the game (i.e., mean-field game). In the infinite-agent case, the empirical distribution of the states can be modelled as an exogenous state-measure, which should be consistent with the distribution of a generic agent by the law of large numbers (i.e., mean-field equilibrium); that is, a generic agent should solve the mean-field game that is introduced in the preceding section. Then, it is possible to prove that if each agent in the finite-agent $N$ game problem adopts the policy in mean-field equilibrium, the resulting $N$-tuple of policies will be an approximate Nash equilibrium for all sufficiently large $N$. This was indeed proved in \cite{SaBaRaSIAM}.  

%

Note that it is also possible to prove that if each agent in the finite-agent game model adopts the $\varepsilon$-mean-field equilibrium policy, the resulting policy will be also an approximate Nash equilibrium for all sufficiently large $N$-agent game models. Indeed, this is the statement of the next theorem.

But before, let us define the following constants:
\begin{align}
&C_1 \coloneqq \left(\frac{3 \, K_1}{2}  + \frac{K_1 \, K_{F}}{2\rho} \right), \, C_2 \coloneqq \left(L_1+\frac{\beta K_1 Q_{\Lip}}{2}\right) \frac{K_1}{1-C_1}, \,
C_3 \coloneqq \left(L_1+\frac{\beta K_1 Q_{\Lip}}{2}\right).\nonumber
\end{align}
Note that by Assumption~\ref{as2}, the constant $C_1$ is strictly less than $1$.

\begin{theorem}\label{old-main-cor}
Let $\pi_{\varepsilon}$ be an $\varepsilon$-mean-field equilibrium policy for the mean-field equilibrium $(\pi_*,\mu_*) \in \Pi_d \times \P(\sX)$ given by the unique fixed point of the MFE operator $H$. Let $\eta_0 \in \Lambda(\pi_{\varepsilon})$. Then, for any $\delta>0$, there exists a positive integer $N(\delta)$ such that, for each $N\geq N(\delta)$, the $N$-tuple of policies ${\boldsymbol \pi}^{(N)} = \{\pi_{\varepsilon},\pi_{\varepsilon},\ldots,\pi_{\varepsilon}\}$ is a $(\delta+\tau\varepsilon)$-Nash equilibrium for the game with $N$ agents, where $\displaystyle \tau \coloneqq \frac{2C_2+C_3}{1-\beta}$.
\end{theorem}

\proof
By an abuse of notation, we denote the deterministic mappings from $\sX$ to $\sA$ that induce policies $\pi_*$ and $\pi_{\varepsilon}$ as $\pi_*$ and $\pi_{\varepsilon}$ as well, respectively. 
Note that in view of (\ref{perturbation}), one can prove that 
\begin{align}
\|\pi_*(x)-\pi_*(y)\| \leq \frac{K_F}{\rho} \, d_{\sX}(x,y).\label{aks}
\end{align}
Let $\mu_{\varepsilon} \in \Lambda(\pi_{\varepsilon})$. Then, we have
\begin{align}
\|\mu_{\varepsilon}-\mu_*\|_1 
&= \sum_{y} \, \bigg| \sum_{x} \, p(y|x,\pi_{\varepsilon}(x),\mu_{\varepsilon}) \, \mu_{\varepsilon}(x) - \sum_{x} \, p(y|x,\pi_*(x),\mu_*) \, \mu_*(x) \biggr| \nonumber \\
&\leq \sum_{y} \, \bigg| \sum_{x} \, p(y|x,\pi_{\varepsilon}(x),\mu_{\varepsilon}) \, \mu_{\varepsilon}(x) - \sum_{x} \, p(y|x,\pi_*(x),\mu_*) \, \mu_{\varepsilon}(x) \biggr| \nonumber \\
&+ \sum_{y} \, \bigg| \sum_{x} \,p(y|x,\pi_*(x),\mu_*) \, \mu_{\varepsilon}(x) - \sum_{x} \, p(y|x,\pi_*(x),\mu_*) \, \mu_*(x) \biggr| \nonumber \\
&\overset{(I)}{\leq} \sum_{x} \left\|p(\cdot|x,\pi_{\varepsilon}(x),\mu_{\varepsilon})-p(\cdot|x,\pi_*(x),\mu_*) \right\|_1 \mu_{\varepsilon}(x)+ \frac{K_1}{2} \left( 1 + \frac{K_{F}}{\rho} \right) \, \|\mu_{\varepsilon}-\mu_*\|_1 \nonumber \\
&\leq K_1 \left( \sup_x \|\pi_{\varepsilon}(x)-\pi_*(x)\| + \|\mu_{\varepsilon}-\mu_*\|_1 \right) + \frac{K_1}{2} \left( 1 + \frac{K_{F}}{\rho} \right) \, \|\mu_{\varepsilon}-\mu_*\|_1 \nonumber \\
&\leq K_1 \, \varepsilon +  \left(\frac{3 \, K_1}{2}  + \frac{K_1 \, K_{F}}{2\rho} \right) \, \|\mu_{\varepsilon}-\mu_*\|_1. \nonumber 
\end{align}
Note that (\ref{aks}) and Assumption~\ref{as1} lead to
\begin{align}
&\|p(\cdot|x,\pi_*(x),\mu_*)-p(\cdot|y,\pi_*(y),\mu_*)\|_1 \leq K_1 \left( 1 + \frac{K_{F}}{\rho} \right) \, d_{\sX}(x,y). \nonumber
\end{align}
Hence, (I) follows from Lemma~\ref{KoRa08}. Therefore, we have 
$$
\|\mu_{\varepsilon}-\mu_*\|_1 \leq \frac{K_1 \, \varepsilon}{1-C_1}.
$$ 

Now, fix any policy $\pi \in \Pi_d$. For any state-measure $\mu$, it is a well-known fact in MDP theory that the value function $J_{\mu}(\pi,\cdot)$ of $\pi$ satisfies the following fixed point equation:
$$
J_{\mu}(\pi,x) = c(x,\pi(x),\mu) + \beta \sum_y J_{\mu}(\pi,y) \, p(y|x,\pi(x),\mu),
$$
for every $x \in \sX.$ Therefore, we have 
\begin{align}
&\|J_{\mu_*}(\pi,\cdot)-J_{\mu_{\varepsilon}}(\pi,\cdot)\|_{\infty} \nonumber \\
&=\sup_{x} \bigg| c(x,\pi(x),\mu_*) + \beta \, \sum_{y} J_{\mu_*}(\pi,y) \, p(y|x,\pi(x),\mu_*) \nonumber \\
&\phantom{xxxxxxxxxx}-c(x,\pi(x),\mu_{\varepsilon}) - \beta \, \sum_{y} J_{\mu_{\varepsilon}}(\pi,y) \, p(y|x,\pi(x),\mu_{\varepsilon})\bigg| \nonumber \\
&\leq L_1 \, \|\mu_*-\mu_{\varepsilon}\|_1 + \beta \sup_{x} \bigg|\sum_{y} J_{\mu_*}(\pi,y) \, p(y|x,\pi(x),\mu_*) - \sum_{y} J_{\mu_*}(\pi,y) \, p(y|x,\pi(x),\mu_{\varepsilon})\bigg| \nonumber \\
&\phantom{xxxxxxxxxxxxx}+ \beta \sup_{x} \bigg|\sum_{y} J_{\mu_*}(\pi,y) \, p(y|x,\pi(x),\mu_{\varepsilon}) - \sum_{y} J_{\mu_{\varepsilon}}(\pi,y) \, p(y|x,\pi(x),\mu_{\varepsilon})\bigg| \nonumber \\
&\overset{(II)}{\leq} \left(L_1+\frac{\beta K_1 Q_{\Lip}}{2}\right) \|\mu_*-\mu_{\varepsilon}\|_1 + \beta \|J_{\mu_*}(\pi,\cdot)-J_{\mu_{\varepsilon}}(\pi,\cdot)\|_{\infty} \nonumber \\
&\leq  \left(L_1+\frac{\beta K_1 Q_{\Lip}}{2}\right) \frac{K_1  \varepsilon}{1-C_1}+ \beta \|J_{\mu_*}(\pi,\cdot)-J_{\mu_{\varepsilon}}(\pi,\cdot)\|_{\infty}. \nonumber 
\end{align}
Here (II) follows from (\ref{tv-bound}) and the fact that $J_{\mu_*}(\pi,\cdot)$ is $Q_{\Lip}$-Lipschitz continuous, which can be proved as in Lemma~\ref{lip-value}.  
Therefore, we obtain 
\begin{align}\label{nneqq1}
\|J_{\mu_*}(\pi,\cdot)-J_{\mu_{\varepsilon}}(\pi,\cdot)\|_{\infty} \leq \frac{C_2 \, \varepsilon}{1-\beta}.
\end{align}

Similarly, we also have  
\begin{align}
\|J_{\mu_*}(\pi_*,\cdot)-J_{\mu_*}(\pi_{\varepsilon},\cdot)\|_{\infty} &=\sup_{x} \bigg| c(x,\pi_*(x),\mu_*) + \beta \, \sum_{y} J_{\mu_*}(\pi_*,y) \, p(y|x,\pi_*(x),\mu_*) \nonumber \\
&\phantom{xxxxxxxx}-c(x,\pi_{\varepsilon}(x),\mu_*) - \beta \, \sum_{y} J_{\mu_*}(\pi_{\varepsilon},y) \, p(y|x,\pi_{\varepsilon}(x),\mu_*)\bigg| \nonumber \\
&\leq L_1  \, \sup_{x} \|\pi_*(x)-\pi_{\varepsilon}(x)\| + \beta \sup_{x} \bigg|\sum_{y} J_{\mu_*}(\pi_*,y) \, p(y|x,\pi_*(x),\mu_*) \nonumber \\
&\phantom{xxxxxxxxxxxxxxx}- \sum_{y} J_{\mu_*}(\pi_*,y) \, p(y|x,\pi_{\varepsilon}(x),\mu_*)\bigg| \nonumber \\
&+ \beta \sup_{x} \bigg|\sum_{y} J_{\mu_*}(\pi_*,y) \, p(y|x,\pi_{\varepsilon}(x),\mu_*) \nonumber \\
&\phantom{xxxxxxxxxxxxxxx}- \sum_{y} J_{\mu_*}(\pi_{\varepsilon},y) \, p(y|x,\pi_{\varepsilon}(x),\mu_*)\bigg| \nonumber \\
&\overset{(III)}{\leq}  \left(L_1+\frac{\beta K_1 Q_{\Lip}}{2}\right) \sup_x \|\pi_*(x)-\pi_{\varepsilon}(x)\| \nonumber \\
&\phantom{xxxxxxxxxxxxxxxxxxxxx}+ \beta \|J_{\mu_*}(\pi_*,\cdot)-J_{\mu_*}(\pi_{\varepsilon},\cdot)\|_{\infty} \nonumber \\
&\leq  \left(L_1+\frac{\beta K_1 Q_{\Lip}}{2}\right) \, \varepsilon+ \beta \|J_{\mu_*}(\pi_*,\cdot)-J_{\mu_*}(\pi_{\varepsilon},\cdot)\|_{\infty}. \nonumber 
\end{align}
\normalsize
Here (III) follows from (\ref{tv-bound}) and the fact that $J_{\mu_*}(\pi_*,\cdot)$ is $Q_{\Lip}$-Lipschitz continuous, which can be proved as in Lemma~\ref{lip-value}.
Therefore, we obtain 
\begin{align}\label{nneqq2}
\|J_{\mu_*}(\pi_*,\cdot)-J_{\mu_*}(\pi_{\varepsilon},\cdot)\|_{\infty} \leq \frac{C_3\varepsilon}{1-\beta}, 
\end{align}
where $C_3 \coloneqq \left(L_1+\frac{\beta K_1 Q_{\Lip}}{2}\right)$. 

Note that we must prove that
\begin{align}
J_i^{(N)}({\boldsymbol \pi}^{(N)}) &\leq \inf_{\pi^i \in \Pi_i} J_i^{(N)}({\boldsymbol \pi}^{(N)}_{-i},\pi^i) + \tau \, \varepsilon + \delta \label{old-eq13}
\end{align}
for each $i=1,\ldots,N$, when $N$ is sufficiently large. As the transition probabilities and the one-stage cost functions are the same for all agents, it is sufficient to prove (\ref{old-eq13}) for Agent~$1$ only. Given $\delta > 0$, for each $N\geq1$, let $\tpi^{(N)} \in \Pi_1$ be a deterministic policy such that
\begin{align*}
J_1^{(N)} (\tpi^{(N)},\pi_{\varepsilon},\ldots,\pi_{\varepsilon}) < \inf_{\pi' \in \Pi_1} J_1^{(N)} (\pi',\pi_{\varepsilon},\ldots,\pi_{\varepsilon}) + \frac{\delta}{3}. 
\end{align*}
On the other hand, by \cite[Theorem 4.10]{SaBaRaSIAM}
\begin{align}
\lim_{N\rightarrow\infty} J_1^{(N)} (\tpi^{(N)},\pi_{\varepsilon},\ldots,\pi_{\varepsilon}) &= \lim_{N\rightarrow\infty} J_{\mu_{\varepsilon}}(\tpi^{(N)}) \nonumber \\
&\geq \lim_{N\rightarrow\infty} J_{\mu_*}(\tpi^{(N)}) - \frac{C_2 \, \varepsilon}{1-\beta} \quad \text{(by (\ref{nneqq1}))}\nonumber \\
&\geq \inf_{\pi' \in \Pi_d} J_{\mu_*}(\pi') - \frac{C_2 \, \varepsilon}{1-\beta} \nonumber \\
&= J_{\mu_*}(\pi_*) - \frac{C_2 \, \varepsilon}{1-\beta} \nonumber \\
&\geq J_{\mu_*}(\pi_{\varepsilon}) - \frac{C_2 \, \varepsilon}{1-\beta} - \frac{C_3 \, \varepsilon}{1-\beta} \quad \text{(by (\ref{nneqq2}))}\nonumber \\
&\geq J_{\mu_{\varepsilon}}(\pi_{\varepsilon}) - \frac{2 \, C_2 \, \varepsilon}{1-\beta} - \frac{C_3 \, \varepsilon}{1-\beta} \quad \text{(by (\ref{nneqq1}))}\nonumber \\
&\eqqcolon J_{\mu_{\varepsilon}}(\pi_{\varepsilon}) - \tau \, \varepsilon \nonumber.
\end{align}
Note that by \cite[Theorem 4.10]{SaBaRaSIAM}, we also have 
$$
\lim_{N\rightarrow\infty} J_1^{(N)} (\pi_{\varepsilon},\pi_{\varepsilon},\ldots,\pi_{\varepsilon}) = J_{\mu_{\varepsilon}}(\pi_{\varepsilon}).
$$
Hence, there exists $N(\delta)$ such that for all $N\geq N(\delta)$, we have
\begin{align}
J_1^{(N)} (\tpi^{(N)},\pi_{\varepsilon},\ldots,\pi_{\varepsilon}) + \frac{\delta}{3} &\geq J_{\mu_{\varepsilon}}(\pi_{\varepsilon})- \tau \, \varepsilon \nonumber \\
J_{\mu_{\varepsilon}}(\pi_{\varepsilon}) + \frac{\delta}{3}&\geq J_1^{(N)} (\pi_{\varepsilon},\pi_{\varepsilon},\ldots,\pi_{\varepsilon}). \nonumber  
\end{align}
Therefore, for all $N \geq N(\delta)$, we obtain
\begin{align}
\inf_{\pi' \in \Pi_1} J_1^{(N)} (\pi',\pi_{\varepsilon},\ldots,\pi_{\varepsilon}) + \delta + \tau \,\varepsilon 
&\geq J_1^{(N)} (\tpi^{(N)},\pi_{\varepsilon},\ldots,\pi_{\varepsilon}) + \frac{2\delta}{3} + \tau \, \varepsilon \nonumber \\
&\geq J_{\mu_{\varepsilon}}(\pi_{\varepsilon}) + \frac{\delta}{3}   \nonumber \\
&\geq J_1^{(N)} (\pi_{\varepsilon},\pi_{\varepsilon},\ldots,\pi_{\varepsilon}). \nonumber
\end{align}
\endproof

Theorem~\ref{old-main-cor} implies that, by learning $\varepsilon$-mean-field equilibrium policy in the infinite-population limit, one can obtain an approximate Nash equilibrium for the finite-agent game problem for which computing or learning the exact Nash equilibrium is in general  prohibitive. 

In the next section, we approximate the MFE operator $H$ introduced in Section~\ref{known-model} via the random operator $\hat{H}$ to develop an algorithm for learning a $\varepsilon$-mean-field equilibrium policy in the model-free setting.

\section{Learning Algorithm for Discounted-cost}\label{unknown-model}

In this section, we develop an offline learning algorithm to learn an approximate mean-field equilibrium policy. 
To this end, we suppose that a generic agent has access to a simulator, which generates a new state $y \sim p(\,\cdot\,|x,a,\mu)$ and gives the cost $c(x,a,\mu)$ for any given state $x$, action $a$, and state measure $\mu$. This is a typical assumption in offline reinforcement learning algorithms.

Each iteration of our learning algorithm has two stages. Using a fitted $Q$-iteration algorithm, we learn the optimal  $Q$-function $Q_{\mu}^*$ for a given state-measure $\mu$ in the first stage by replacing $H_1$ with a random operator $\hat{H}_1$. The $Q$-functions are selected from a fixed function class ${\cal F}$ which can be defined as a collection of neural networks with a specific architecture or a linear span of a finite number of basis functions. There will be an additional representation error in the learning algorithm depending on this choice, which is generally negligible since $Q$-functions in ${\cal C}$ can be well approximated by functions from ${\cal F}$. 

In the second stage, we update the state-measure by approximating the  transition probability via its empirical estimate by replacing $H_2$ with a random operator $\hat{H}_2$. It should be noted that if the alternative $H_2$ operator mentioned in Remark~\ref{remh2} is used, the random operator that approximates this alternative  $H_2$ operator would be more complicated than $\hat{H}_2$. Indeed, in this case, an empirical estimation of the transition probability might be insufficient.

We proceed by introducing the random operator $\hat{H}_1$.
To describe $\hat{H}_1$, we need to give some definitions. Let $m_{\sA}(\cdot) \coloneqq m(\cdot)/m(\sA)$ be the uniform probability measure on $\sA$. Let us choose a probability measure $\nu$ on $\sX$ such that $\min_x \nu(x) > 0$. For instance, one can choose $\nu$ as the uniform distribution over $\sX$. Define $\zeta_0 \coloneqq 1/\sqrt{\min_x \nu(x)}$. We also choose some policy $\pi_b$ such that, for any $x \in \sX$, the distribution $\pi_b(\cdot|x)$ on $\sA$ has density with respect to Lebesgue measure $m$. To simplify the notation, we denote this density by $\pi_b(a|x)$, and assume that it satisfies $\pi_0 \coloneqq \inf_{(x,a) \in \sX\times\sA} \pi_b(a|x) > 0$. Note that the randomized policy $\pi_b$ is used to generate data for the learning algorithm below. In general, given any mean-field term, it is enough to consider deterministic policies for optimality. However, as is typical in reinforcement learning, we employ randomized policies to explore the action space in the training stage. Because of this, $\pi_b$ is introduced in a stochastic manner. We can now define the random operator $\hat{H}_1$.

\begin{algorithm}[H]
\caption{Algorithm $\hat{H}_1$}
\label{H1}
\begin{algorithmic}
\STATE{Input $\mu$, Data size $N$, Number of iterations $L$}
\STATE{Generate i.i.d. samples $\{(x_t,a_t,c_t,y_{t+1})_{t=1}^N\}$ using
$$
x_t \sim \nu, \, a_t \sim \pi_b(\cdot|x_t), \, c_t = c(x_t,a_t,\mu), \, y_{t+1} \sim p(\cdot|x_t,a_t,\mu)
$$
}
\STATE{Start with $Q_0 = 0$}
\FOR{$l=0,\ldots,L-1$}
\STATE{
$$
Q_{l+1} = \argmin_{f \in {\cal F}} \frac{1}{N} \sum_{t=1}^N \frac{1}{m(\sA) \, \pi_b(a_t|x_t)} \left| f(x_t,a_t) - \left[c_t + \beta \min_{a' \in \sA} Q_l(y_{t+1},a') \right]\right|^2
$$
}
\ENDFOR
\RETURN{$Q_L$}
\end{algorithmic}
\end{algorithm}

\begin{remark}
Notice that in Algorithm~$\hat{H_1}$, we use the distribution $\nu$ and policy $\pi_b$ to build an i.i.d. dataset. In fact, instead of using i.i.d. samples, one can use a sample path  $\{x_t,a_t\}_{t=1}^N$ generated by the policy $\pi_b$ instead of using i.i.d. samples by setting $c_t = c(x_t,a_t,\mu)$ and $y_{t+1} = x_{t+1}$. Then, in order to establish the error analysis, we have to assume that under $\pi_b$, the state process $\{x_t\}$ must be strictly stationary and exponentially $\beta$-mixing (see \cite{AnMuSz07}). The main issue in this case, however, is finding a policy $\pi_b$ that meets the mixing condition. Indeed, since exponentially $\beta$-mixing stationary processes forget their history exponentially fast, they behave like i.i.d. processes when there is a sufficiently large time gap between two samples. As a result, the error analysis of the exponential $\beta$-mixing case is very close to that of the i.i.d. case. For more information on the error analysis of $\hat{H}_1$ in the exponentially $\beta$-mixing case, see \cite{AnMuSz07-t,AnMuSz07}.
\end{remark}

We perform an error analysis of the algorithm $\hat{H}_1$ before defining the second stage $\hat{H}_2$. To that end, we define the $l_2$-norm of any $g: \sX \times \sA \rightarrow \R$ as 
$$
\|g\|_{\nu}^2 \coloneqq \sum_{x \in \sX} \int_{\sA} g(x,a)^2 \, m_{\sA}(da) \, \nu(x),
$$
and introduce the constants
\begin{align}
&E({\cal F}) \coloneqq \sup_{\mu \in \P(\sX)} \sup_{Q \in {\cal F}} \inf_{Q' \in {\cal F}} \|Q'-H_{\mu}Q\|_{\nu}, 
\end{align}
and 
\begin{align}
&L_{\bf m} \coloneqq (1+\beta) Q_{\bf m} + c_{\bf m}, \,\,\, C \coloneqq \frac{L_{\bf m}^2}{m(\sA) \, \pi_0}, \,\,\, \gamma = 512 C^2. \nonumber 
\end{align}

Here $E({\cal F})$ describes the representation error of the function class ${\cal F}$. This error is generally small since every $Q$ function in  ${\cal C}$ can be very well approximated using, for example, neural networks with a fixed architecture. As a result, we may consider the error caused by $E({\cal F})$ to be negligible. The error analysis of the algorithm  $\hat{H}_1$ is given by the following theorem. 
We define ${\cal F}_{\min} \coloneqq \{Q_{\min}: Q \in {\cal F}\}$ and let
\begin{align}
\Upsilon = 8 \, e^2 \, (V_{{\cal F}}+1) \, (V_{{\cal F}_{\min}}+1) \, \left(\frac{64 e Q_{\bf m} L_{\bf m} (1+\beta)}{m(\sA) \pi_0}\right)^{V_{{\cal F}}+V_{{\cal F}_{\min}}}, \,\, V = V_{{\cal F}}+V_{{\cal F}_{\min}}.  \nonumber 
\end{align}

\begin{theorem}\label{Thm-H1}
For any $(\varepsilon,\delta) \in (0,1)^2$ and $N \geq m_1(\epsilon,\delta,L)$, with probability at least $1-\delta$ we have
$$
\left\|\hat{H}_1[N,L](\mu) - H_1(\mu)\right\|_{\infty} \leq \varepsilon + \Delta,
$$
if $\frac{\beta^L}{1-\beta} \, Q_{\bf m} < \frac{\varepsilon}{2}$, where
$$
m_1(\varepsilon,\delta,L) \coloneqq \frac{\gamma (2 \Lambda)^{4(\dim_{\sA}+1)}}{\varepsilon^{4(\dim_{\sA}+1)}} \, \ln\left(\frac{\Upsilon (2 \Lambda)^{2V(\dim_{\sA}+1)} L}{\delta \varepsilon^{2V(\dim_{\sA}+1)}}\right),
$$
and 
$$ \Delta \coloneqq \frac{1}{1-\beta} \left[ \frac{m(\sA) (\dim_{\sA}+1)! \zeta_0}{\alpha (2/Q_{\Lip})^{\dim_{\sA}}} \, E({\cal F}) \right]^{\frac{1}{\dim_{\sA}+1}}, \,\, \Lambda \coloneqq \frac{1}{1-\beta} \left[ \frac{ m(\sA) (\dim_{\sA}+1)! \zeta_0}{\alpha (2/Q_{\Lip})^{\dim_{\sA}}} \right]^{\frac{1}{\dim_{\sA}+1}}. \nonumber 
$$

The constant error $\Delta$  is due to the algorithm's representation error $E({\cal F})$, which is generally negligible.

\end{theorem}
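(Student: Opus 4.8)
The plan is to bound $\|Q_L - Q_\mu^*\|_\infty$ (recall $\hat H_1[N,L](\mu)=Q_L$ and $H_1(\mu)=Q_\mu^*$) by isolating three error sources: the horizon truncation governed by $\beta^L$, the statistical estimation error from the $N$ samples, and the representation error $E(\mathcal F)$. I would first record the error-propagation identity. Writing $\eta_l \coloneqq Q_{l+1}-H_\mu Q_l$ for the one-step fitting error and using $Q_\mu^*=H_\mu Q_\mu^*$ together with the fact (noted in Section~\ref{known-model}) that $H_\mu$ is a $\beta$-contraction in $\|\cdot\|_\infty$, a telescoping argument gives
\begin{align}
\|Q_L-Q_\mu^*\|_\infty \leq \sum_{l=0}^{L-1}\beta^{L-1-l}\,\|\eta_l\|_\infty + \beta^L\,\|Q_\mu^*\|_\infty \leq \frac{1}{1-\beta}\max_{l}\|\eta_l\|_\infty + \beta^L Q_{\bf m}, \nonumber
\end{align}
where $Q_0=0$ and $\|Q_\mu^*\|_\infty\leq Q_{\bf m}$ were used. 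The hypothesis $\frac{\beta^L}{1-\beta}Q_{\bf m}<\varepsilon/2$ controls the last term, so it remains to bound each $\|\eta_l\|_\infty$.

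Second, I would control each one-step error in the weighted norm $\|\cdot\|_\nu$. The key observation is that the importance weight $\frac{1}{m(\sA)\pi_b(a_t|x_t)}$ makes the population version of the least-squares objective equal to $\|f-H_\mu Q_l\|_\nu^2$ up to an additive constant: drawing $a_t\sim\pi_b(\cdot|x_t)$ reweights to the uniform measure $m_\sA$ on $\sA$, and $E[c_t+\beta\min_{a'}Q_l(y_{t+1},a')\mid x_t,a_t]=H_\mu Q_l(x_t,a_t)$. Hence $Q_{l+1}$ targets $H_\mu Q_l$, the best-in-class error is at most $E(\mathcal F)$, and the gap between the empirical and population minimizers is a uniform-deviation quantity over $\mathcal F$ and over the class $\mathcal F_{\min}$ generated by the term $\min_{a'}Q_l(\cdot,a')$. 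Adapting the fitted-$Q$ analysis of \cite{AnMuSz07}, with covering-number bounds controlled by the pseudo-dimensions $V_{\mathcal F},V_{\mathcal F_{\min}}$ (hence by $\Upsilon,V,\gamma$) and a union bound over the $L$ iterations with failure probability $\delta/L$, I would obtain with probability at least $1-\delta$, simultaneously for all $l$,
$$
\|\eta_l\|_\nu \leq E(\mathcal F) + \varepsilon_N,
$$
where $\varepsilon_N\to 0$ at the slow rate of order $(V/N)^{1/4}$ (up to logarithmic factors) coming from the $N^{-1/2}$ uniform convergence of the squared loss.

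Third---and this is where Assumption~\ref{as1}(c) is essential---I would convert the $\|\cdot\|_\nu$ bound into an $\|\cdot\|_\infty$ bound. Each $\eta_l=Q_{l+1}-H_\mu Q_l$ is Lipschitz in $a$ (the first term because $Q_{l+1}\in\mathcal C$, the second by Assumptions (a),(b)), so if $|\eta_l|$ attains a value $M$ at some $(x_0,a_0)$ it stays at least $M/2$ on $B(a_0,M/(2Q_{\Lip}))$; by Assumption~\ref{as1}(c) this ball meets $\sA$ in a set of measure at least $\alpha\,m(B(a_0,M/(2Q_{\Lip})))$, which scales like $M^{\dim_{\sA}}$. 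Integrating this lower bound against $m_\sA$ and $\nu$ (with $\min_x\nu(x)=1/\zeta_0^2$), after the elementary bound $\|g\|_{\nu,1}\le\|g\|_\nu$ between the associated $l_1$- and $l_2$-norms, yields a reverse inequality of the form
$$
\|\eta_l\|_\infty \leq \left[\frac{m(\sA)(\dim_{\sA}+1)!\,\zeta_0}{\alpha\,(2/Q_{\Lip})^{\dim_{\sA}}}\,\|\eta_l\|_\nu\right]^{\frac{1}{\dim_{\sA}+1}}.
$$
Combining with $\|\eta_l\|_\nu\leq E(\mathcal F)+\varepsilon_N$ and the subadditivity $(a+b)^{\frac{1}{\dim_{\sA}+1}}\leq a^{\frac{1}{\dim_{\sA}+1}}+b^{\frac{1}{\dim_{\sA}+1}}$ of the concave power splits $\frac{1}{1-\beta}\|\eta_l\|_\infty$ into a representation part equal to $\Delta$ and an estimation part $\Lambda\,\varepsilon_N^{1/(\dim_{\sA}+1)}$. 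Choosing $N\geq m_1(\varepsilon,\delta,L)$ forces the estimation part below $\varepsilon/2$ (this is the source of the exponent $4(\dim_{\sA}+1)$ and the factors $(2\Lambda)^{\cdots}$ in $m_1$), and adding $\beta^L Q_{\bf m}<\varepsilon/2$ gives the claimed $\varepsilon+\Delta$.

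The main obstacle I expect is the third step, the $\nu$-to-$\infty$ conversion: one must first verify that every iterate $Q_{l+1}$ and every target $H_\mu Q_l$ genuinely lie in the Lipschitz/strongly-convex class $\mathcal C$ so that $\eta_l$ is Lipschitz in $a$, and then carry out the geometric volume estimate under Assumption~\ref{as1}(c) with the correct dimension-dependent constant and exponent $1/(\dim_{\sA}+1)$. A secondary delicate point is the statistical step: because the regression target contains $\min_{a'}Q_l$, the uniform-deviation bound must be taken over both $\mathcal F$ and $\mathcal F_{\min}$, and the importance weighting must be handled so that the controlling norm is precisely $\|\cdot\|_\nu$; retaining the slow $N^{-1/2}$ loss-convergence rate is exactly what produces the $\varepsilon^{-4(\dim_{\sA}+1)}$ sample complexity.
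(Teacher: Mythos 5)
Your proposal is correct and follows essentially the same route as the paper's proof: the telescoping error propagation through the $\beta$-contraction $H_{\mu}$, the reduction of the one-step fitting error to a uniform-deviation bound in $\|\cdot\|_{\nu}$ via the importance-weighted least-squares objective and covering numbers controlled by $V_{{\cal F}}$ and $V_{{\cal F}_{\min}}$ with a union bound over the $L$ iterations, and the conversion from $\|\cdot\|_{\nu}$ to $\|\cdot\|_{\infty}$ under Assumption~\ref{as1}-(c) (the paper's Lemma~\ref{li-l2}). The two delicate points you flag (Lipschitz regularity of the iterates needed for the norm conversion, and the simultaneous covering of ${\cal F}$ and ${\cal F}_{\min}$) are exactly the ones the paper handles, so there is no substantive divergence.
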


\proof
For any real-valued function  $Q(x,a)$, recall the definition 
$$
\|Q\|_{\nu}^2 \coloneqq \sum_{x \in \sX} \int_{\sA} Q(x,a)^2 \, m_{\sA}(da) \, \nu(x). 
$$
Let $Q_l$ be the random $Q$-function at the $l^{\text{th}}$-step of the algorithm. First, we find an upper bound to the following probability
$$
P_0 \coloneqq \rP\left( \|Q_{l+1}-H_{\mu} Q_l\|_{\nu}^2 > E({\cal F})^2 + \varepsilon' \right),
$$
for a given $\varepsilon'>0$. To that end, we define
$$
\hat{L}_N(f;Q) \coloneqq \frac{1}{N} \sum_{t=1}^N \frac{1}{m(\sA) \, \pi_b(a_t|x_t)} \left| f(x_t,a_t) - \left[c_t + \beta \min_{a' \in \sA} Q(y_{t+1},a') \right]\right|^2.
$$ 
The normalization with $\pi_b(a_t|x_t)$ is used here to avoid assigning more weight to the actions that are preferred by the policy, and $m(\sA)$ is introduced for mathematical convenience.

One can show that (see \cite[Lemma 4.1]{AnMuSz07-t}) 
$$
\rE \left[ \hat{L}_N(f;Q) \right] = \|f-H_{\mu}Q\|_{\nu}^2 + L^*(Q) \eqqcolon L(f;Q),
$$
where $L^*(Q)$ is some quantity independent of $f$.  Since we need a similar equation for the average-cost, let us prove it in detail so that we can refer to this proof in the future. Indeed, for each $t=1,\ldots,N$, define 
$$
\hat{Q}_t \coloneqq  c_t + \beta \min_{a' \in \sA} Q(y_{t+1},a'). 
$$ 
Then, 
$$
\rE\left[ \hat{Q}_t \, \big| \, x_t,a_t \right] = H_{\mu} Q(x_t,a_t). 
$$

Note that we can write 
\begin{align}
&\rE\left[\left( f(x_t,a_t) - \left[c_t + \beta \min_{a' \in \sA} Q(y_{t+1},a') \right]\right)^2 \bigg| \, x_t,a_t \right] 
= \rE\left[\left(f(x_t,a_t)-\hat{Q}_t\right)^2 \bigg| \, x_t,a_t \right] \nonumber \\
&\phantom{xxxxxxxxxx}= \rE\left[\left(\hat{Q}_t-H_{\mu} Q(x_t,a_t)\right)^2 \bigg| \, x_t,a_t \right] + \left( f(x_t,a_t) - H_{\mu} Q(x_t,a_t) \right)^2. \nonumber
\end{align}
Dividing each term by $m(\sA) \, \pi_b(a_t|x_t)$, taking the expectation of both sides with respect to $a$ and $x$, and using the \emph{law of iterated expectation} we get 
\begin{align}
&\rE\left[\frac{\left( f(x_t,a_t) - \left[c_t + \beta \min_{a' \in \sA} Q(y_{t+1},a') \right]\right)^2}{m(\sA) \, \pi_b(a_t|x_t)}\right] \nonumber \\ 
&\phantom{xxxx}= \rE\left[\frac{\left(\hat{Q}_t-H_{\mu} Q(x_t,a_t)\right)^2}{m(\sA) \, \pi_b(a_t|x_t)}\right] + \sum_{x_t \in \sX} \int_{\sA} \frac{\left(  f(x_t,a_t) - H_{\mu} Q(x_t,a_t) \right)^2}{m(\sA) \, \pi_b(a_t|x_t)}  \, \pi_b(a_t|x_t) \, m(da_t) \, \nu(x_t) \nonumber \\
&\phantom{xxxx}= \rE\left[\frac{\left(\hat{Q}_t-H_{\mu} Q(x_t,a_t)\right)^2}{m(\sA) \, \pi_b(a_t|x_t)}\right] + \sum_{x_t \in \sX} \int_{\sA} \left(  f(x_t,a_t) - H_{\mu} Q(x_t,a_t) \right)^2  \, m_{\sA}(da_t) \, \nu(x_t) \nonumber \\
&\phantom{xxxx}\eqqcolon L^*(Q) + \|f-H_{\mu}Q\|_{\nu}^2 \eqqcolon L(f;Q). \nonumber 
\end{align}
Since the samples are i.i.d., this establishes the fact.

Using above discussion, we can obtain the following bound 
\begin{align}
\|Q_{l+1} - H_{\mu} Q_l\|_{\nu}^2 - E({\cal F})^2 &\leq \|Q_{l+1} - H_{\mu} Q_l\|_{\nu}^2 - \inf_{f \in {\cal F}} \|f-H_{\mu}Q_l\|_{\nu}^2 \nonumber \\
&= L(Q_{l+1};Q_l) - \inf_{f \in {\cal F}} L(f;Q_l) \nonumber \\
&= L(Q_{l+1};Q_l) - \hat{L}_N(Q_{l+1};Q_l) +\hat{L}_N(Q_{l+1};Q_l) - \inf_{f \in {\cal F}} L(f;Q_l) \nonumber \\
&= L(Q_{l+1};Q_l) - \hat{L}_N(Q_{l+1};Q_l) + \inf_{f \in {\cal F}} \hat{L}_N(f;Q_l) - \inf_{f \in {\cal F}} L(f;Q_l) \nonumber \\
&\leq 2 \sup_{f \in {\cal F}} \left| L(f;Q_l) - \hat{L}_N(f;Q_l) \right| \nonumber \\
&\leq 2 \sup_{f,Q \in {\cal F}} \left| L(f;Q) - \hat{L}_N(f;Q) \right|. \nonumber 
\end{align}
This implies that 
\begin{align}
P_0 \leq \rP \left( \sup_{f,Q \in {\cal F}} \left| L(f;Q) - \hat{L}_N(f;Q) \right| > \frac{\varepsilon'}{2} \right). \label{first} 
\end{align}
For any $f,Q \in {\cal F}$, we define
$$
l_{f,Q}(x,a,c,y) \coloneqq \frac{1}{m(\sA) \pi_b(a|x)} \left| f(x,a) - c - \beta \min_{a' \in \sA} Q(y,a') \right|^2.
$$
Let ${\cal L}_{{\cal F}} \coloneqq \{l_{f,Q}: f,Q \in {\cal F}\}$. Note that $\{z_t\}_{t=1}^N \coloneqq \{(x_t,a_t,c_t,y_{t+1})\}_{t=1}^N$ are i.i.d. and 
$$
\frac{1}{N} \sum_{t=1}^N l_{f,Q}(z_t) = \hat{L}_N(f;Q) \,\, \text{and} \,\, \rE[l_{f,Q}(z_1)] = L(f;Q).
$$
Recall the constant $L_{\bf m} \coloneqq (1+\beta) Q_{\bf m} + c_{\bf m}$. One can prove that $0 \leq l_{f,Q} \leq \frac{L_{\bf m}^2}{m(\sA) \, \pi_0} \eqqcolon C$. Then, by Pollard's Tail Inequality \cite[Theorem 24, p. 25]{Pol84}, we have
\begin{align}
P_0 &\leq \rP \left( \sup_{f,Q \in {\cal F}} \left| \frac{1}{N} \sum_{t=1}^N l_{f,Q}(z_t) - \rE[l_{f,Q}(z_1)] \right| > \frac{\varepsilon'}{2} \right) \nonumber \\
&\leq 8 \, \rE\left[ N_1\left(\frac{\varepsilon'}{16},\{z_t\}_{t=1}^N,{\cal L}_{{\cal F}}\right) \right] \, e^{\frac{-N \, \varepsilon'^2}{512 \, C^2}}. \nonumber 
\end{align}
For any $l_{f,Q}$ and $l_{g,T}$,  we also have  (see \cite[pp. 18]{AnMuSz07-t})
\begin{align}
&\frac{1}{N} \sum_{t=1}^N |l_{f,Q}(z_t) - l_{g,T}(z_t)| \leq \frac{2 L_{\bf m}}{m(\sA) \, \pi_0} \bigg( \frac{1}{N} \sum_{t=1}^N |f(x_t,a_t) - g(x_t,a_t)| \nonumber \\
&\phantom{xxxxxxxxxxxxxxxxxxxxxxxxxx}+ \beta  \frac{1}{N} \sum_{t=1}^N \left|\min_{b \in \sA} Q(y_{t+1},b) - \min_{b \in \sA} T(y_{t+1},b)\right| \bigg). \nonumber
\end{align}
This implies that, for any $\epsilon > 0$, we have 
\begin{align}
&N_1\left(\frac{2 L_{\bf m}}{m(\sA) \, \pi_0}\,(1+\beta)\,\epsilon,\{z_t\}_{t=1}^N,{\cal L}_{{\cal F}}\right) 
\leq N_1\left(\epsilon,\{(x_t,a_t)\}_{t=1}^N,{\cal F}\right) \, N_1\left(\epsilon,\{y_{t+1}\}_{t=1}^N,{\cal F}_{\min}\right) \nonumber \\
&\phantom{xxxxxxxxxxxxxxxxx}\overset{(I)}{\leq} e \, (V_{{\cal F}}+1) \, \left(\frac{2 e Q_{\bf m}}{\epsilon}\right)^{V_{{\cal F}}} \, e \, (V_{{\cal F}_{\min}}+1) \, \left(\frac{2 e Q_{\bf m}}{\epsilon}\right)^{V_{{\cal F}_{\min}}}, 
\end{align}
where (I) follows from Lemma~\ref{cov-num}. Therefore, we have the following bound on the probability $P_0$:
\begin{align}
P_0 \leq 8 \, \left\{ e^2 \, (V_{{\cal F}}+1) \, (V_{{\cal F}_{\min}}+1) \, \left(\frac{64 e Q_{\bf m} L_{\bf m} (1+\beta)}{m(\sA) \pi_0 \varepsilon'}\right)^{V_{{\cal F}}+V_{{\cal F}_{\min}}} \right\} \, e^{\frac{-N \, \varepsilon'^2}{512 \, C^2}}. \label{second}
\end{align}
Recall the constants
\small
\begin{align}
\Upsilon &= 8 \, e^2 \, (V_{{\cal F}}+1) \, (V_{{\cal F}_{\min}}+1) \, \left(\frac{64 e Q_{\bf m} L_{\bf m} (1+\beta)}{m(\sA) \pi_0}\right)^{V_{{\cal F}}+V_{{\cal F}_{\min}}},  
V &= V_{{\cal F}}+V_{{\cal F}_{\min}}, \,\,\, \gamma = 512 C^2.\nonumber
\end{align}
\normalsize
Then, we can write (\ref{second}) as follows
\begin{align}
P_0 \coloneqq \rP\left( \|Q_{l+1}-H_{\mu} Q_l\|_{\nu}^2 > E({\cal F})^2 + \varepsilon' \right) \leq \Upsilon \, \varepsilon'^{-V} \, e^{\frac{-N \varepsilon'^2}{\gamma}} \eqqcolon \frac{\delta'}{L}. \label{third}
\end{align}
Hence, for each $l=0,\ldots,L-1$, with probability at most $\frac{\delta'}{L}$
$$
\|Q_{l+1} - H_{\mu} Q_l\|_{\nu}^2 > \varepsilon' + E({\cal F})^2.
$$
This implies that with probability at most $\frac{\delta'}{L}$
$$
\|Q_{l+1} - H_{\mu} Q_l\|_{\nu} > \sqrt{\varepsilon'} + E({\cal F}).
$$
Using this, we can conclude that with probability at least $1-\delta'$
\begin{align}
&\|Q_L - H_1(\mu)\|_{\infty} \leq \sum_{l=0}^{L-1} \beta^{L-(l+1)} \, \|Q_{l+1} - H_{\mu} Q_l\|_{\infty} + \|H_{\mu}^L Q_0 - H_1(\mu)\|_{\infty} \nonumber \\
&\overset{(II)}{\leq} \sum_{l=0}^{L-1} \beta^{L-(l+1)} \, \left[ \frac{m(\sA) (\dim_{\sA}+1)!\zeta_0}{\alpha (2/Q_{\Lip})^{\dim_{\sA}}} \, \|Q_{l+1} - H_{\mu} Q_l\|_{\nu} \right]^{\frac{1}{\dim_{\sA}+1}} + \frac{\beta^L}{1-\beta} \, Q_{\bf m} \nonumber \\
&\leq \sum_{l=0}^{L-1} \beta^{L-(l+1)} \, \left[ \frac{ m(\sA) (\dim_{\sA}+1)!\zeta_0}{\alpha (2/Q_{\Lip})^{\dim_{\sA}}} \, (\sqrt{\varepsilon'}+E({\cal F})) \right]^{\frac{1}{\dim_{\sA}+1}} + \frac{\beta^L}{1-\beta} \, Q_{\bf m} \nonumber \\
&\leq \frac{1}{1-\beta} \left( \left[ \frac{m(\sA) (\dim_{\sA}+1)!\zeta_0}{\alpha (2/Q_{\Lip})^{\dim_{\sA}}} \, E({\cal F}) \right]^{\frac{1}{\dim_{\sA}+1}} +  \left[ \frac{ m(\sA) (\dim_{\sA}+1)!\zeta_0}{\alpha (2/Q_{\Lip})^{\dim_{\sA}}} \right]^{\frac{1}{\dim_{\sA}+1}} \varepsilon'^{\frac{1}{2(\dim_{\sA}+1)}}\right) \nonumber \\
&\phantom{xxxxxxxxxxxxxxxxxxxxxxxxxxxxxxxxxxxxxxxxx}+\frac{\beta^L}{1-\beta} \, Q_{\bf m}, \nonumber
\end{align}
where (II) follows from Lemma~\ref{li-l2}.
Then, with probability at least $1-\delta'$, we have
\begin{align}
\|Q_L - H_1(\mu)\|_{\infty} \leq \Lambda \varepsilon'^{\frac{1}{2(\dim_{\sA}+1)}} + \Delta + \frac{\beta^L}{1-\beta} \, Q_{\bf m}. \label{fourth}
\end{align}
The result follows by picking $\delta = \delta' \coloneqq L \, \Upsilon \, \varepsilon'^{-V} \, e^{\frac{-N \varepsilon'^2}{\gamma}}$ in (\ref{third}), choosing $\Lambda \varepsilon'^{\frac{1}{2(\dim_{\sA}+1)}} = \varepsilon/2$, and $\frac{\beta^L}{1-\beta} \, Q_{\bf m} = \varepsilon/2$.  
\endproof

\begin{remark}
We use the $\|\,\cdot\,\|_{\nu}$-norm on $Q$-functions until a certain stage in the proof of Theorem~\ref{Thm-H1}, and then we use  Lemma~\ref{li-l2} to go back to the $\|\,\cdot\,\|_{\infty}$-norm.
Notice that Assumption~\ref{as1}-(c) on $\sA$ is needed to accomplish this because the operator $H_1$ becomes a $\beta$-contraction only in terms of the $\|\,\cdot\,\|_{\infty}$-norm. However, without switching from  $\|\,\cdot\,\|_{\nu}$-norm to $\|\,\cdot\,\|_{\infty}$-norm, a similar error analysis in terms of $\|\,\cdot\,\|_{\nu}$-norm can be formed by replacing Assumption~\ref{as1}-(c) with a concentrability assumption (see \cite{MuSz08,AgJiKa19}). To that end, let us define the state-action visitation probability of any policy $\pi$ as
$$
d^{\pi}(x,da) \coloneqq (1-\beta) \sum_{t=0}^{\infty} \rP^{\pi}(x(t)= x, a(t) \in da).
$$  
The concentrability assumption states that the state-action visitation probability $d^{\pi}$ is absolutely continuous with respect to  $\nu(x) \otimes m_{\sA}(da)$ for any $\pi \in \Pi$, and the corresponding densities are uniformly bounded, i.e.,
$$
\sup_{\pi \in \Pi} \left\|\frac{d^{\pi}}{\nu\otimes m_{\sA}}\right\|_{\infty} \leq C,
$$
for some $C$. Under this assumption, the final part of Theorem~\ref{Thm-H1} can be handled with the $\|\,\cdot\,\|_{\nu}$-norm instead of the $\|\,\cdot\,\|_{\infty}$-norm using performance difference lemma \cite[Theorem 15.4]{AgJiKa19}. However, it is not possible to establish the overall error analysis of the learning algorithm using the $\|\,\cdot\,\|_{\nu}$-norm on $Q$-functions under the same set of assumptions on the system components without strengthening Assumption~\ref{as1}-(d). 
\end{remark}

We now give the description of the random operator $\hat{H}_2$, and then, do the error analysis. In this algorithm, the goal is to replace the operator $H_2$, which gives the next state-measure, with $\hat{H}_2$. We achieve this by simulating the transition probability $p(\cdot|x,a,\mu)$ for certain state-measure $\mu $ and policy $\pi$. This is possible since $|\sX|$ is finite.  

\begin{algorithm}[H]
\caption{Algorithm $\hat{H}_2$}
\label{H2}
\begin{algorithmic}
\STATE{Inputs $\left(\mu,Q\right)$, Data size $M$, Number of iterations $|\sX|$}
\FOR{$x \in \sX$}
\STATE{generate i.i.d. samples $\{y_t^x\}_{t=1}^M$ using
$$
y_t^x \sim p(\cdot|x,f_{Q}(x),\mu)
$$
and define 
$$
p_M(\cdot|x,f_{Q}(x),\mu) = \frac{1}{M} \sum_{t=1}^M \delta_{y_t^x}(\cdot).
$$
}
\ENDFOR
\RETURN{$\sum_{x \in \sX} p_M(\cdot|x,f_{Q}(x),\mu) \, \mu(x)$}
\end{algorithmic}
\end{algorithm}

This is the error analysis of the random operator $\hat{H}_2$.

\begin{theorem}\label{Thm-H2}
For any $(\varepsilon,\delta) \in (0,1)^2$, with probability at least $1-\delta$
$$
\left\|\hat{H}_2[M](\mu,Q) - H_2(\mu,Q) \right\|_1 \leq \varepsilon 
$$
if $M \geq m_2(\epsilon,\delta)$, where 
$$
m_2(\epsilon,\delta) \coloneqq \frac{|\sX|^2}{\varepsilon^2} \, \ln\left(\frac{2 \, |\sX|^2}{\delta}\right). 
$$
\end{theorem}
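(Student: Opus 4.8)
The plan is to reduce the error of the randomized operator to a collection of independent empirical-measure estimation problems, one per state, and then control each one with a standard concentration bound. First I would peel off the convex weights $\mu(x)$: since
$$
\hat{H}_2[M](\mu,Q)(y) - H_2(\mu,Q)(y) = \sum_{x \in \sX} \bigl( p_M(y|x,f_{Q}(x),\mu) - p(y|x,f_{Q}(x),\mu) \bigr)\, \mu(x),
$$
the triangle inequality for the $l_1$-norm gives
$$
\bigl\| \hat{H}_2[M](\mu,Q) - H_2(\mu,Q) \bigr\|_1 \leq \sum_{x \in \sX} \mu(x) \, \bigl\| p_M(\cdot|x,f_{Q}(x),\mu) - p(\cdot|x,f_{Q}(x),\mu) \bigr\|_1.
$$
Because the weights $\mu(x)$ are nonnegative and sum to one, it suffices to force every per-state $l_1$-error below $\varepsilon$ simultaneously; the weighted average is then automatically at most $\varepsilon$.

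Next I would exploit that, for each fixed $x$, the action $f_{Q}(x) = \argmin_{a} Q(x,a)$ is deterministic given the input $Q$, so $p_M(\cdot|x,f_{Q}(x),\mu)$ is genuinely the empirical distribution of $M$ i.i.d. draws from the fixed law $p(\cdot|x,f_{Q}(x),\mu)$ on the finite set $\sX$. Writing the $l_1$-error as $\sum_{y \in \sX} | p_M(y|x) - p(y|x) |$, each coordinate $p_M(y|x)$ is an average of $M$ i.i.d. Bernoulli$(p(y|x))$ indicators $1_{\{y_t^x = y\}}$, so Hoeffding's inequality yields $\rP\bigl( |p_M(y|x) - p(y|x)| > \varepsilon/|\sX| \bigr) \le 2\, e^{-2M\varepsilon^2/|\sX|^2}$. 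If all $|\sX|$ coordinate errors for a given $x$ lie below $\varepsilon/|\sX|$, their sum — the $l_1$-error at that $x$ — is below $\varepsilon$.

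Finally I would close the argument with a double union bound. Summing the Hoeffding failure probability over the $|\sX|$ next-states $y$ and over the $|\sX|$ current-states $x$ bounds the total failure probability by $2|\sX|^2\, e^{-2M\varepsilon^2/|\sX|^2}$. Requiring this to be at most $\delta$ and solving for $M$ produces a threshold of precisely the stated form, the factor $2|\sX|^2$ inside the logarithm recording the two-sided bound together with the $|\sX|\times|\sX|$ events, and the $|\sX|^2/\varepsilon^2$ prefactor recording the per-coordinate tolerance $\varepsilon/|\sX|$; thus $M \geq m_2(\varepsilon,\delta)$ suffices. There is no serious conceptual obstacle here. The only point requiring genuine care is the passage from coordinate-wise concentration to $l_1$-concentration, which is what forces the tolerance $\varepsilon/|\sX|$ and hence the quadratic dependence on $|\sX|$ in the sample complexity; the remaining bookkeeping of constants to land on the exact form of $m_2$ is routine.
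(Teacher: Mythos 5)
Your proposal is correct and follows essentially the same route as the paper: decompose the $l_1$-error via the triangle inequality, apply Hoeffding to each coordinate $p_M(y|x,f_Q(x),\mu)$ with tolerance $\varepsilon/|\sX|$, and union-bound over the $|\sX|^2$ pairs $(x,y)$. The only (immaterial) difference is that you quote Hoeffding with the sharper exponent $-2M\varepsilon^2/|\sX|^2$ while the paper uses $-M\varepsilon^2/|\sX|^2$, so the stated threshold $m_2(\varepsilon,\delta)$ is still more than sufficient under your bound.
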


\proof
By Hoeffding Inequality \cite[Theorem 2.1]{HaRa19}, for any $x,y \in \sX$, we have
\begin{align}
\rP \left( \left|p_M(y|x,f_{Q}(x),\mu) - p(y|x,f_{Q}(x),\mu) \right| > \frac{\varepsilon}{|\sX|} \right) \leq 2 e^{\frac{-M \varepsilon^2}{|\sX|^2}}. \nonumber 
\end{align}
Hence, we have
\begin{align}
&\rP\left( \left\|\hat{H}_2[M](\mu,Q) - H_2(\mu,Q) \right\|_1 \leq \varepsilon \right) \nonumber \\
&\geq \rP \left( \sum_{x,y \in \sX} \, \left|p_M(y|x,f_{Q}(x),\mu) - p(y|x,f_{Q}(x),\mu) \right| \, \mu(x) \leq \varepsilon \right) \nonumber \\
&\geq 1- \rP \left( \exists  x,y \in \sX \, \text{ s.t.} \, \left|p_M(y|x,f_{Q}(x),\mu) - p(y|x,f_{Q}(x),\mu) \right| > \frac{\varepsilon}{|\sX|}, \right) \nonumber \\
&\geq 1-2 \, |\sX|^2 \,e^{\frac{-M \varepsilon^2}{|\sX|^2}}. \nonumber  
\end{align}
The result follows by picking $\delta = 2 \, |\sX|^2 \,e^{\frac{-M \varepsilon^2}{|\sX|^2}}$.
\endproof

The overall description of the learning algorithm is given below. In this algorithm, to achieve an approximate mean-field equilibrium policy, we successively apply the random operator $\hat{H}$ which replaces the MFE operator $H$.

\begin{algorithm}[H]
\caption{Learning Algorithm}
\label{Qit}
\begin{algorithmic}
\STATE{Input $\mu_0$, Number of iterations $K$, Parameters of $\hat{H_1}$ and $\hat{H}_2$ $\left(\{[N_k,L_k]\}_{k=0}^{K-1},\{M_k\}_{k=0}^{K-1}\right)$}
\STATE{Start with $\mu_0$}
\FOR{$k=0,\ldots,K-1$}
\STATE{
$\mu_{k+1} = \hat{H}\left([N_k,L_k],M_k\right)(\mu_k) \coloneqq \hat{H}_2[M_k]\left(\mu_k,\hat{H}_1[N_k,L_k](\mu_k)\right)$
}
\ENDFOR
\RETURN{$\mu_K$}
\end{algorithmic}
\end{algorithm}

The current state-measure $\mu_k$ is the input for each iteration $k=0,\ldots,K-1$. In addition, for the random operator $\hat{H_1}$, we choose integers $N_k$ and $L_k$ as the data size and the number of iterations, respectively, and for the random operator  $\hat{H}_2$, we choose an integer $M_k$ as the data size. We first compute an approximate $Q$-function for $\mu_k$ by applying $\hat{H}_1[N_k,L_k](\mu_k)$, and then we compute an approximate next state-measure by applying $\hat{H}_2[M_k](\mu_k,\hat{H}_1[N_k,L_k](\mu_k))$. Since an approximate $Q$-function is used instead of the exact $Q$-function in the second stage of the iteration, there will be an error due to $\hat{H}_1$ in addition to the error resulting from $\hat{H_2}$.

The error analyses of the algorithms  $\hat{H_1}$ and $\hat{H}_2$ have been completed in Theorem~\ref{Thm-H1} and Theorem~\ref{Thm-H2}, respectively. The error analysis for the learning algorithm for the random operator $\hat{H}$, which is a combination of $\hat{H_1}$ and $\hat{H}_2$, is given below. We state the key result of this section as a corollary after the proof of the following theorem.

\begin{theorem}\label{main-result}
Fix any $(\varepsilon,\delta) \in (0,1)^2$. Define 
$$
\varepsilon_1 \coloneqq \frac{\rho \, (1-K_H)^2 \, \varepsilon^2}{64 (K_1)^2}, \,\,
\varepsilon_2 \coloneqq \frac{(1-K_H) \, \varepsilon}{4}.
$$
Let $K,L$ be such that 
\begin{align}
\frac{(K_H)^K}{1-K_H} &\leq \frac{\varepsilon}{2}, \,\, \frac{\beta^L}{1-\beta} Q_{\bf m} \leq \frac{\varepsilon_1}{2}. \nonumber 
\end{align}
Then, pick $N,M$ such that
\begin{align}
N &\geq m_1\left( \varepsilon_1,\frac{\delta}{2K},L \right), \,\, M \geq m_2\left( \varepsilon_2,\frac{\delta}{2K} \right).
\end{align}
Let $\mu_K$ be the output of the learning algorithm with parameters $$\left(\mu_0, K, \{[N,L]\}_{k=0}^K, \{M\}_{k=0}^{K-1} \right).$$ Then, with probability at least $1-\delta$
$$
\|\mu_K - \mu_*\|_1 \leq \frac{2 K_1 \sqrt{\Delta }}{\sqrt{\rho} (1-K_H)} + \varepsilon,
$$
where $\mu_*$ is the state-measure in mean-field equilibrium given by the MFE operator $H$. 
\end{theorem}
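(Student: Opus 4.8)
The plan is to compare the random iterates $\mu_{k+1}=\hat{H}(\mu_k)$ against the exact MFE contraction $H$ and to unroll the resulting error recursion. For each $k$ I would use
\begin{align}
\|\mu_{k+1}-\mu_*\|_1 \le \|\hat{H}(\mu_k)-H(\mu_k)\|_1 + \|H(\mu_k)-H(\mu_*)\|_1 \le \|\hat{H}(\mu_k)-H(\mu_k)\|_1 + K_H\,\|\mu_k-\mu_*\|_1, \nonumber
\end{align}
where the contraction bound is Proposition~\ref{MFE-con} together with $\mu_*=H(\mu_*)$. Writing $a_k:=\|\mu_k-\mu_*\|_1$, everything reduces to a uniform per-step bound $\|\hat{H}(\mu_k)-H(\mu_k)\|_1\le R$, after which the recursion $a_{k+1}\le R+K_H a_k$ telescopes to $a_K\le K_H^K a_0 + R/(1-K_H)$.

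To bound the one-step error, set $\hat{Q}_k:=\hat{H}_1(\mu_k)$ and $Q_k:=H_1(\mu_k)$ and split
\begin{align}
\|\hat{H}(\mu_k)-H(\mu_k)\|_1 \le \underbrace{\|\hat{H}_2(\mu_k,\hat{Q}_k)-H_2(\mu_k,\hat{Q}_k)\|_1}_{(\mathrm a)} + \underbrace{\|H_2(\mu_k,\hat{Q}_k)-H_2(\mu_k,Q_k)\|_1}_{(\mathrm b)}. \nonumber
\end{align}
Term $(\mathrm a)$ is exactly the sampling error of $\hat{H}_2$, hence $\le\varepsilon_2$ by Theorem~\ref{Thm-H2} once $M\ge m_2(\varepsilon_2,\tfrac{\delta}{2K})$. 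The decisive step is $(\mathrm b)$: since $H_2$ sees $Q$ only through the minimizer $f_Q$, I would exploit $\rho$-strong convexity of $Q_k(x,\cdot)\in\C$ to write
\begin{align}
\tfrac{\rho}{2}\,\|f_{\hat{Q}_k}(x)-f_{Q_k}(x)\|^2 \le Q_k(x,f_{\hat{Q}_k}(x))-Q_k(x,f_{Q_k}(x)) \le 2\,\|\hat{Q}_k-Q_k\|_{\infty}, \nonumber
\end{align}
the last inequality holding because $f_{\hat{Q}_k}(x)$ minimizes $\hat{Q}_k(x,\cdot)$. Thus $\|f_{\hat{Q}_k}(x)-f_{Q_k}(x)\|\le\tfrac{2}{\sqrt{\rho}}\sqrt{\|\hat{Q}_k-Q_k\|_{\infty}}$, and Assumption~\ref{as1}-(b) yields $(\mathrm b)\le\tfrac{2K_1}{\sqrt{\rho}}\sqrt{\|\hat{Q}_k-Q_k\|_{\infty}}$. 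Feeding in Theorem~\ref{Thm-H1} (with parameter $\varepsilon_1$ and $N\ge m_1(\varepsilon_1,\tfrac{\delta}{2K},L)$), which gives $\|\hat{Q}_k-Q_k\|_{\infty}\le\varepsilon_1+\Delta$, and using $\sqrt{\varepsilon_1+\Delta}\le\sqrt{\varepsilon_1}+\sqrt{\Delta}$ with the definition of $\varepsilon_1$, I get $(\mathrm b)\le\tfrac{(1-K_H)\varepsilon}{4}+\tfrac{2K_1\sqrt{\Delta}}{\sqrt{\rho}}$.

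Combining $(\mathrm a)$ and $(\mathrm b)$ with $\varepsilon_2=\tfrac{(1-K_H)\varepsilon}{4}$ gives the per-step constant $R=\tfrac{(1-K_H)\varepsilon}{2}+\tfrac{2K_1\sqrt{\Delta}}{\sqrt{\rho}}$, so that $R/(1-K_H)=\tfrac{\varepsilon}{2}+\tfrac{2K_1\sqrt{\Delta}}{\sqrt{\rho}(1-K_H)}$, which is precisely the $\sqrt{\Delta}$-term of the claim plus $\tfrac{\varepsilon}{2}$; the transient $K_H^K a_0$ is then driven down to order $\varepsilon$ by the hypothesis $K_H^K/(1-K_H)\le\varepsilon/2$ using the diameter bound $\|\mu_0-\mu_*\|_1\le 2$. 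For the probabilistic bookkeeping I would note that the samples drawn inside $\hat{H}_1$ and $\hat{H}_2$ at iteration $k$ are independent of $\mu_k$ (and of $\hat{Q}_k$, for the $\hat{H}_2$ samples), so the per-iteration estimates must be justified by conditioning on $(\mu_k,\hat{Q}_k)$ before applying Theorems~\ref{Thm-H1} and \ref{Thm-H2}; each of the $2K$ events fails with probability at most $\tfrac{\delta}{2K}$, and a union bound yields the overall confidence $1-\delta$.

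The main obstacle is term $(\mathrm b)$ and the correct tracking of its square-root dependence: because $H_2$ uses the learned $Q$-function only through its argmin, a sup-norm error of size $s$ in $\hat{Q}_k$ produces a policy error of size $\sqrt{s/\rho}$, which is exactly why the irreducible representation error $\Delta$ enters the final bound as $\sqrt{\Delta}$ rather than $\Delta$, and why the intermediate accuracy $\varepsilon_1$ must be chosen proportional to $\varepsilon^2$. This is where Assumption~\ref{as1}-(d) is indispensable, since without $\rho$-strong convexity the map $Q\mapsto f_Q$ is neither single-valued nor Hölder-continuous in $Q$. The remaining care point is the conditional application of the two concentration theorems along the random trajectory $\{\mu_k\}$, but this is routine once the independence of fresh samples from the current iterate is observed.
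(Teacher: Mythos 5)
Your proposal is correct and follows essentially the same route as the paper's proof: the same decomposition of the one-step error into the $\hat{H}_2$ sampling term and the propagation of the $\hat{H}_1$ error through the argmin, the same $\rho$-strong-convexity argument yielding the $\sqrt{\|\hat{Q}_k-Q_k\|_\infty}$ dependence, and the same geometric unrolling of the contraction with a union bound over the $2K$ events. The only (shared) looseness is in absorbing the transient $K_H^K\|\mu_0-\mu_*\|_1$ term, where the paper itself silently drops the diameter factor.
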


\proof
Note that for any $\mu \in \P(\sX)$, $Q \in {\cal F}$, $\hQ \in {\cal C}$, we have
\begin{align}
\|H_2(\mu,Q) - H_2(\mu,\hQ)\|_1 
&= \sum_{y \in \sX} \left| \sum_{x \in \sX} p(y|x,f_{Q}(x),\mu) \, \mu(x) - \sum_{x \in \sX} p(y|x,f_{\hQ}(x),\mu) \, \mu(x) \right| \nonumber \\
&\leq \sum_{x \in \sX} \|p(\cdot|x,f_{Q}(x),\mu)-p(\cdot|x,f_{\hQ}(x),\mu)\|_1 \, \mu(x) \nonumber \\
&\leq \sum_{x \in \sX} K_1 \, \|f_{Q}(x)-f_{\hQ}(x)\| \, \mu(x). \label{fbound}
\end{align}
For all $x \in \sX$, note that the mapping $f_{Q}(x)$ is the minimizer of $Q(x,\,\cdot\,)$ and the mapping $f_{\hQ}(x)$ is the unique minimizer of $\hQ(x,\,\cdot\,)$ by strong convexity. Let us set $a = f_{\hQ}(x)$ and $r = f_{Q}(x) - f_{\hQ}(x).$ As $a$ is the unique minimizer of a strongly convex function $\hQ(x,\,\cdot\,)$, by first-order optimality condition, we have
$$
\nabla \, \hQ\left(x,a\right) \cdot r \geq 0. 
$$
Hence, by strong convexity 
\begin{align}
\hQ(x,a+r) - \hQ(x,a) &\geq \nabla \hQ(x,a) \cdot r + \frac{\rho}{2} \|r\|^2 \nonumber \\
&\geq \frac{\rho}{2} \|r\|^2
\end{align}
For all $x \in \sX$, this leads to
\begin{align}
\|f_{Q}(x) - f_{\hQ}(x)\|^2 &\leq  \frac{2}{\rho} \, \left(\hQ(x,f_{Q}(x)) - \hQ(x,f_{\hQ}(x))\right) \nonumber \\
&= \frac{2}{\rho} \, \left(\hQ(x,f_{Q}(x)) - Q(x,f_{Q}(x)) +Q(x,f_{Q}(x)) - \hQ(x,f_{\hQ}(x))\right) \nonumber \\
&= \frac{2}{\rho} \, \left(\hQ(x,f_{Q}(x)) - Q(x,f_{Q}(x)) + \min_{a \in \sA}Q(x,a) - \min_{a \in \sA} \hQ(x,a)\right) \nonumber \\
&\leq \frac{4}{\rho} \, \|Q-\hQ\|_{\infty}. \label{perturbation2}
\end{align}
Hence, combining (\ref{fbound}) and (\ref{perturbation2}) yields
\begin{align}
\|H_2(\mu,Q) - H_2(\mu,\hQ)\|_1 \leq \frac{2 K_1}{\sqrt{\rho}} \, \sqrt{\|Q-\hQ\|_{\infty}}. \label{mainbound}
\end{align}
Using (\ref{mainbound}) and the fact that $H_1(\mu_k) \in {\cal C}$ and $\hat{H}_1[N,L](\mu_k) \in {\cal F}$, for any $k=0,\ldots,K-1$, we have
\begin{align}
\|H(\mu_k) - \hat{H}([N,L],M)(\mu_k)\|_1 
&\leq \|H_2(\mu_k,H_1(\mu_k)) - H_2(\mu_k,\hat{H}_1[N,L](\mu_k))\|_1 \nonumber \\
&+ \|H_2(\mu_k,\hat{H}_1[N,L](\mu_k)) - \hat{H}_2[M](\mu_k,\hat{H}_1[N,L](\mu_k))\|_1  \nonumber \\
&\leq \frac{2 K_1}{\sqrt{\rho}} \, \sqrt{\|H_1(\mu_k) - \hat{H}_1[N,L](\mu_k)\|_{\infty}} \nonumber \\
&+ \|H_2(\mu_k,\hat{H}_1[N,L](\mu_k)) - \hat{H}_2[M](\mu_k,\hat{H}_1[N,L](\mu_k))\|_1. \nonumber 
\end{align}
The last term is upper bounded by 
$$\frac{2 K_1 \sqrt{\varepsilon_1 + \Delta}}{\sqrt{\rho}} + \varepsilon_2$$
with probability at least $1-\frac{\delta}{K}$ by Theorem~\ref{Thm-H1} and Theorem~\ref{Thm-H2}. Therefore, with probability at least $1-\delta$ 
\begin{align}
\|\mu_K - \mu_*\|_1 &\leq \sum_{k=0}^{K-1} K_H^{K-(k+1)} \, \|\hat{H}([N,L],M)(\mu_k) - H(\mu_k)\|_1 + \|H^K(\mu_0) - \mu_*\|_1 \nonumber \\
&\leq \sum_{k=0}^{K-1} K_H^{K-(k+1)} \left(\frac{2 K_1 \sqrt{ \varepsilon_1 + \Delta}}{\sqrt{\rho}} + \varepsilon_2\right) + \frac{(K_H)^K}{1-K_H} \nonumber \\
&\leq \frac{2 K_1 \sqrt{\Delta }}{\sqrt{\rho} (1-K_H)} + \varepsilon. \nonumber 
\end{align}
This completes the proof.
\endproof

Now, we state the main result of this section. It basically states that, by using the learning algorithm, one can learn an approximate mean-field equilibrium policy. By Theorem~\ref{old-main-cor}, this gives an approximate Nash-equilibrium for the finite-agent game.

\begin{corollary}\label{main-cor}
Fix any $(\varepsilon,\delta) \in (0,1)^2$. Suppose that $K,L,N,M$ satisfy the conditions in Theorem~\ref{main-result}. Let $\mu_K$ be the output of the learning algorithm with parameters $$\left(\mu_0, K, \{[N,L]\}_{k=0}^K, \{M\}_{k=0}^{K-1}\right).$$ Define $\pi_K(x) \coloneqq \argmin_{a \in \sA} Q_K(x,a)$, where $Q_K \coloneqq \hat{H}_1([N,L])(\mu_K)$. Then, with probability at least $1-\delta(1+\frac{1}{2K})$, the policy $\pi_K$ is a $\kappa(\varepsilon,\Delta)$-mean-field equilibrium policy, where
\begin{align}
&\kappa(\varepsilon,\Delta) = \sqrt{\frac{4}{\rho}\left(\frac{\rho^2 \, (1-K_H)^2 \, \varepsilon^2}{64 (K_1)^2} + \Delta + K_{H_1} \left(\frac{2 K_1 \sqrt{\Delta }}{\sqrt{\rho} (1-K_H)} + \varepsilon\right)\right)}. \nonumber
\end{align}
Therefore, by Theorem~\ref{old-main-cor}, an $N$-tuple of policies ${\bf \pi}^{(N)} = \{\pi_K,\pi_K,\ldots,\pi_K\}$ is an $\tau \kappa(\varepsilon,\Delta) + \sigma$-Nash equilibrium for the game with $N \geq N(\sigma)$ agents. 
\end{corollary}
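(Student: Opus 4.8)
The plan is to assemble the result directly from the estimates already established. Since $\pi_*(x) = f_{Q_{\mu_*}^*}(x)$ with $Q_{\mu_*}^* = H_1(\mu_*)$, and $\pi_K(x) = f_{Q_K}(x)$ with $Q_K = \hat{H}_1[N,L](\mu_K)$, proving that $\pi_K$ is a $\kappa(\varepsilon,\Delta)$-mean-field equilibrium policy amounts, by definition, to bounding $\sup_{x\in\sX}\|f_{Q_K}(x) - f_{Q_{\mu_*}^*}(x)\|$. The route is to control this by $\|Q_K - Q_{\mu_*}^*\|_{\infty}$ and then split the latter into a statistical error from applying $\hat{H}_1$ at $\mu_K$ and a Lipschitz-in-measure error coming from $\mu_K \neq \mu_*$.

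First I would invoke the perturbation estimate (\ref{perturbation2}), which compares the minimizers of two functions when only one of them is strongly convex. Here the strongly convex function is $Q_{\mu_*}^* = H_1(\mu_*) \in \C$, whereas $Q_K \in \F$ need not be convex; setting $\hQ = Q_{\mu_*}^*$ and $Q = Q_K$ in (\ref{perturbation2}) yields, for every $x \in \sX$,
$$
\|f_{Q_K}(x) - f_{Q_{\mu_*}^*}(x)\|^2 \leq \frac{4}{\rho}\,\|Q_K - Q_{\mu_*}^*\|_{\infty}.
$$
It is essential to keep the roles in this order, since the derivation of (\ref{perturbation2}) uses the first-order optimality condition and $\rho$-strong convexity of the $\C$-member only, and $Q_K$ carries no convexity guarantee.

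Next I would bound $\|Q_K - Q_{\mu_*}^*\|_{\infty} = \|\hat{H}_1[N,L](\mu_K) - H_1(\mu_*)\|_{\infty}$ by inserting $H_1(\mu_K)$:
$$
\|Q_K - Q_{\mu_*}^*\|_{\infty} \leq \|\hat{H}_1[N,L](\mu_K) - H_1(\mu_K)\|_{\infty} + \|H_1(\mu_K) - H_1(\mu_*)\|_{\infty}.
$$
The first term is controlled by Theorem~\ref{Thm-H1} applied at $\mu_K$ with parameters $(\varepsilon_1,\delta/(2K))$: since $N \geq m_1(\varepsilon_1,\delta/(2K),L)$ and $\frac{\beta^L}{1-\beta}Q_{\bf m} \leq \frac{\varepsilon_1}{2}$, with probability at least $1-\frac{\delta}{2K}$ it is at most $\varepsilon_1 + \Delta$. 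The second term is handled by the $K_{H_1}$-Lipschitz continuity of $H_1$ from Lemma~\ref{n-lemma1} combined with the measure error of Theorem~\ref{main-result}: on the event of probability at least $1-\delta$ on which $\|\mu_K - \mu_*\|_1 \leq \frac{2K_1\sqrt{\Delta}}{\sqrt{\rho}(1-K_H)} + \varepsilon$, we obtain $\|H_1(\mu_K) - H_1(\mu_*)\|_{\infty} \leq K_{H_1}\bigl(\frac{2K_1\sqrt{\Delta}}{\sqrt{\rho}(1-K_H)} + \varepsilon\bigr)$. Substituting the value of $\varepsilon_1$ and collecting terms reproduces the quantity inside the square root defining $\kappa(\varepsilon,\Delta)$, so that $\sup_{x}\|f_{Q_K}(x) - f_{Q_{\mu_*}^*}(x)\| \leq \kappa(\varepsilon,\Delta)$, i.e. $\pi_K$ is a $\kappa(\varepsilon,\Delta)$-mean-field equilibrium policy.

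The one genuinely delicate point is the probability bookkeeping, and this is where I expect the only real care to be needed. Theorem~\ref{main-result} already spends its full budget $\delta$ across the $K$ iterations that produce $\mu_K$; forming $Q_K$ requires one additional, independent call of $\hat{H}_1$ at the final measure $\mu_K$, which fails with probability at most $\frac{\delta}{2K}$ by Theorem~\ref{Thm-H1}. A union bound over these two events gives the stated confidence $1 - \delta(1 + \frac{1}{2K})$. The Nash-equilibrium conclusion is then immediate: applying Theorem~\ref{old-main-cor} with its accuracy parameter set to $\kappa(\varepsilon,\Delta)$ and its slack set to $\sigma$ shows that the $N$-tuple $\{\pi_K,\ldots,\pi_K\}$ is a $(\tau\,\kappa(\varepsilon,\Delta) + \sigma)$-Nash equilibrium for all $N \geq N(\sigma)$.
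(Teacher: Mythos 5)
Your proposal is correct and follows essentially the same route as the paper's proof: the triangle-inequality split of $\|Q_K-H_1(\mu_*)\|_\infty$ through $H_1(\mu_K)$, control of the two pieces via Theorem~\ref{Thm-H1} and the Lipschitz continuity of $H_1$ together with Theorem~\ref{main-result}, the perturbation bound (\ref{perturbation2}) to pass from $Q$-functions to minimizers, and the union bound giving confidence $1-\delta(1+\tfrac{1}{2K})$. The only difference is cosmetic ordering (you apply the minimizer-perturbation step first and then bound the $Q$-function gap, the paper does the reverse), and your careful remark about which of the two $Q$-functions must be the strongly convex one is exactly the point the paper relies on.
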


\proof
By Theorem~\ref{Thm-H1} and Theorem~\ref{main-result}, with probability at least $1-\delta(1+\frac{1}{2K})$, we have 
\begin{align}
\|Q_K - H_1(\mu_*)\|_{\infty} &\leq \|Q_K - H_1(\mu_K)\|_{\infty} + \|H_1(\mu_K) - H_1(\mu_*)\|_{\infty} \nonumber \\
&\leq \varepsilon_1 + \Delta + K_{H_1} \|\mu_K - \mu_*\|_1 \nonumber \\
&\leq \varepsilon_1 + \Delta + K_{H_1} \left(\frac{2 K_1 \sqrt{\Delta }}{\sqrt{\rho} (1-K_H)} + \varepsilon\right) \nonumber \\
&= \frac{\rho \, (1-K_H)^2 \, \varepsilon^2}{64  (K_1)^2} + \Delta + K_{H_1} \left(\frac{2 K_1 \sqrt{\Delta }}{\sqrt{\rho} (1-K_H)} + \varepsilon\right). \nonumber
\end{align}
Let $\pi_K(x) \coloneqq \argmin_{a \in \sA} Q_K(x,a)$. Using the same analysis that leads to (\ref{perturbation2}), we can obtain the following bound since $Q_K \in {\cal F}$ and $H_1(\mu_*) \in {\cal C}$:
\begin{align}
\sup_{x \in \sX} \|\pi_K(x)-\pi_*(x)\|^2 &\leq \frac{4}{\rho} \, \|Q_K-H_1(\mu_*)\|_{\infty}. \nonumber 
\end{align}
Hence, with probability at least $1-\delta(1+\frac{1}{2K})$, the policy $\pi_K$ is a $\kappa(\varepsilon,\Delta)$-mean-field equilibrium, where
\begin{align}
&\kappa(\varepsilon,\Delta) = \sqrt{\frac{4}{\rho}\left(\frac{\rho \, (1-K_H)^2 \, \varepsilon^2}{64 (K_1)^2} + \Delta + K_{H_1} \left(\frac{2 K_1 \sqrt{\Delta }}{\sqrt{\rho} (1-K_H)} + \varepsilon\right)\right)}. \nonumber
\end{align}
This completes the proof.
\endproof

\begin{remark}
Note that, in Corollary~\ref{main-cor}, there is a constant $\Delta$, which depends on the representation error $E({\cal F})$. In general, 
$E({\cal F})$ is very small since any $Q$ function in ${\cal C}$ can be approximated quite well by functions in ${\cal F}$. Therefore, $\Delta$ is negligible. In this case, we have the following error bound:
\begin{align}
&\kappa(\varepsilon,0) = \sqrt{\frac{4}{\rho}\left(\frac{\rho \, (1-K_H)^2 \, \varepsilon^2}{64  (K_1)^2} + K_{H_1} \varepsilon\right)}. \nonumber
\end{align}
which goes to zero as $\varepsilon \rightarrow 0$.  
\end{remark}

\section{Mean-field Equilibrium Operator for Average-cost}\label{av-known-model}

In this section, we introduce the corresponding MFE operator for average-cost mean-field games. For the purpose of keeping the notation similar to the discounted-cost case while making the distinctions more apparent, we use the `$\av$' superscript to denote the related quantities in the average-cost setting. For instance, to denote the average-cost of any policy $\pi$ with initial state $x$ under state-measure $\mu$, we use $J_{\mu}^{\av}(\pi,x)$ instead of $J_{\mu}(\pi,x)$. Now, let us state the extra conditions imposed for the average-cost in addition to Assumption~\ref{as1}.

\begin{assumption}\label{av-as2}
\begin{itemize}
\item[ ]
\item[(a)] There exists a sub-probability measure $\lambda$ on $\sX$ such that 
$$
p(\,\cdot\,|x,a,\mu) \geq \lambda(\,\cdot\,)
$$
for all $x,a,\mu$. 
\item[(b)] Let $\beta^{\av} \coloneqq 1-\lambda(\sX)$ and $\displaystyle Q_{\Lip}^{\av} \coloneqq \frac{L_1}{1-K_1/2} > 0$. We assume that 
$$\frac{3K_1}{2} \left(1+\frac{K_F}{\rho}\right)+\frac{K_1K_F Q_{\Lip}^{\av}}{\rho(1-\beta^{\av})} < 1.$$ 
\end{itemize}
\end{assumption}

Note that Assumption~\ref{av-as2}-(a) is the so-called `minorization' condition. Minorization condition was used in the literature for studying the geometric ergodicity of Markov chains (see \cite[Section 3.3]{Her89}). The minorization condition is true when the transition probability satisfies conditions R0, R1(a) and R1(b) in \cite{HeMoRo91} (see also \cite[Remark 3.3]{HeMoRo91} and references therein for further conditions). In general, this condition is restrictive for unbounded state spaces, but it is quite general for compact or finite state spaces. Indeed, the minorization condition was used to study average-cost mean-field games with a compact state space in \cite[Assumption A.3]{Wie19}. Note that Assumption~\ref{av-as2}-(b) is used to ensure that MFE operator is contraction, which is crucial to establish the error analysis of the learning algorithm, and so, cannot be relaxed.

Recall that for the average-cost, given any state-measure $\mu$, the value function $J_{\mu}^{\av}$ of policy $\pi$ with initial state $x$ is given by
$$
J_{\mu}^{\av}(\pi,x) \coloneqq \limsup_{T\rightarrow\infty} \frac{1}{T} E^{\pi}\biggl[ \sum_{t=0}^{T-1} c(x(t),a(t),\mu) \, \bigg| \, x(0) = x \biggr]. 
$$
Then, the optimal value function is defined as 
$$J_{\mu}^{\av,*}(x) \coloneqq \inf_{\pi \in \Pi} J_{\mu}^{\av}(\pi,x).$$ Under Assumption~\ref{as1} and Assumption~\ref{av-as2}, it can be proved that 
$$J_{\mu}^{\av,*}(x) = J_{\mu}^{\av,*}(y) \eqqcolon J_{\mu}^{\av,*}$$ 
for all $x,y \in \sX$, for some constant $J_{\mu}^{\av,*}$; that is, the optimal value function does not depend on the initial state. Furthermore, let $h_{\mu}^*(x)$ be the unique fixed point of the $\beta$-contraction operator $T_{\mu}^{\av}$ with respect to $\|\cdot\|_{\infty}$-norm:
$$
h_{\mu}^*(x) = \min_{a \in \sA} \bigg[c(x,a,\mu) + \sum_{y \in \sX} h_{\mu}^*(y)  q(y|x,a,\mu) \bigg] \eqqcolon T_{\mu}^{\av}h_{\mu}^*(x),
$$
where 
$$q(\,\cdot\,|x,a,\mu) \coloneqq p(\,\cdot\,|x,a,\mu)-\lambda(\,\cdot\,).$$ 
Then, the pair $\left(h_{\mu}^*,\sum_{y\in\sX} h_{\mu}^*(y) \, \lambda(y)\right)$ satisfies the average-cost optimality equation (ACOE):
$$
h_{\mu}^*(x) + \sum_{y\in\sX} h_{\mu}^*(y) \, \lambda(y) = \min_{a \in \sA} \bigg[c(x,a,\mu) +\sum_{y \in \sX} h_{\mu}^*(y) \, p(y|x,a,\mu) \bigg].
$$
Therefore, $J_{\mu}^{\av,*} = \sum_{y\in\sX} h_{\mu}^*(y) \, \lambda(y)$. Additionally, if $f^*: \sX \rightarrow \sA$ attains the minimum in the ACOE, that is,
\begin{align}
&\min_{a \in \sA} \bigg[c(x,a,\mu) + \sum_{y \in \sX} h_{\mu}^*(y) \, p(y|x,a,\mu) \bigg] = c(x,f^*(x),\mu) + \sum_{y \in \sX} h_{\mu}^*(y) \, p(y|x,f^*(x),\mu) \label{av-optim}
\end{align}
for all $x \in \sX$, then the policy $\pi^*(a|x) = \delta_{f^*(x)}(a) \in \Pi_d$ is optimal for any initial distribution. We refer the reader to \cite[Chapter 3]{Her89} for basics of average-cost Markov decision processes, where these classical results can be found.

We can also obtain a similar characterization by using a $Q$-function instead of $h_{\mu}^*$. Indeed, we define the $Q$-function as
$$
Q_{\mu}^{\av,*}(x,a) = c(x,a,\mu) + \sum_{y \in \sX} h_{\mu}^*(y) \, q(y|x,a,\mu). 
$$
Note that $Q_{\mu,\min}^{\av,*}(x) \coloneqq \min_{a \in \sA} Q_{\mu}^{\av,*}(x,a) = h_{\mu}^*(x)$ for all $x \in \sX$, and so, we have
$$
Q_{\mu}^{\av,*}(x,a) = c(x,a,\mu) + \sum_{y \in \sX} Q_{\mu,\min}^{\av,*}(y) \, q(y|x,a,\mu)  \eqqcolon H_{\mu}^{\av}Q_{\mu}^{\av,*}(x,a), 
$$
where $H_{\mu}^{\av}$ is the corresponding operator on $Q$-functions. Hence, the policy $\pi^*(a|x) = \delta_{f^*(x)}(a) \in \Pi_d$ is optimal for $\mu$ and for any initial distribution, if $Q_{\mu}^{\av,*}(x,f^*(x)) = Q_{\mu,\min}^{\av,*}(x)$ for all $x \in \sX$. One can prove that $H_{\mu}^{\av}$ is a $\|\cdot\|_{\infty}$-contraction with modulus $\beta^{\av}$, and so, the unique fixed point of $H_{\mu}^{\av}$ is $Q_{\mu}^{\av,*}$. Indeed, let $Q$ and $\hQ$ be two different $Q$-functions. Then, we have
\begin{align}
\|H_{\mu}^{\av} Q - H_{\mu}^{\av} \hQ\|_{\infty} &\leq \sup_{(x,a) \in \sX\times\sA} \sum_{y \in \sX} |Q_{\min}(y)-\hQ_{\min}(y)| \, q(y|x,a,\mu) \nonumber \\
&\leq \|Q_{\min}-\hQ_{\min}\|_{\infty} \, \sup_{(x,a) \in \sX\times\sA} q(\sX|x,a,\mu) \nonumber \\
&= \beta^{\av} \, \|Q_{\min}-\hQ_{\min}\|_{\infty}. \nonumber 
\end{align}
Hence, using the Banach fixed point theorem, we can develop a $Q$-iteration algorithm to compute $Q_{\mu}^{\av,*}$, the minimum of which gives the optimal policy. The benefit of this algorithm, as in the discounted case, is that it can be adapted to a model-free setting via  $Q$-learning.

Using (\ref{tv-bound}), we now prove the following result.

\begin{lemma}\label{av-lip-value}
For any $\mu$, $Q^{\av,*}_{\mu,\min}$ is $Q_{\Lip}^{\av}$-Lipschitz continuous; that is, 
$$
|Q^{\av,*}_{\mu,\min}(x)-Q^{\av,*}_{\mu,\min}(y)| \leq Q_{\Lip}^{\av} \, d_{\sX}(x,y).
$$
\end{lemma}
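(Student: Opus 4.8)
The plan is to mimic the proof of Lemma~\ref{lip-value}, replacing the discounted Bellman operator $T_{\mu}$ with its average-cost counterpart $T_{\mu}^{\av}$ and exploiting that $Q_{\mu,\min}^{\av,*}=h_{\mu}^*$ is the unique fixed point of the $\beta^{\av}$-contraction $T_{\mu}^{\av}$. First I would fix $\mu \in \P(\sX)$ and show that $T_{\mu}^{\av}$ maps any $K$-Lipschitz function into an $(L_1+KK_1/2)$-Lipschitz function. For $z,y \in \sX$,
$$
|T_{\mu}^{\av}u(z)-T_{\mu}^{\av}u(y)| \leq \sup_{a}\Big\{|c(z,a,\mu)-c(y,a,\mu)| + \Big|\sum_x u(x)\,q(x|z,a,\mu)-\sum_x u(x)\,q(x|y,a,\mu)\Big|\Big\},
$$
and the first term is bounded by $L_1\,d_{\sX}(z,y)$ via Assumption~\ref{as1}-(a).

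The key observation for the second term is that, since $q(\cdot|x,a,\mu)=p(\cdot|x,a,\mu)-\lambda(\cdot)$ and the minorizing measure $\lambda$ depends on neither the state nor the action, the difference $q(\cdot|z,a,\mu)-q(\cdot|y,a,\mu)$ equals $p(\cdot|z,a,\mu)-p(\cdot|y,a,\mu)$; the $\lambda$-term cancels. Hence, even though $q$ is only sub-stochastic, I can apply the span-seminorm bound (\ref{tv-bound}) to the two genuine probability measures $p(\cdot|z,a,\mu)$ and $p(\cdot|y,a,\mu)$, together with Assumption~\ref{as1}-(b), to bound the second term by $\frac{K}{2}\|p(\cdot|z,a,\mu)-p(\cdot|y,a,\mu)\|_1 \le \frac{KK_1}{2}\,d_{\sX}(z,y)$, using that $u/K$ has span at most $1$.

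Then I would iterate $T_{\mu}^{\av}$ starting from an arbitrary $K$-Lipschitz function $u$ with $K<L_1$: by induction $T_{\mu}^{\av,n}u$ is $K_n$-Lipschitz, where $K_n=L_1\sum_{i=0}^{n-1}(K_1/2)^i+K(K_1/2)^n$. The geometric series converges precisely because $K_1/2<1$, which is guaranteed by the definition $Q_{\Lip}^{\av}=L_1/(1-K_1/2)>0$ in Assumption~\ref{av-as2}-(b), forcing $K_1<2$; and with $K<L_1$ the sequence $K_n$ increases to $Q_{\Lip}^{\av}$, so every iterate is $Q_{\Lip}^{\av}$-Lipschitz. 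Since $T_{\mu}^{\av}$ is a $\beta^{\av}$-contraction, the Banach fixed point theorem gives $T_{\mu}^{\av,n}u\to h_{\mu}^*=Q_{\mu,\min}^{\av,*}$ uniformly, and a uniform limit of $Q_{\Lip}^{\av}$-Lipschitz functions is $Q_{\Lip}^{\av}$-Lipschitz, which completes the argument.

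Finally, I would emphasize that the only genuine subtlety — the main point to get right, rather than a real obstacle — is the substochasticity of $q$: the argument goes through only because the $\lambda$-term cancels in differences, so that the transition-kernel Lipschitz constant $K_1$ itself (and not some modified constant) governs the contraction step, and because the well-posedness of $Q_{\Lip}^{\av}$ already encodes the condition $K_1<2$ needed for the iteration to converge.
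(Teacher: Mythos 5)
Your proposal is correct and follows essentially the same route as the paper, which simply states that the proof is identical to that of Lemma~\ref{lip-value} with the discount factor removed. You in fact supply the one detail the paper leaves implicit — that differences of the substochastic kernel $q(\cdot|\cdot,a,\mu)=p(\cdot|\cdot,a,\mu)-\lambda(\cdot)$ reduce to differences of the genuine probability measures $p$, so the span bound (\ref{tv-bound}) and Assumption~\ref{as1}-(b) apply verbatim — and you correctly note that the convergence of the Lipschitz constants $K_n \uparrow Q_{\Lip}^{\av}$ rests on $K_1<2$, which is encoded in the requirement $Q_{\Lip}^{\av}>0$ of Assumption~\ref{av-as2}-(b).
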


\begin{proof}
The proof is exactly the same with the proof of Lemma~\ref{lip-value}. The only difference is the absence of the discount factor $\beta$.
\end{proof}

Before we define MFE operator, let us describe the set of possible $Q$-functions. This set ${\cal C}^{\av}$ is the set of all $Q$-functions $Q:\sX \times \sA \rightarrow \R$ such that $\displaystyle \|Q\|_{\infty} \leq Q_{\bf m}^{\av} \coloneqq c_{\bf m}/(1-\beta^{\av})$,   $Q(x,\cdot)$ is $Q_{\Lip}^{\av}$-Lipschitz and $\rho$-strongly convex for all $x$, and the gradient $\nabla Q(x,a)$ of $Q$ with respect to $a$ satisfies the bound 
$$
\sup_{a \in \sA} \|\nabla Q(x,a) - \nabla Q(\hat{x},a)\| \leq K_F, \, \forall x, \hat{x}.
$$ 
Now, we can define the MFE operator $H^{\av}$. The operator $H^{\av}$ is very similar to $H$; that is, it is a composition of two operators, where the first operator $H_1^{\av}: \P(\sX) \rightarrow {\cal C}^{\av}$ is defined as $H_1^{\av}(\mu) = Q_{\mu}^{\av,*}$ (the unique fixed point of the operator $H_{\mu}^{\av}$). The second operator $H_2^{\av}: \P(\sX) \times {\cal C}^{\av} \rightarrow \P(\sX)$ is defined as
$$
H_2^{\av}(\mu,Q)(\cdot) \coloneqq \sum_{x \in \sX} p(\cdot|x,f_{Q}(x),\mu) \, \mu(x),
$$
where $f_{Q}(x) \coloneqq \argmin_{a\in\sA} Q(x,a)$ is the unique minimizer by $\rho$-strong convexity of $Q$, for any $Q \in {\cal C}^{\av}$. Note that we indeed have $H_2^{\av} = H_2$, where $H_2$ is the operator that computes the new state-measure in discounted-cost. However, to be consistent with the notation used in this section, we keep $H_2^{\av}$ as it is. Using these operators, let us define the MFE operator as a composition:
$$H^{\av}: \P(\sX) \ni \mu \mapsto H_2^{\av}\left(\mu,H_1^{\av}(\mu)\right) \in \P(\sX).$$ 
Our goal is to establish that $H^{\av}$ is contraction. Using (\ref{tv-bound}) and Lemma~\ref{av-lip-value}, we can first prove that $H_1$ is Lipschitz continuous. 

\begin{lemma}\label{av-n-lemma1}
The mapping $H_1^{\av}$ is $K_{H_1^{\av}}$-Lipschitz, where
$\displaystyle K_{H_1^{\av}} \coloneqq \frac{Q_{\Lip}^{\av}}{1-\beta^{\av}}.$
\end{lemma}

\begin{proof}
The proof can be done as in the proof of Lemma~\ref{n-lemma1} by making appropriate modifications. Note that since $H_1^{\av}(\mu) \coloneqq Q_{\mu}^{\av,*}$ is the fixed point of the contraction operator $H_{\mu}^{\av}$, where $H_{\mu}^{\av}$ is given by 
\begin{align}
H_{\mu}^{\av} Q(x,a) = c(x,a,\mu) + \sum_{y \in \sX} Q_{\min}(y) \, q(y|x,a,\mu), \nonumber
\end{align}
by Assumption~\ref{as1}-(a),(b),(d), $H_{\mu}^{\av}$ maps any continuous $Q:\sX \times \sA \rightarrow \R$ into ${\cal C}^{\av}$. Hence, the fixed point $Q_{\mu}^{\av,*}$ of $H_{\mu}^{\av}$ must be in ${\cal C}^{\av}$ (see the proof of Lemma~\ref{n-lemma1}). Therefore, $H_{\mu}^{\av}$ is well-defined. 

%

Let us now prove that $H_1^{\av}$ is $K_{H_1^{\av}}$-Lipschitz. For any $\mu,\hmu \in \P(\sX)$, we have 
\begin{align}
\|H_1^{\av}(\mu) &- H_1^{\av}(\hmu)\|_{\infty} \nonumber \\
&= \sup_{x,a} \bigg| c(x,a,\mu) + \sum_{y} \hspace{-3pt} Q_{\mu,\min}^{\av,*}(y) q(y|x,a,\mu) - c(x,a,\hmu) - \sum_{y} \hspace{-3pt} Q_{\hmu,\min}^{\av,*}(y)  q(y|x,a,\hmu) \bigg| \nonumber \\
&\leq L_1 \, \|\mu-\hmu\|_1 \nonumber \\
&+ \left| \sum_{y} Q_{\mu,\min}^{\av,*}(y) q(y|x,a,\mu) - \sum_{y} Q_{\mu,\min}^{\av,*}(y) q(y|x,a,\hmu) \right| \nonumber \\
&+ \left| \sum_{y} Q_{\mu,\min}^{\av,*}(y) q(y|x,a,\hmu) - \sum_{y} Q_{\hmu,\min}^{\av,*}(y) q(y|x,a,\hmu) \right| \nonumber \\
&\leq L_1 \, \|\mu-\hmu\|_1 + Q_{\Lip}^{\av} \, K_1/2 \, \|\mu-\hmu\|_1 + \beta^{\av} \, \|Q_{\mu}^{\av,*}-Q_{\hmu}^{\av,*}\|_{\infty}, \nonumber
\end{align}
where the last inequality follows from (\ref{tv-bound}), Lemma~\ref{av-lip-value}, and the fact $q(\sX|x,a,\mu) = \beta^{\av}$ for all $x,a,\mu$. 
\end{proof} 

\noindent Now, using Lemma~\ref{av-n-lemma1}, we can prove that $H^{\av}$ is contraction.

\begin{proposition}\label{av-MFE-con}
The mapping $H^{\av}$ is a contraction with contraction  modulus $K_{H^{\av}}$, where
$$
K_{H^{\av}} \coloneqq \frac{3K_1}{2} \left(1+\frac{K_F}{\rho}\right)+\frac{K_1K_FK_{H_1^{\av}}}{\rho}. 
$$
\end{proposition}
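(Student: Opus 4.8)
The plan is to mirror the proof of Proposition~\ref{MFE-con} almost verbatim, exploiting the fact that the state-measure update operator is unchanged, i.e.\ $H_2^{\av}=H_2$. Consequently, once I establish the average-cost analogue of the perturbation estimate~(\ref{perturbation}) for the minimizer $f_{Q^{\av,*}_{\mu}}$, the remainder of the argument---the $H_2$ computation that bounds $\|H_2^{\av}(\mu,H_1^{\av}(\mu))-H_2^{\av}(\hmu,H_1^{\av}(\hmu))\|_1$ via Lemma~\ref{KoRa08} and Assumption~\ref{as1}-(b)---carries over word for word and produces the claimed constant $K_{H^{\av}}$.

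The first step is to verify that $f_{Q^{\av,*}_{\mu}}(x)=\argmin_{a\in\sA}F(x,Q^{\av,*}_{\mu,\min},\mu,a)$, where $F$ is taken with $\xi=1$. Indeed, writing $q=p-\lambda$,
\begin{align}
Q^{\av,*}_{\mu}(x,a)=c(x,a,\mu)+\sum_{y}Q^{\av,*}_{\mu,\min}(y)\,p(y|x,a,\mu)-\sum_{y}Q^{\av,*}_{\mu,\min}(y)\,\lambda(y), \nonumber
\end{align}
and the final term is a constant independent of both $x$ and $a$, hence does not affect the minimizer over $a$. Since $\|Q^{\av,*}_{\mu}\|_{\infty}\le Q^{\av}_{\bf m}=c_{\bf m}/(1-\beta^{\av})$, we have $Q^{\av,*}_{\mu,\min}\in M_{\beta^{\av}}(\sX)$, so Assumption~\ref{as1}-(d) applies to $F(x,Q^{\av,*}_{\mu,\min},\mu,\cdot)$: it is $\rho$-strongly convex with $K_F$-Lipschitz gradient in the parameter $(x,v,\mu)$.

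With this in hand, I would repeat the first-order optimality computation of~(\ref{st1})--(\ref{perturbation}) without change: setting $a=f_{Q^{\av,*}_{\mu}}(x)$ and $r=f_{Q^{\av,*}_{\hmu}}(y)-f_{Q^{\av,*}_{\mu}}(x)$, the optimality conditions at $a$ and $a+r$ combined with $\rho$-strong convexity and \cite[Lemma 3.2]{HaRa19} give a lower bound $\rho\|r\|^2$, while the gradient-Lipschitz bound gives a matching upper bound, yielding
\begin{align}
\|f_{Q^{\av,*}_{\hmu}}(y)-f_{Q^{\av,*}_{\mu}}(x)\|\le\frac{K_F}{\rho}\Bigl(d_{\sX}(x,y)+(1+K_{H_1^{\av}})\,\|\mu-\hmu\|_1\Bigr), \nonumber
\end{align}
where Lemma~\ref{av-n-lemma1} has been used to replace $\|Q^{\av,*}_{\mu}-Q^{\av,*}_{\hmu}\|_{\infty}$ by $K_{H_1^{\av}}\|\mu-\hmu\|_1$. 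Feeding this into the $H_2^{\av}=H_2$ computation exactly as in Proposition~\ref{MFE-con} produces the coefficient $K_{H^{\av}}=\tfrac{3K_1}{2}(1+K_F/\rho)+K_1K_FK_{H_1^{\av}}/\rho$; and since $K_1K_FK_{H_1^{\av}}/\rho=K_1K_FQ^{\av}_{\Lip}/(\rho(1-\beta^{\av}))$, Assumption~\ref{av-as2}-(b) guarantees $K_{H^{\av}}<1$.

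The only genuinely new point---and the step warranting care---is the constant-shift observation of the second paragraph, which is precisely what lets the minorization-adjusted kernel $q$ slot into the same optimality machinery as the discounted case, together with the check that $Q^{\av,*}_{\mu,\min}\in M_{\beta^{\av}}(\sX)$ so that Assumption~\ref{as1}-(d) is applicable. Everything downstream is a transcription of the discounted argument with $\beta$ replaced by $\beta^{\av}$ and $Q_{\Lip}$ by $Q^{\av}_{\Lip}$, so I do not anticipate any substantive obstacle beyond this bookkeeping.
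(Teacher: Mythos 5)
Your proof is correct and follows essentially the same route as the paper, which simply states that the argument of Proposition~\ref{MFE-con} carries over with $K_{H_1}$ replaced by $K_{H_1^{\av}}$. The one detail you supply that the paper leaves implicit---that $Q^{\av,*}_{\mu}(x,\cdot)$ differs from $F(x,Q^{\av,*}_{\mu,\min},\mu,\cdot)$ (with $\xi=1$) only by the constant $\sum_{y}Q^{\av,*}_{\mu,\min}(y)\,\lambda(y)$, so the strong-convexity and gradient-Lipschitz machinery of Assumption~\ref{as1}-(d) applies to the minimizer unchanged---is exactly the point that needs checking, and you handle it correctly.
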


\begin{proof}
The proof is exactly the same with the proof of Proposition~\ref{MFE-con}. The only difference is the following: we should replace $K_{H_1}$ in (\ref{perturbation}) with $K_{H_1^{\av}}$. 
\end{proof}

Now, we know that $H^{\av}$ is a contraction mapping under Assumption~\ref{as1} and Assumption~\ref{av-as2}. Therefore, by Banach fixed point theorem, $H^{\av}$ has a unique fixed point $\mu_*^{\av}$. Let 
$$Q_{\mu_*^{\av}}^{\av,*} = H_1^{\av}(\mu_*^{\av}) \quad 
\text{and} \quad 
\pi_*^{\av}(\,\cdot\,|x) = \delta_{f_{Q^{\av,*}_{\mu_*^{\av}}}(x)}(\,\cdot\,).$$ 
Then, the pair $(\pi_*^{\av},\mu_*^{\av})$ is a mean-field equilibrium since $(\mu_*^{\av},Q_{\mu_*^{\av}}^{\av,*})$ satisfy the following equations
\begin{align}
\mu_{*}^{\av}(\cdot) &= \sum_{x \in \sX} p(\cdot|x,a,\mu_*^{\av}) \, \pi_*^{\av}(a|x) \, \mu_{*}^{\av}(x), \label{av-opt2} \\
Q_{\mu_*^{\av}}^*(x,a) &= c(x,a,\mu_*^{\av}) + \sum_{y \in \sX} Q_{\mu_*^{\av},\min}^{\av,*}(y) \, q(y|x,a,\mu_*^{\av}). \label{av-opt1} 
\end{align}
Here, (\ref{av-opt1}) implies that $\pi_*^{\av} \in \Psi(\mu_*^{\av})$ since 
$$f_{Q^{\av,*}_{\mu_*^{\av}}}(x) \coloneqq \argmin Q^{\av,*} _{\mu_*^{\av}}(x,a)$$
for every $x \in \sX$, and (\ref{av-opt2}) implies $\mu_*^{\av} \in \Lambda(\pi_*^{\av})$. Hence, $(\pi_*^{\av},\mu_*^{\av})$ is a mean-field equilibrium. Therefore, since $H^{\av}$ is a contraction, we can compute this mean-field equilibrium by applying $H^{\av}$ recursively starting from arbitrary state-measure.

Note that if the transition probability $p$, the one-stage cost function $c$, and the minorizing sub-probability measure $\lambda$ are not available to the decision maker, we need to replace $H^{\av}$ with a random operator and establish a learning algorithm via this random operator. To prove the convergence of this learning algorithm, the contraction property of $H^{\av}$ is crucial, similar to the discounted-case.

\section{Finite-Agent Game for Average-cost}\label{av-sec2}

The finite-agent game model for average-cost is exactly the same with the model introduced in Section~\ref{sec2} for the discounted-cost case. The only difference is the cost function. Here, under an $N$-tuple of policies ${\boldsymbol \pi}^{(N)} \coloneqq (\pi^1,\ldots,\pi^N)$, for Agent~$i$, the average-cost is given by
\begin{align}
J_i^{\av,(N)}({\boldsymbol \pi}^{(N)}) &= \limsup_{T\rightarrow\infty} \frac{1}{T} E^{{\boldsymbol \pi}^{(N)}}\biggl[\sum_{t=0}^{T-1} c(x_{i}^N(t),a_{i}^N(t),e^{(N)}_t)\biggr]. \nonumber 
\end{align}
Using this, we define Nash equilibrium and $\delta$-Nash equilibrium similarly. 

\begin{definition}
An $N$-tuple of policies ${\boldsymbol \pi}^{(N*)}= (\pi^{1*},\ldots,\pi^{N*})$ constitutes a \emph{Nash equilibrium} if
\begin{align}
J_i^{\av,(N)}({\boldsymbol \pi}^{(N*)}) = \inf_{\pi^i \in \Pi_i} J_i^{\av,(N)}({\boldsymbol \pi}^{(N*)}_{-i},\pi^i) \nonumber
\end{align}
for each $i=1,\ldots,N$. An $N$-tuple of policies ${\boldsymbol \pi}^{(N*)}= (\pi^{1*},\ldots,\pi^{N*})$ constitutes an \emph{$\delta$-Nash equilibrium} if
\begin{align}
J_i^{\av,(N)}({\boldsymbol \pi}^{(N*)}) \leq \inf_{\pi^i \in \Pi_i} J_i^{\av,(N)}({\boldsymbol \pi}^{(N*)}_{-i},\pi^i) + \delta \nonumber
\end{align}
for each $i=1,\ldots,N$.
\end{definition}

As in the discounted-cost case, if the number of agents is large enough in the finite-agent setting, one can obtain $\delta$-Nash equilibrium by considering the infinite-population limit $N\rightarrow\infty$ of the game (i.e., mean-field game). Then, it is possible to prove that if each agent in the finite-agent $N$ game problem adopts the policy in mean-field equilibrium, the resulting $N$-tuple of policies will be an approximate Nash equilibrium for all sufficiently large $N$. This was indeed proved in \cite{Wie19,Sal20}.  
In the below theorem, we prove that if each agent in the finite-agent game model adopts the $\varepsilon$-mean-field equilibrium policy (instead of exact mean-field equilibrium policy), the resulting policy will still be an approximate Nash equilibrium for all sufficiently large $N$-agent game models.

Before we state the theorem, let us define the following constants:
\begin{align}
&C_1^{\av} \coloneqq \left(\frac{3 \, K_1}{2}  + \frac{K_1 \, K_{F}}{2\rho} \right), \, C_2^{\av} \coloneqq \frac{2 c_{\bf m} (K_1)^2}{(1-C_1^{\av}) \lambda(\sX)}, \,
C_3^{\av} \coloneqq \frac{2 c_{\bf m}}{\lambda(\sX)}.\nonumber
\end{align}
Note that by Assumption~\ref{av-as2}, the constant $C_1^{\av}$ is strictly less than $1$.

\begin{theorem}\label{av-old-main-cor}
Let $\pi_{\varepsilon}$ be an $\varepsilon$-mean-field equilibrium policy for the mean-field equilibrium $(\pi_*,\mu_*) \in \Pi_d \times \P(\sX)$ given by the unique fixed point of the MFE operator $H^{\av}$. Let $\eta_0 \in \Lambda(\pi_{\varepsilon})$. Then, for any $\delta>0$, there exists a positive integer $N(\delta)$ such that, for each $N\geq N(\delta)$, the $N$-tuple of policies ${\boldsymbol \pi}^{(N)} = \{\pi_{\varepsilon},\pi_{\varepsilon},\ldots,\pi_{\varepsilon}\}$ is a $(\delta+\tau^{\av}\varepsilon)$-Nash equilibrium for the game with $N$ agents, where $\tau^{\av} \coloneqq 2C_2^{\av}+C_3^{\av}$.
\end{theorem}

\proof
By an abuse of notation, we denote the deterministic mappings from $\sX$ to $\sA$ that induce policies $\pi_*$ and $\pi_{\varepsilon}$ as $\pi_*$ and $\pi_{\varepsilon}$ as well, respectively. 
As in the proof of Theorem~\ref{old-main-cor}, one can prove that 
\begin{align}
\|\pi_*(x)-\pi_*(y)\| &\leq \frac{K_F}{\rho} \, d_{\sX}(x,y) \quad
\text{and} \quad
\|\mu_{\varepsilon}-\mu_*\|_1 \leq \frac{K_1 \, \varepsilon}{1-C_1^{\av}}, \nonumber 
\end{align}
where $\mu_{\varepsilon} \in \Lambda(\pi_{\varepsilon})$ and $C_1^{\av} \coloneqq \left(\frac{3 \, K_1}{2}  + \frac{K_1 \, K_{F}}{2\rho} \right)$. Note that by Assumption~\ref{av-as2}, $C_1^{\av} < 1$.

For any policy $\pi \in \Pi_d$ and state measure $\mu$, Assumption~\ref{av-as2}-(a) (i.e., minorization condition) implies that there exists a unique invariant measure $\nu_{\pi,\mu} \in \P(\sX)$ of the transition probability $P_{\pi,\mu}(\,\cdot\,|x) \coloneqq \sum_x p(\,\cdot\,|x,\pi(x),\mu)$ such that for any initial state $x \in \sX$, we have 
$$
J^{\av}_{\mu}(\pi,x) = \sum_x c(x,\pi(x),\mu) \, \nu_{\pi,\mu}(x),
$$
where the last identity follows from ergodic theorem \cite[Lemma 3.3]{Her89}. Therefore, the value of any policy $\pi$ under $\mu$ does not depend on the initial state. Let us define
$$
J^{\av}_{\mu}(\pi,x) = J^{\av}_{\mu}(\pi,y) \eqqcolon J^{\av}_{\mu}(\pi), \quad \text{for all $x,y \in \sX$.}
$$
Now, fix any policy $\pi \in \Pi_d$. Then, we have 
\begin{align}
|J_{\mu_*}^{\av}(\pi)-J_{\mu_{\varepsilon}}^{\av}(\pi)| &= \left|\sum_x c(x,\pi(x),\mu_*) \, \nu_{\pi,\mu_*}(x) - \sum_x c(x,\pi(x),\mu_{\varepsilon}) \, \nu_{\pi,\mu_{\varepsilon}}(x)\right| \nonumber \\
&\leq c_{\bf m} \, \|\nu_{\pi,\mu_*}-\nu_{\pi,\mu_{\varepsilon}}\|_1.\nonumber
\end{align}
Hence, to bound $|J_{\mu_*}^{\av}(\pi)-J_{\mu_{\varepsilon}}^{\av}(\pi)|$, it is sufficient to bound $\|\nu_{\pi,\mu_*}-\nu_{\pi,\mu_{\varepsilon}}\|_1$. Note that invariant measures $\nu_{\pi,\mu_*}$ and  $\nu_{\pi,\mu_{\varepsilon}}$ satisfy the following fixed point equations
\begin{align}
\nu_{\pi,\mu_*}(\,\cdot\,) &= \sum_x p(\,\cdot\,|x,\pi(x),\mu_*) \, \nu_{\pi,\mu_*}(x) \nonumber \\
\nu_{\pi,\mu_{\varepsilon}}(\,\cdot\,) &= \sum_x p(\,\cdot\,|x,\pi(x),\mu_{\varepsilon}) \, \nu_{\pi,\mu_{\varepsilon}}(x). \nonumber
\end{align}
Hence, we have 
\begin{align}
\|\nu_{\pi,\mu_*}&-\nu_{\pi,\mu_{\varepsilon}}\|_1 = \sum_y \left|\sum_x p(y|x,\pi(x),\mu_*) \, \nu_{\pi,\mu_*}(x)-\sum_x p(y|x,\pi(x),\mu_{\varepsilon}) \, \nu_{\pi,\mu_{\varepsilon}}(x) \right| \nonumber \\
&\leq \sum_y \left|\sum_x p(y|x,\pi(x),\mu_*) \, \nu_{\pi,\mu_*}(x)-\sum_x p(y|x,\pi(x),\mu_*) \, \nu_{\pi,\mu_{\varepsilon}}(x) \right| \nonumber \\
&\phantom{xxxx}+\sum_y \left|\sum_x p(y|x,\pi(x),\mu_*) \, \nu_{\pi,\mu_{\varepsilon}}(x)-\sum_x p(y|x,\pi(x),\mu_{\varepsilon}) \, \nu_{\pi,\mu_{\varepsilon}}(x) \right| \nonumber \\
&\leq \sum_y \left|\sum_x p(y|x,\pi(x),\mu_*) \, \nu_{\pi,\mu_*}(x)-\sum_x p(y|x,\pi(x),\mu_*) \, \nu_{\pi,\mu_{\varepsilon}}(x) \right|\nonumber \\
&\phantom{xxxx}+\sum_x \|p(\,\cdot\,|x,\pi(x),\mu_*) -  p(\,\cdot\,|x,\pi(x),\mu_{\varepsilon})\|_1 \, \nu_{\pi,\mu_{\varepsilon}}(x) \nonumber \\
&\leq \sum_y \left|\sum_x p(y|x,\pi(x),\mu_*) \, \nu_{\pi,\mu_*}(x)-\sum_x p(y|x,\pi(x),\mu_*) \, \nu_{\pi,\mu_{\varepsilon}}(x) \right| + K_1 \, \|\mu_{\varepsilon}-\mu_*\|_1 \nonumber.
\end{align}
Note that \cite[Lemma 3.3]{Her89} implies that for all $x,z \in \sX$, we have
$$
\|p(\,\cdot\,|x,\pi(x),\mu_*) -  p(\,\cdot\,|z,\pi(z),\mu_*\|_1 \leq (2-\lambda(\sX)) \, d_{\sX}(x,z). 
$$
Then, by Lemma~\ref{KoRa08}, we have 
$$
\sum_y \left|\sum_x p(y|x,\pi(x),\mu_*) \, \nu_{\pi,\mu_*}(x)-\sum_x p(y|x,\pi(x),\mu_*) \, \nu_{\pi,\mu_{\varepsilon}}(x) \right| \leq \frac{2-\lambda(\sX)}{2} \, \|\nu_{\pi,\mu_*}-\nu_{\pi,\mu_{\varepsilon}}\|_1. 
$$
Since $1-\lambda(\sX)/2 < 1$, the last inequality gives the following
$$
\|\nu_{\pi,\mu_*}-\nu_{\pi,\mu_{\varepsilon}}\|_1 \leq \frac{2K_1}{\lambda(\sX)} \, \|\mu_{\varepsilon}-\mu_*\|_1
$$ 
Therefore, we obtain 
\begin{align}\label{av-nneqq1}
|J_{\mu_*}^{\av}(\pi)-J_{\mu_{\varepsilon}}^{\av}(\pi)| \leq \frac{2 c_{\bf m} (K_1)^2}{(1-C_1^{\av}) \lambda(\sX)} \varepsilon \eqqcolon C_2^{\av} \, \varepsilon. 
\end{align}
By using a similar analysis as above, we can also obtain the following
$$
\|\nu_{\pi_*,\mu_*}-\nu_{\pi_{\varepsilon},\mu_*}\|_1 \leq 
\frac{2}{\lambda(\sX)} \sum_x \|\pi_*(x)-\pi_{\varepsilon}(x)\| \, \nu_{\pi_{\varepsilon},\mu_*}(x). 
$$
Note that $\sup_x \|\pi_*(x)-\pi_{\varepsilon}(x)\| \leq \varepsilon$ as $\pi_{\varepsilon}$ is $\varepsilon$-mean-field equilibrium policy. Therefore, we obtain 
\begin{align}\label{av-nneqq2}
|J_{\mu_*}^{\av}(\pi_*)-J_{\mu_*}^{\av}(\pi_{\varepsilon})|_{\infty} \leq \frac{2 c_{\bf m}}{\lambda(\sX)}\varepsilon \eqqcolon C_3^{\av} \, \varepsilon. 
\end{align}

Note that we must prove that
\begin{align}
J_i^{\av,(N)}({\boldsymbol \pi}^{(N)}) &\leq \inf_{\pi^i \in \Pi_i} J_i^{\av,(N)}({\boldsymbol \pi}^{(N)}_{-i},\pi^i) + \tau^{\av} \, \varepsilon + \delta \label{av-old-eq13}
\end{align}
for each $i=1,\ldots,N$, when $N$ is sufficiently large. As the transition probabilities and the one-stage cost functions are the same for all agents, it is sufficient to prove (\ref{av-old-eq13}) for Agent~$1$ only. Given $\delta > 0$, for each $N\geq1$, let $\tpi^{(N)} \in \Pi_1$ be a deterministic policy such that
\begin{align*}
J_1^{\av,(N)} (\tpi^{(N)},\pi_{\varepsilon},\ldots,\pi_{\varepsilon}) < \inf_{\pi' \in \Pi_1} J_1^{\av,(N)} (\pi',\pi_{\varepsilon},\ldots,\pi_{\varepsilon}) + \frac{\delta}{3}.
\end{align*}
On the other hand, by \cite[Lemma 8]{Wie19} and \cite[Theorem 4.10]{SaBaRaSIAM} we get
\begin{align}
\lim_{N\rightarrow\infty} J_1^{\av,(N)} (\tpi^{(N)},\pi_{\varepsilon},\ldots,\pi_{\varepsilon}) &= \lim_{N\rightarrow\infty} J_{\mu_{\varepsilon}}^{\av}(\tpi^{(N)}) \nonumber \\
&\geq \lim_{N\rightarrow\infty} J_{\mu_*}^{\av}(\tpi^{(N)}) - C_2^{\av} \varepsilon \quad \text{(by (\ref{av-nneqq1}))}\nonumber \\
&\geq \inf_{\pi' \in \Pi_d} J_{\mu_*}^{\av}(\pi') - C_2^{\av} \varepsilon \nonumber \\
&= J_{\mu_*}^{\av}(\pi_*) - C_2^{\av} \varepsilon \nonumber \\
&\geq J_{\mu_*}^{\av}(\pi_{\varepsilon}) - C_2^{\av} \varepsilon - C_3^{\av} \varepsilon\quad \text{(by (\ref{av-nneqq2}))}\nonumber \\
&\geq J_{\mu_{\varepsilon}}^{\av}(\pi_{\varepsilon}) - 2C_2^{\av} \varepsilon - C_3^{\av} \varepsilon\quad \text{(by (\ref{av-nneqq1}))}\nonumber \\
&\eqqcolon J_{\mu_{\varepsilon}}^{\av}(\pi_{\varepsilon}) - \tau^{\av} \, \varepsilon \nonumber.
\end{align}
Note that by \cite[Theorem 4.10]{SaBaRaSIAM}, we also have 
$$
\lim_{N\rightarrow\infty} J_1^{\av,(N)} (\pi_{\varepsilon},\pi_{\varepsilon},\ldots,\pi_{\varepsilon}) = J_{\mu_{\varepsilon}}^{\av}(\pi_{\varepsilon}).
$$
Hence, there exists $N(\delta)$ such that for all $N\geq N(\delta)$, we have
\begin{align}
J_1^{\av,(N)} (\tpi^{(N)},\pi_{\varepsilon},\ldots,\pi_{\varepsilon}) + \frac{\delta}{3} &\geq J_{\mu_{\varepsilon}}^{\av}(\pi_{\varepsilon})- \tau^{\av} \, \varepsilon \nonumber \\
J_{\mu_{\varepsilon}}^{\av}(\pi_{\varepsilon}) + \frac{\delta}{3}&\geq J_1^{\av,(N)} (\pi_{\varepsilon},\pi_{\varepsilon},\ldots,\pi_{\varepsilon}). \nonumber  
\end{align}
Therefore, for all $N\geq N(\delta)$, we obtain
\begin{align}
\inf_{\pi' \in \Pi_1} J_1^{\av,(N)} (\pi',\pi_{\varepsilon},\ldots,\pi_{\varepsilon}) + \delta + \tau^{\av} \,\varepsilon 
&\geq J_1^{\av,(N)} (\tpi^{(N)},\pi_{\varepsilon},\ldots,\pi_{\varepsilon}) + \frac{2\delta}{3} + \tau^{\av} \, \varepsilon \nonumber \\
&\geq J_{\mu_{\varepsilon}}^{\av}(\pi_{\varepsilon}) + \frac{\delta}{3}   \nonumber \\
&\geq J_1^{\av,(N)} (\pi_{\varepsilon},\pi_{\varepsilon},\ldots,\pi_{\varepsilon}). \nonumber
\end{align}
\endproof

Theorem~\ref{av-old-main-cor} implies that, by learning $\varepsilon$-mean-field equilibrium policy in the infinite-population limit, one can obtain an approximate Nash equilibrium for the finite-agent game problem for which computing or learning the exact Nash equilibrium is in general  prohibitive. In the next section, we approximate the MFE operator $H^{\av}$ introduced in Section~\ref{av-known-model} via random operator $\hat{H}^{\av}$ to develop an algorithm for learning $\varepsilon$-mean-field equilibrium policy in the model-free setting.

\section{Learning Algorithm for Average-cost}\label{av-unknown-model}

In this section, we develop an offline learning algorithm to learn approximate mean-field equilibrium policy. Similar to the discounted-cost case, we assume that a generic agent has access to a simulator, which generates a new state $y \sim p(\,\cdot\,|x,a,\mu)$ and gives a cost $c(x,a,\mu)$ for any given state $x$, action $a$, and state-measure $\mu$.

In this learning algorithm, there are two stages in each iteration. In the first stage, we learn the $Q$-function $Q_{\mu}^{\av,*}$ upto a constant additive factor for a given $\mu$ using fitted $Q$-iteration algorithm. This stage replaces the operator $H_1^{\av}$ with a random operator $\hat{H}_1^{\av}$ that will be described below. Note that as opposed to the discounted-cost case, here, to construct the random operator $\hat{H}_1^{\av}$ that replaces the operator $H_1^{\av}$, we normally need an additional simulator that generates realizations of the minorizing sub-probability measure $\lambda$ in addition to the simulator for the transition probability $p(\,\cdot\,|x,a,\mu)$. However, this simulator is in general not available to the decision maker, since a generic agent does not know this minorizing sub-probability measure in the absence of the transition probability. Therefore, we need to modify the approach used in the discounted-cost case appropriately for the average-cost setup. Indeed, this is achieved by performing convergence analysis of the random operator $\hat{H}_1^{\av}$ using span-seminorm instead of sup-norm on $Q$-functions. Luckily, convergence analysis of the learning algorithm established using sup-norm in discounted-cost case can easily be adapted to the span-seminorm.

We select $Q$-functions from a fixed function class ${\cal F}^{\av}$ such as the set of neural networks with some fixed architecture or linear span of some finite number of basis functions. Depending on this choice, there will be an additional representation error in the learning algorithm, which is in general negligible. Let ${\cal F}_{\min}^{\av} \coloneqq \{Q_{\min}: Q \in {\cal F}^{\av}\}$.

In the second stage, we update the state-measure by approximating the transition probability via its empirical estimate. This stage replaces the operator $H_2^{\av}$ in the model-based algorithm with a random operator $\hat{H}_2^{\av}$. Indeed, since $H_2^{\av}=H_2$, we also have $\hat{H}_2^{\av}=\hat{H}_2$, and so, the error analysis of $\hat{H}_2^{\av}$ is exactly the same with the error analysis of $\hat{H}_2$.

We proceed with the definition of the random operator $\hat{H}_1^{\av}$. To describe $\hat{H}_1^{\av}$, we need to pick a probability measure $\nu$ on $\sX$ and a policy $\pi_b \in \Pi$. Indeed, we can choose $\nu$ and $\pi_b$ as in discounted-cost case. Recall the constants $\zeta_0 \coloneqq 1/\sqrt{\min_x \nu(x)}$ and $\pi_0 \coloneqq \inf_{(x,a) \in \sX\times\sA} \pi_b(a|x) > 0$. Now, we can give the definition of the random operator $\hat{H}_1^{\av}$.

\begin{algorithm}[H]
\caption{Algorithm $\hat{H}_1^{\av}$}
\label{av-H1}
\begin{algorithmic}
\small
\STATE{Input $\mu$, Data size $N$, Number of iterations $L$ }
\STATE{Generate i.i.d. samples $\{(x_t,a_t,c_t,y_{t+1})_{t=1}^N\}$ using
$$
x_t \sim \nu, \, a_t \sim \pi_b(\cdot|x_t), \, c_t = c(x_t,a_t,\mu), \, y_{t+1} \sim p(\cdot|x_t,a_t,\mu)
$$
}
\STATE{Start with $Q_0 = 0$}
\FOR{$l=0,\ldots,L-1$}
\STATE{ 
\begin{align}
&\hspace{-20pt}Q_{l+1} = \argmin_{f \in {\cal F}} \frac{1}{N} \sum_{t=1}^N \displaystyle \frac{1}{m(\sA) \, \pi_b(a_t|x_t)} \left| f(x_t,a_t) - \left[c_t + \min_{a' \in \sA} Q_l(y_{t+1},a') \right]\right|^2  \nonumber 
\end{align}
}
\ENDFOR
\RETURN{$Q_L$}
\end{algorithmic}
\end{algorithm}
\normalsize

Note that if we used the same method as in the discounted-cost case, we should have generated $y_{t+1}$ using $q(\,\cdot\,|x,a,\mu) \coloneqq p(\,\cdot\,|x,a,\mu)-\lambda(\,\cdot\,)$ instead of $p(\,\cdot\,|x,a,\mu)$ since $H_1^{\av}(\mu)$ gives the unique fixed point of the contraction operator $H_{\mu}^{\av}$ on $Q$-functions given by
$$
H_{\mu}^{\av} Q(x,a) = c(x,a,\mu) + \sum_y Q_{\min}(y) \, q(y|x,a,\mu).
$$
However, in general, a generic agent does not have access to a simulator for $\lambda$, and so, we must construct the algorithm as above using the simulator for $p(\,\cdot\,|x,a,\mu)$. As a consequence of this, we perform the error analysis of the above learning algorithm in terms of span-seminorm instead of sup-norm. To this end, for any $\mu$, define the following operator on $Q$-functions:
$$
R_{\mu}^{\av} Q(x,a) \coloneqq c(x,a,\mu) + \sum_{y\in\sX} Q_{\min}(y) \, p(dy|x,a,\mu). 
$$ 
The operator $R_{\mu}^{\av}$ is different from $H_{\mu}^{\av}$ in this case, and it is used in the proof of the error analysis of $\hat{H}_1^{\av}$ in place of $H_{\mu}^{\av}$.

To do error analysis of $\hat{H}_1^{\av}$, we need to define the following constants:
\small
\begin{align}
&E({\cal F})^{\av} \coloneqq \sup_{\mu \in \P(\sX)} \sup_{Q \in {\cal F}^{\av}} \inf_{Q' \in {\cal F}^{\av}} \|Q'-R_{\mu}^{\av}Q\|_{\nu}, \, L_{\bf m}^{\av} \coloneqq 2 Q_{\bf m}^{\av} + c_{\bf m}, \, C^{\av} \coloneqq \frac{(L_{\bf m}^{\av})^2}{m(\sA) \, \pi_0} \nonumber \\
&\Upsilon^{\av} = 8 \, e^2 \, (V_{{\cal F}^{\av}}+1) \, (V_{{\cal F}_{\min}^{\av}}+1) \, \left(\frac{128 e Q_{\bf m}^{\av} L_{\bf m}^{\av}}{m(\sA) \pi_0}\right)^{V_{{\cal F}^{\av}}+V_{{\cal F}_{\min}^{\av}}}, \, V^{\av} = V_{{\cal F}^{\av}}+V_{{\cal F}_{\min}^{\av}}, \, \gamma^{\av} = 512 (C^{\av})^2 \nonumber \\
&\Delta^{\av} \coloneqq \frac{2}{1-\tbeta} \left[ \frac{m(\sA) (\dim_{\sA}+1)!\zeta_0}{\alpha (2/Q_{\Lip}^{\av})^{\dim_{\sA}}} \, E({\cal F})^{\av} \right]^{\frac{1}{\dim_{\sA}+1}}, \, \Lambda^{\av} \coloneqq \frac{2}{1-\tbeta} \left[ \frac{ m(\sA) (\dim_{\sA}+1)!\zeta_0}{\alpha (2/Q_{\Lip}^{\av})^{\dim_{\sA}}} \right]^{\frac{1}{\dim_{\sA}+1}}, \nonumber 
\end{align}
\normalsize
where 
$$
(0,1) \ni \tbeta \coloneqq 1 - \lambda(\sX)/2 \geq \beta^{\av} \coloneqq 1 - \lambda(\sX).
$$

The below theorem gives the error analysis of the algorithm $\hat{H}_1$. Before stating it, let us recall the definition of span-seminorm of any function $g:\sE\rightarrow\R$ defined on some set $\sE$:
$$
\spn(g) \coloneqq \sup_{e \in \sE} g(e) - \inf_{e \in \sE} g(e). 
$$
It is a seminorm because $\spn(g)=0$ if and only if $g$ is a constant function. Moreover, 
\begin{align}
\spn(g) &\coloneqq \sup_{e \in \sE} g(e) - \inf_{e \in \sE} g(e) \nonumber \\
&= \sup_{e \in \sE} g(e) + \sup_{e \in \sE} -g(e) \nonumber \\
&\leq 2 \, \|g\|_{\infty}.\nonumber
\end{align} 
Hence, we can upper bound span-seminorm via sup-norm.

\begin{theorem}\label{av-Thm-H1}
For any $(\varepsilon,\delta) \in (0,1)^2$, with probability at least $1-\delta$, we have 
$$
\spn\left(\hat{H}_1^{\av}[N,L](\mu) - H_1^{\av}(\mu)\right) \leq \varepsilon + \Delta^{\av}
$$
if $\frac{4\tbeta^L}{1-\tbeta} \, Q_{\bf m}^{\av} < \frac{\varepsilon}{2}$ and $N \geq m_1^{\av}(\varepsilon,\delta,L)$, where 
$$
m_1^{\av}(\varepsilon,\delta,L) \coloneqq \frac{\gamma^{\av} (2 \Lambda^{\av})^{4(\dim_{\sA}+1)}}{\varepsilon^{4(\dim_{\sA}+1)}} \, \ln\left(\frac{\Upsilon^{\av} (2 \Lambda^{\av})^{2V^{\av}(\dim_{\sA}+1)} L}{\delta \varepsilon^{2V^{\av}(\dim_{\sA}+1)}}\right). 
$$
Here, the constant error $\Delta^{\av}$ is as a result of the representation error $E({\cal F})^{\av}$ in the algorithm, which is in general negligible.  
\end{theorem}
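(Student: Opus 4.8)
The plan is to follow the architecture of the proof of Theorem~\ref{Thm-H1} essentially verbatim, making only the two structural changes that the average-cost setting forces. The first is that the regression in Algorithm~\ref{av-H1} draws $y_{t+1}\sim p(\cdot|x_t,a_t,\mu)$ rather than from $q=p-\lambda$, so the conditional mean of its target is the operator $R_{\mu}^{\av}$, \emph{not} $H_{\mu}^{\av}$. The second is that $R_{\mu}^{\av}$ is only nonexpansive (not contractive) in $\|\cdot\|_{\infty}$, so the final error propagation must be done in the span-seminorm, in which $R_{\mu}^{\av}$ \emph{is} a contraction whose fixed point, modulo an additive constant, is exactly $H_1^{\av}(\mu)$. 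The identity $R_{\mu}^{\av}Q=H_{\mu}^{\av}Q+\sum_{y}Q_{\min}(y)\,\lambda(y)$ is what lets us pass between the operator the algorithm targets and the operator whose fixed point we want.

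First I would reproduce the concentration analysis. With $\hat{Q}_t := c_t+\min_{a'}Q(y_{t+1},a')$ one has $\rE[\hat{Q}_t\mid x_t,a_t]=c(x_t,a_t,\mu)+\sum_{y}Q_{\min}(y)\,p(y|x_t,a_t,\mu)=R_{\mu}^{\av}Q(x_t,a_t)$, so the detailed computation already carried out in Theorem~\ref{Thm-H1} gives $\rE[\hat{L}_N(f;Q)]=\|f-R_{\mu}^{\av}Q\|_{\nu}^2+L^*(Q)$. The pointwise bound $0\le l_{f,Q}\le (L_{\bf m}^{\av})^2/(m(\sA)\,\pi_0)=C^{\av}$ holds with $L_{\bf m}^{\av}=2Q_{\bf m}^{\av}+c_{\bf m}$ (the $2$ replacing $1+\beta$ because $\xi=1$ here), and Pollard's inequality combined with the covering-number estimate of Lemma~\ref{cov-num} reproduces, with the average-cost constants $\Upsilon^{\av},\gamma^{\av},V^{\av}$, the per-step guarantee: with probability at least $1-\delta'$, for every $l=0,\dots,L-1$, $\|Q_{l+1}-R_{\mu}^{\av}Q_l\|_{\nu}\le\sqrt{\varepsilon'}+E({\cal F})^{\av}$.

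The genuinely new step is the error propagation, and here I would first record the two facts that make span the right norm. (i) By the span-to-total-variation bound (\ref{tv-bound}) together with the minorization condition Assumption~\ref{av-as2}-(a) (cf. \cite[Lemma 3.3]{Her89}), $\sup\|p(\cdot|x,a,\mu)-p(\cdot|x',a',\mu)\|_1\le 2\tbeta$, so that $\spn(R_{\mu}^{\av}Q-R_{\mu}^{\av}\hQ)\le\tbeta\,\spn(Q_{\min}-\hQ_{\min})\le\tbeta\,\spn(Q-\hQ)$; that is, $R_{\mu}^{\av}$ is a $\tbeta$-contraction in span-seminorm. (ii) Since $Q_{\mu}^{\av,*}=H_{\mu}^{\av}Q_{\mu}^{\av,*}$, the identity above gives $R_{\mu}^{\av}Q_{\mu}^{\av,*}-Q_{\mu}^{\av,*}=\sum_{y}Q_{\mu,\min}^{\av,*}(y)\,\lambda(y)=J_{\mu}^{\av,*}$, a constant, so $H_1^{\av}(\mu)=Q_{\mu}^{\av,*}$ is a span-fixed-point of $R_{\mu}^{\av}$. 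Telescoping the iteration in span and using $\spn(\cdot)\le 2\|\cdot\|_{\infty}$ then yields
\begin{align}
\spn\!\left(Q_L-H_1^{\av}(\mu)\right)\le\sum_{l=0}^{L-1}\tbeta^{L-1-l}\,2\,\|Q_{l+1}-R_{\mu}^{\av}Q_l\|_{\infty}+\tbeta^{L}\,2\,Q_{\bf m}^{\av}.\nonumber
\end{align}
Converting each $\|Q_{l+1}-R_{\mu}^{\av}Q_l\|_{\infty}$ from its $\nu$-norm bound via Lemma~\ref{li-l2} (with $Q_{\Lip}^{\av}$ in place of $Q_{\Lip}$, noting $R_{\mu}^{\av}Q_l$ inherits the Lipschitz structure needed), it is precisely the factor $2$ from $\spn\le2\|\cdot\|_{\infty}$ that turns $\Delta,\Lambda$ into $\Delta^{\av},\Lambda^{\av}$. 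Summing the geometric series, then choosing $\delta'=\delta$, $\Lambda^{\av}(\varepsilon')^{1/(2(\dim_{\sA}+1))}=\varepsilon/2$, and $L$ large enough that the $\tbeta^{L}$ tail term falls below $\varepsilon/2$, reproduces the stated sample size $m_1^{\av}(\varepsilon,\delta,L)$ and the final bound $\varepsilon+\Delta^{\av}$.

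I expect the main obstacle to be facts (i)–(ii): justifying that the algorithm, which can only target $R_{\mu}^{\av}$ because the decision maker has no simulator for the minorizing measure $\lambda$, nonetheless recovers $H_1^{\av}(\mu)$ up to an additive constant, and that this constant is invisible in the span-seminorm. Everything hinges on the minorization condition supplying a Dobrushin-type span contraction for $R_{\mu}^{\av}$; once this replaces the sup-norm contraction of $H_{\mu}$ used in the discounted argument, the remaining estimates are a routine transcription of Theorem~\ref{Thm-H1}.
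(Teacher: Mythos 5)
Your proposal is correct and follows essentially the same route as the paper's proof: the same concentration analysis targeting $R_{\mu}^{\av}$ (whose conditional-mean identity the paper establishes exactly as you do), the same two key observations that $R_{\mu}^{\av}$ is a $\tbeta$-span-contraction via the minorization condition and that $H_1^{\av}(\mu)$ is its span-fixed point up to the additive constant $\sum_y Q_{\mu,\min}^{\av,*}(y)\lambda(y)$, and the same telescoping in span-seminorm followed by $\spn(\cdot)\le 2\|\cdot\|_{\infty}$ and Lemma~\ref{li-l2}. The only (immaterial) deviation is that you bound the tail term by $2\tbeta^L Q_{\bf m}^{\av}$ rather than the paper's looser $\tfrac{4\tbeta^L}{1-\tbeta}Q_{\bf m}^{\av}$, which is still covered by the theorem's hypothesis on $L$.
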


\begin{proof}
Recall the definition of $\|\,\cdot\,\|_{\nu}$-norm on $Q$-functions:
$$
\|Q\|_{\nu}^2 \coloneqq \sum_{x \in \sX} \int_{\sA} Q(x,a)^2 \, m_{\sA}(da) \, \nu(x). 
$$
For any $\mu$, recall also the definition of the operator on $Q$-functions:
$$
R_{\mu}^{\av} Q(x,a) \coloneqq c(x,a,\mu) + \sum_{y\in\sX} Q_{\min}(y) \, p(dy|x,a,\mu). 
$$
This is very similar to the operator $H_{\mu}^{\av}$, but it is not $\beta^{\av}$-contraction with respect to sup-norm. Indeed, the operator $R_{\mu}^{\av}$ is $\tbeta$-contraction with respect to span-seminorm: 
$$
\spn(R_{\mu}^{\av}Q_1-R_{\mu}^{\av}Q_2) \leq \tbeta \, \spn(Q_1-Q_2),
$$
where 
$$
\tbeta \coloneqq 1 - \lambda(\sX)/2 \geq \beta^{\av} \coloneqq 1 - \lambda(\sX).
$$
Indeed, for any function $v:\sX\rightarrow\R$, one can prove that (see \cite[proof of Lemma 3.3 and proof of Lemma 3.5]{Her89})
\begin{align}
\sum_y v(y) \, p(y|x,a,\mu) - \sum_y v(y) \, p(y|x',a',\mu) \leq \tbeta \, \spn(v). \label{vvv}
\end{align}
Now let $Q_1$ and $Q_2$ be two $Q$-functions. Then, for any $(x,a)$ and $(x',a')$, we have 
\begin{align}
(R_{\mu}^{\av} Q_1 &- R_{\mu}^{\av} Q_2)(x,a)-(R_{\mu}^{\av} Q_1 - R_{\mu}^{\av} Q_2)(x',a') \nonumber \\
&= \sum_y \left\{ Q_{1,\min}(y)-Q_{2,\min}(y) \right\} \, p(y|x,a,\mu)-\sum_y \left\{ Q_{1,\min}(y)-Q_{2,\min}(y) \right\} \, p(y|x',a',\mu) \nonumber \\
&\leq \tbeta \, \spn(Q_{1,\min}-Q_{2,\min}) \quad \text{(by (\ref{vvv}))} \nonumber \\
&\leq \tbeta \, \spn(Q_{1}-Q_{2}). \nonumber 
\end{align}
This implies that 
$$
\spn(R_{\mu}^{\av}Q_1-R_{\mu}^{\av}Q_2) \leq \tbeta \, \spn(Q_1-Q_2),
$$
which means that $R_{\mu}^{\av}$ is span-seminorm $\tbeta$-contraction. Moreover, $H_1^{\av}(\mu) \coloneqq Q^{\av,*}_{\mu}$ is a fixed point of $R_{\mu}^{\av}$ with respect to span-seminorm; that is,
$$
\spn(R_{\mu}^{\av} H_1^{\av}(\mu)-H_1^{\av}(\mu)) = 0.  
$$
Indeed, for all $x,a$, we have 
\begin{align}
R_{\mu}^{\av} H_1^{\av}(\mu)(x,a) &- H_1^{\av}(\mu)(x,a) \nonumber \\
&= c(x,a,\mu) + \sum_y Q^{\av,*}_{\mu,\min} \, p(y|x,a,\mu) -c(x,a,\mu) - \sum_y Q^{\av,*}_{\mu,\min} \, q(y|x,a,\mu) \nonumber\\
&= \sum_y Q^{\av,*}_{\mu,\min} \, \lambda(y) \,\,\, \text{(i.e., constant)} \nonumber
\end{align}
and so, $\spn(R_{\mu}^{\av} H_1^{\av}(\mu)-H_1^{\av}(\mu)) = 0$. Hence, $H_1^{\av}(\mu)$ is a fixed point of $R_{\mu}^{\av}$ with respect to span-seminorm. Since $R_{\mu}^{\av}$ is also $\tbeta$-contraction with respect to span-seminorm, one can also prove that for all $L>1$, we have
\begin{align}
\spn((R_{\mu}^{\av})^L Q - H_1^{\av}(\mu)) &\leq \frac{\tbeta^L}{1-\tbeta} \spn(Q-R_{\mu}^{\av}Q) \nonumber \\
&\leq \frac{2\tbeta^L}{1-\tbeta} \|Q-R_{\mu}^{\av}Q\|_{\infty} \nonumber \\
&\leq \frac{4\tbeta^L}{1-\tbeta} Q_{\bf m}^{\av} \nonumber 
\end{align}
for any $Q \in {\cal C}^{\av}$ as $\|Q\|_{\infty} \leq Q_{\bf m}^{\av}$. 

Now, using above results, we easily complete the proof by using the same techniques as in the proof of Theorem~\ref{Thm-H1}. Let $Q_l$ be the random $Q$-function at the $l^{\text{th}}$-step of the algorithm. First, we find an upper bound to the following probability
$$
P_0 \coloneqq \rP\left( \|Q_{l+1}-R_{\mu}^{\av} Q_l\|_{\nu}^2 > (E({\cal F})^{\av})^2 + \varepsilon' \right),
$$
for a given $\varepsilon'>0$. To that end, we define
\begin{align}
&\hat{L}_N(f;Q) \coloneqq \frac{1}{N} \sum_{t=1}^N \frac{1}{m(\sA) \, \pi_b(a_t|x_t)} \left| f(x_t,a_t) - \left[c_t + \min_{a' \in \sA} Q(y_{t+1},a') \right]\right|^2. \nonumber 
\end{align}
As in the proof of Theorem~\ref{Thm-H1}, one can show that  
$$
\rE \left[ \hat{L}_N(f;Q) \right] = \|f-R_{\mu}^{\av}Q\|_{\nu}^2 + L^{\av,*}(Q) \eqqcolon L^{\av}(f;Q),
$$
where $L^{\av,*}(Q)$ is some quantity independent of $f$.

Now, using exactly the same steps as in the proof of Theorem~\ref{Thm-H1}, we can obtain the following bound on the probability $P_0$:
\begin{align}
P_0 \leq \Upsilon^{\av} \, \varepsilon'^{-V^{\av}} \, e^{\frac{-N \varepsilon'^2}{\gamma^{\av}}} \eqqcolon \frac{\delta'}{L}. \label{av-third}
\end{align}
The only difference is that in this case, we take $\beta=1$. 
Hence, for each $l=0,\ldots,L-1$, with probability at most $\frac{\delta'}{L}$
$$
\|Q_{l+1} - R_{\mu}^{\av} Q_l\|_{\nu}^2 > \varepsilon' + (E({\cal F})^{\av})^2.
$$
This implies that with probability at most $\frac{\delta'}{L}$
$$
\|Q_{l+1} - R_{\mu}^{\av} Q_l\|_{\nu} > \sqrt{\varepsilon'} + E({\cal F})^{\av}.
$$
Using this and the fact that $R_{\mu}^{\av}$ is $\tbeta$-contraction with respect to span-seminorm, we can conclude that with probability at least $1-\delta'$, we have
\begin{align}
&\spn(Q_L-H_1^{\av}(\mu)) \leq \sum_{l=0}^{L-1} \tbeta^{L-(l+1)} \, \spn(Q_{l+1} - R_{\mu}^{\av} Q_l) + \spn((R_{\mu}^{\av})^L Q_0 - H_1^{\av}(\mu)) \nonumber \\
&\leq 2 \left(\sum_{l=0}^{L-1} \tbeta^{L-(l+1)} \, \|Q_{l+1} - R_{\mu}^{\av} Q_l\|_{\infty} + \frac{2\tbeta^L}{1-\tbeta} \, Q_{\bf m}^{\av} \right) \nonumber \\
&\overset{(I)}{\leq} 2 \sum_{l=0}^{L-1} \tbeta^{L-(l+1)} \, \left[ \frac{m(\sA) (\dim_{\sA}+1)!\zeta_0}{\alpha (2/Q_{\Lip}^{\av})^{\dim_{\sA}}} \, \|Q_{l+1} - R_{\mu}^{\av} Q_l\|_{\nu} \right]^{\frac{1}{\dim_{\sA}+1}} +\frac{4\tbeta^L}{1-\tbeta} \, Q_{\bf m}^{\av}\nonumber \\
&\leq 2 \sum_{l=0}^{L-1} \tbeta^{L-(l+1)} \, \left[ \frac{ m(\sA) (\dim_{\sA}+1)!\zeta_0}{\alpha (2/Q_{\Lip}^{\av})^{\dim_{\sA}}} \, (\sqrt{\varepsilon'}+E({\cal F})^{\av}) \right]^{\frac{1}{\dim_{\sA}+1}} +\frac{4\tbeta^L}{1-\tbeta} \, Q_{\bf m}^{\av}\nonumber \\
&\leq \frac{2}{1-\tbeta} \bigg( \left[ \frac{m(\sA) (\dim_{\sA}+1)!\zeta_0}{\alpha (2/Q_{\Lip}^{\av})^{\dim_{\sA}}} \, E({\cal F})^{\av} \right]^{\frac{1}{\dim_{\sA}+1}} \nonumber \\
&\phantom{xxxxxxxxxxxx}+  \left[ \frac{ m(\sA) (\dim_{\sA}+1)!\zeta_0}{\alpha (2/Q_{\Lip}^{\av})^{\dim_{\sA}}} \right]^{\frac{1}{\dim_{\sA}+1}} \varepsilon'^{\frac{1}{2(\dim_{\sA}+1)}}\bigg) +\frac{4\tbeta^L}{1-\tbeta} \, Q_{\bf m}^{\av}, \nonumber
\end{align}
where (I) follows from Lemma~\ref{li-l2}. Therefore, with probability at least $1-\delta'$, we have
\begin{align}
\spn(Q_L-H_1^{\av}(\mu)) \leq \Lambda^{\av} \varepsilon'^{\frac{1}{2(\dim_{\sA}+1)}} + \Delta^{\av} + \frac{4\tbeta^L}{1-\tbeta} \, Q_{\bf m}^{\av}. \label{av-fourth}
\end{align}
Now, the result follows by picking $\delta = \delta' \coloneqq L \, \Upsilon^{\av} \, \varepsilon'^{-V^{\av}} \, e^{\frac{-N \varepsilon'^2}{\gamma^{\av}}}$, $\Lambda^{\av} \varepsilon'^{\frac{1}{2(\dim_{\sA}+1)}} = \varepsilon/2$, and $\frac{4\tbeta^L}{1-\tbeta} \, Q_{\bf m}^{\av} = \varepsilon/2$.  
\end{proof}

We now give the description of the random operator $\hat{H}_2^{\av}$. In this algorithm, the goal is to replace the operator $H_2^{\av}$, which gives the next state-measure, with $\hat{H}_2^{\av}$. Since $H_2^{\av}=H_2$, we also have $\hat{H}_2^{\av}=\hat{H}_2$. Therefore,  the error analysis of $\hat{H}_2^{\av}$ is exactly the same with Theorem~\ref{Thm-H2}.

\begin{algorithm}[H]
\caption{Algorithm $\hat{H}_2^{\av}$}
\label{av-H2}
\begin{algorithmic}
\small
\STATE{Inputs $(\mu,Q)$, Data size $M$, Number of iterations $|\sX|$}
\FOR{$x \in \sX$}
\STATE{generate i.i.d. samples $\{y_t^x\}_{t=1}^M$ using
$$
y_t^x \sim p(\cdot|x,f_{Q}(x),\mu)
$$
and define 
$$
p_M(\cdot|x,f_{Q}(x),\mu) = \frac{1}{M} \sum_{t=1}^M \delta_{y_t^x}(\cdot).
$$
}
\ENDFOR
\RETURN{$\sum_{x \in \sX} p_M(\cdot|x,f_{Q}(x),\mu) \, \mu(x)$}
\end{algorithmic}
\end{algorithm}
\normalsize

This is the error analysis of the random operator $\hat{H}_2^{\av}$.

\begin{theorem}\label{av-Thm-H2}
For any $(\varepsilon,\delta) \in (0,1)^2$, with probability at least $1-\delta$
$$
\left\|\hat{H}_2^{\av}[M](\mu,Q) - H_2^{\av}(\mu,Q) \right\|_1 \leq \varepsilon 
$$
if $M \geq m_2^{\av}(\varepsilon,\delta)$, where 
$$
m_2^{\av}(\varepsilon,\delta) \coloneqq \frac{|\sX|^2}{\varepsilon^2} \, \ln\left(\frac{2 \, |\sX|^2}{\delta}\right). 
$$
\end{theorem}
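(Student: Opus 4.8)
The plan is to exploit the fact, already noted in the text, that the second-stage operators coincide with their discounted-cost counterparts: $H_2^{\av}=H_2$ and the sampling procedure in Algorithm~\ref{av-H2} is identical to that in Algorithm~\ref{H2}, so $\hat{H}_2^{\av}=\hat{H}_2$. Consequently the statement is word-for-word Theorem~\ref{Thm-H2}, and the cleanest route is simply to invoke that theorem. For completeness I would nonetheless reproduce the short self-contained argument, which is a one-coordinate concentration bound followed by a union bound.

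First I would fix the inputs $\mu \in \P(\sX)$ and $Q$, so that the minimizers $f_{Q}(x)=\argmin_{a\in\sA}Q(x,a)$ are determined for every $x$. For each $x$, the samples $\{y_t^x\}_{t=1}^M$ are i.i.d. draws from $p(\cdot|x,f_{Q}(x),\mu)$, so for a fixed target state $y$ the indicators $1_{\{y_t^x=y\}}$ are i.i.d. Bernoulli with mean $p(y|x,f_{Q}(x),\mu)$, and their empirical average is exactly $p_M(y|x,f_{Q}(x),\mu)$. Applying Hoeffding's inequality \cite[Theorem 2.1]{HaRa19} at tolerance $\varepsilon/|\sX|$ gives, for every pair $(x,y)\in\sX\times\sX$,
$$
\rP\left(\left|p_M(y|x,f_{Q}(x),\mu)-p(y|x,f_{Q}(x),\mu)\right|>\frac{\varepsilon}{|\sX|}\right)\leq 2\,e^{\frac{-M\varepsilon^2}{|\sX|^2}}.
$$

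Next I would aggregate. By the triangle inequality,
$$
\left\|\hat{H}_2^{\av}[M](\mu,Q)-H_2^{\av}(\mu,Q)\right\|_1 \leq \sum_{x,y\in\sX}\left|p_M(y|x,f_{Q}(x),\mu)-p(y|x,f_{Q}(x),\mu)\right|\,\mu(x),
$$
so on the event that all $|\sX|^2$ coordinate deviations are at most $\varepsilon/|\sX|$ the right-hand side is bounded by $\sum_x |\sX|\,(\varepsilon/|\sX|)\,\mu(x)=\varepsilon$. A union bound over the $|\sX|^2$ pairs shows this event fails with probability at most $2|\sX|^2 e^{-M\varepsilon^2/|\sX|^2}$. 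Setting $\delta=2|\sX|^2 e^{-M\varepsilon^2/|\sX|^2}$ and solving for $M$ yields the stated sample complexity $m_2^{\av}(\varepsilon,\delta)$.

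There is essentially no obstacle here: because $|\sX|<\infty$ the empirical transition kernel is an ordinary frequency estimate, so the argument is pure concentration plus a union bound and needs none of the covering-number or contraction machinery used for $\hat{H}_1^{\av}$. The only point worth flagging is that the bound is uniform in $(\mu,Q)$, since Hoeffding's constant depends only on $|\sX|$; this uniformity is exactly what lets the estimate be plugged into the overall error analysis of the learning algorithm in the same way as in the discounted-cost case.
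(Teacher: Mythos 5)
Your proposal is correct and matches the paper's argument exactly: the paper proves this theorem by simply referring to the proof of Theorem~\ref{Thm-H2}, which is precisely the per-coordinate Hoeffding bound followed by a union bound over the $|\sX|^2$ pairs that you reproduce. Nothing further is needed.
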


\begin{proof}
See the proof of Theorem~\ref{Thm-H2}. 
\end{proof}

Below, we give the overall description of the learning algorithm for the average-cost. In this algorithm, we successively apply the random operator $\hat{H}^{\av}$, which replaces MFE operator $H^{\av}$, to obtain approximate mean-field equilibrium policy.

\begin{algorithm}[H]
\caption{Learning Algorithm}
\label{av-Qit}
\begin{algorithmic}
\small
\STATE{Input $\mu_0$, Number of iterations $K$, Parameters of $\hat{H}_1^{\av}$ and $\hat{H}_2^{\av}$ $\left(\{[N_k,L_k]\}_{k=0}^{K-1},\{M_k\}_{k=0}^{K-1}\right)$}
\normalsize
\STATE{Start with $\mu_0$}
\FOR{$k=0,\ldots,K-1$}
\STATE{
$\mu_{k+1} = \hat{H}^{\av}\left([N_k,L_k],M_k\right)(\mu_k) \coloneqq \hat{H}_2^{\av}[M_k]\left(\mu_k,\hat{H}_1^{\av}[N_k,L_k](\mu_k)\right)$
}
\ENDFOR
\RETURN{$\mu_K$}
\end{algorithmic}
\end{algorithm}

Above, we have completed the error analyses of the operators $\hat{H}_1^{\av}$ and $\hat{H}_2^{\av}$ in Theorem~\ref{av-Thm-H1} and Theorem~\ref{av-Thm-H2}, respectively. Since the random operator $\hat{H}^{\av}$ is a composition of $\hat{H}_1^{\av}$ with $\hat{H}_2^{\av}$, we can obtain the following error analysis for the operator $\hat{H}^{\av}$.

\begin{theorem}\label{av-main-result}
Fix any $(\varepsilon,\delta) \in (0,1)^2$. Define 
$$
\varepsilon_1 \coloneqq \frac{\rho \, (1-K_{H^{\av}})^2 \, \varepsilon^2}{32 (K_1)^2}, \,\,
\varepsilon_2 \coloneqq \frac{(1-K_{H^{\av}}) \, \varepsilon}{4}.
$$
Let $K,L$ be such that 
\begin{align}
\frac{(K_{H^{\av}})^K}{1-K_{H^{\av}}} &\leq \frac{\varepsilon}{2}, \,\, \frac{4\tbeta^L}{1-\tbeta} Q_{\bf m}^{\av} \leq \frac{\varepsilon_1}{2}. \nonumber 
\end{align}
Then, pick $N,M$ such that
\begin{align}
N &\geq m_1^{\av}\left( \varepsilon_1,\frac{\delta}{2K},L \right), \,\, M \geq m_2^{\av}\left( \varepsilon_2,\frac{\delta}{2K} \right).
\end{align}
Let $\mu_K$ be the output of the learning algorithm $\hat{H}^{\av}$ with inputs $$\left(K, \{[N,L]\}_{k=0}^K, \{M\}_{k=0}^{K-1}, \mu_0 \right).$$ Then, with probability at least $1-\delta$
$$
\|\mu_K - \mu_*^{\av}\|_1 \leq \frac{K_1 \sqrt{2 \Delta^{\av} }}{\sqrt{\rho} (1-K_{H^{\av}})} + \varepsilon,
$$
where $\mu_*^{\av}$ is the state-measure in mean-field equilibrium given by the MFE operator $H^{\av}$. 
\end{theorem}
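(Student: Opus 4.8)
The plan is to follow the discounted-cost proof of Theorem~\ref{main-result} essentially line by line, with one essential modification: since the error analysis of the first-stage random operator is now controlled in span-seminorm rather than sup-norm (Theorem~\ref{av-Thm-H1}), every place where the discounted proof used $\|\cdot\|_{\infty}$ on $Q$-functions must be replaced by $\spn(\cdot)$. Concretely, I would first record the deterministic per-step perturbation estimate relating $H_2^{\av}$ to the span of its $Q$-argument, then combine it with Theorem~\ref{av-Thm-H1} and Theorem~\ref{av-Thm-H2} to control the one-step error of $\hat{H}^{\av}$, and finally telescope using the contraction of $H^{\av}$ from Proposition~\ref{av-MFE-con}.

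The key step, and the main obstacle, is the span-seminorm analog of (\ref{mainbound}), namely
$$
\|H_2^{\av}(\mu,Q) - H_2^{\av}(\mu,\hQ)\|_1 \leq \frac{\sqrt{2}\,K_1}{\sqrt{\rho}}\,\sqrt{\spn(Q-\hQ)}
$$
for all $\mu \in \P(\sX)$, $Q \in {\cal F}^{\av}$, $\hQ \in {\cal C}^{\av}$. The point is that $f_Q(x)=\argmin_{a} Q(x,a)$ is unchanged when a constant is added to $Q(x,\cdot)$, so the natural quantity controlling $\|f_Q(x)-f_{\hQ}(x)\|$ is the span-seminorm, not the sup-norm. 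Repeating the strong-convexity argument of (\ref{perturbation2}), set $a=f_{\hQ}(x)$ and $r=f_Q(x)-f_{\hQ}(x)$; first-order optimality for $\hQ(x,\cdot)$ and $\rho$-strong convexity give $\tfrac{\rho}{2}\|r\|^2 \leq \hQ(x,f_Q(x))-\hQ_{\min}(x)$, which equals $-(Q-\hQ)(x,f_Q(x)) + (Q_{\min}(x)-\hQ_{\min}(x))$ since $Q(x,f_Q(x))=Q_{\min}(x)$. Now $Q_{\min}(x)-\hQ_{\min}(x)\in[\inf(Q-\hQ),\sup(Q-\hQ)]$ while $-(Q-\hQ)(x,f_Q(x))\in[-\sup(Q-\hQ),-\inf(Q-\hQ)]$, so their sum is at most $\spn(Q-\hQ)$. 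Hence $\|f_Q(x)-f_{\hQ}(x)\|^2 \leq \tfrac{2}{\rho}\spn(Q-\hQ)$, and inserting this into the Lipschitz bound Assumption~\ref{as1}-(b) for $p$, exactly as in the derivation of (\ref{fbound}), yields the displayed inequality. The constant $\tfrac{2}{\rho}$ here (in place of $\tfrac{4}{\rho}$ from the sup-norm version) is precisely what produces the $\sqrt{2\Delta^{\av}}$ and the factor $32$ in $\varepsilon_1$.

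With this estimate, for each $k$ I would write $\hat Q_k \coloneqq \hat{H}_1^{\av}[N,L](\mu_k)$ and, using $H_1^{\av}(\mu_k)\in{\cal C}^{\av}$, $\hat Q_k\in{\cal F}^{\av}$ (so both minimizers are well defined), split
$$
\|H^{\av}(\mu_k)-\hat{H}^{\av}([N,L],M)(\mu_k)\|_1 \leq \frac{\sqrt{2}\,K_1}{\sqrt{\rho}}\sqrt{\spn(H_1^{\av}(\mu_k)-\hat Q_k)} + \|H_2^{\av}(\mu_k,\hat Q_k)-\hat{H}_2^{\av}[M](\mu_k,\hat Q_k)\|_1 .
$$
By Theorem~\ref{av-Thm-H1} with parameters $(\varepsilon_1,\delta/2K)$ the span term is at most $\varepsilon_1+\Delta^{\av}$, and by Theorem~\ref{av-Thm-H2} with parameters $(\varepsilon_2,\delta/2K)$ the second term is at most $\varepsilon_2$, so the one-step error is at most $\tfrac{\sqrt{2}K_1}{\sqrt{\rho}}\sqrt{\varepsilon_1+\Delta^{\av}}+\varepsilon_2$, each bound failing with probability at most $\delta/2K$. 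A union bound over the $K$ invocations of $\hat{H}_1^{\av}$ and the $K$ invocations of $\hat{H}_2^{\av}$ makes all per-step bounds hold simultaneously with probability at least $1-\delta$.

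Finally I would telescope: since $H^{\av}$ is a $K_{H^{\av}}$-contraction (Proposition~\ref{av-MFE-con}) with fixed point $\mu_*^{\av}$,
$$
\|\mu_K-\mu_*^{\av}\|_1 \leq \sum_{k=0}^{K-1}K_{H^{\av}}^{K-(k+1)}\|\hat{H}^{\av}([N,L],M)(\mu_k)-H^{\av}(\mu_k)\|_1 + \frac{(K_{H^{\av}})^K}{1-K_{H^{\av}}} .
$$
Summing the geometric series bounds the first term by $\tfrac{1}{1-K_{H^{\av}}}\big(\tfrac{\sqrt{2}K_1}{\sqrt{\rho}}\sqrt{\varepsilon_1+\Delta^{\av}}+\varepsilon_2\big)$. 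Substituting $\varepsilon_1=\tfrac{\rho(1-K_{H^{\av}})^2\varepsilon^2}{32 K_1^2}$ and $\varepsilon_2=\tfrac{(1-K_{H^{\av}})\varepsilon}{4}$, using subadditivity $\sqrt{\varepsilon_1+\Delta^{\av}}\leq\sqrt{\varepsilon_1}+\sqrt{\Delta^{\av}}$ together with the identity $\tfrac{\sqrt{2}K_1}{\sqrt{\rho}}\sqrt{\varepsilon_1}=\varepsilon_2=\tfrac{(1-K_{H^{\av}})\varepsilon}{4}$, this first term equals $\tfrac{\varepsilon}{2}+\tfrac{K_1\sqrt{2\Delta^{\av}}}{\sqrt{\rho}(1-K_{H^{\av}})}$, while the last term is at most $\varepsilon/2$ by the hypothesis $\tfrac{(K_{H^{\av}})^K}{1-K_{H^{\av}}}\leq\tfrac{\varepsilon}{2}$. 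Adding the two contributions gives exactly $\|\mu_K-\mu_*^{\av}\|_1 \leq \tfrac{K_1\sqrt{2\Delta^{\av}}}{\sqrt{\rho}(1-K_{H^{\av}})}+\varepsilon$, as claimed.
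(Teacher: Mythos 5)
Your proposal is correct and follows essentially the same route as the paper's proof: the same strong-convexity argument yielding $\|f_Q(x)-f_{\hQ}(x)\|^2\leq\tfrac{2}{\rho}\spn(Q-\hQ)$ (hence the $\sqrt{2}$ and the factor $32$ in $\varepsilon_1$), the same one-step decomposition via Theorems~\ref{av-Thm-H1} and~\ref{av-Thm-H2} with a union bound over the $2K$ invocations, and the same telescoping under the contraction of $H^{\av}$. Your explicit final arithmetic (using $\tfrac{\sqrt{2}K_1}{\sqrt{\rho}}\sqrt{\varepsilon_1}=\varepsilon_2$ and subadditivity of the square root) is exactly what the paper leaves implicit, and it checks out.
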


\begin{proof}
The proof is similar to the proof of Theorem~\ref{main-result}. The key difference is that we perform the analysis in terms of span-seminorm in place of sup-norm. 

For any $\mu \in \P(\sX)$, $Q \in {\cal F}^{\av}$, $\hQ \in {\cal C}^{\av}$, we have
\begin{align}
\|H_2^{\av}(\mu,Q) - H_2^{\av}(\mu,\hQ)\|_1 
&= \sum_{y \in \sX} \left| \sum_{x \in \sX} p(y|x,f_{Q}(x),\mu) \, \mu(x) - \sum_{x \in \sX} p(y|x,f_{\hQ}(x),\mu) \, \mu(x) \right| \nonumber \\
&\leq \sum_{x \in \sX} K_1 \, \|f_{Q}(x)-f_{\hQ}(x)\| \, \mu(x). \label{av-fbound}
\end{align}
Now, using exactly the same steps as in the proof of Theorem~\ref{main-result}, for any $x \in \sX$, we have
$$
\|f_{Q}(x) - f_{\hQ}(x)\|^2 \leq \frac{2}{\rho} \, \left(\hQ(x,f_{Q}(x)) - \hQ(x,f_{\hQ}(x))\right) 
$$
For any $x \in \sX$, this leads to
\begin{align}
\|f_{\hQ}(x) - f_{Q}(x)\|^2 &\leq \frac{2}{\rho} \, \bigg(\hQ(x,f_{Q}(x)) - Q(x,f_{Q}(x)) +Q(x,f_{Q}(x)) - \hQ(x,f_{\hQ}(x))\bigg) \nonumber \\
&= \frac{2}{\rho} \, \bigg(\hQ(x,f_{Q}(x)) - Q(x,f_{Q}(x)) + \min_{a \in \sA}Q(x,a) - \min_{a \in \sA} \hQ(x,a)\bigg) \nonumber \\
&\leq \frac{2}{\rho} \, \bigg(\sup_{(z,a) \in \sX\times\sA} (\hQ(z,a) - Q(z,a)) + \sup_{(z,a) \in \sX\times\sA} (Q(z,a) - \hQ(z,a))\bigg) \nonumber \\
&= \frac{2}{\rho} \, \spn(Q-\hQ). \label{av-perturbation2}
\end{align}
Therefore, here, we can also perform a similar analysis as in the proof of Theorem~\ref{main-result} using span-seminorm in place of sup-norm. 

Now, combining (\ref{av-fbound}) and (\ref{av-perturbation2}) yields
\small
\begin{align}
\|H_2^{\av}(\mu,Q) - H_2^{\av}(\mu,\hQ)\|_1 \leq \frac{\sqrt{2} K_1}{\sqrt{\rho}} \, \sqrt{\spn(Q-\hQ)}. \label{av-mainbound}
\end{align}
\normalsize
Using (\ref{av-mainbound}) and the fact that $H_1^{\av}(\mu_k) \in {\cal C}^{\av}$ and $\hat{H}_1^{\av}[N,L](\mu_k)) \in {\cal F}^{\av}$, for any $k=0,\ldots,K-1$, we have
\begin{align}
\|H^{\av}(\mu_k) - \hat{H}^{\av}([N,L],M)(\mu_k)\|_1 &\leq \|H_2^{\av}(\mu_k,H_1^{\av}(\mu_k)) - H_2^{\av}(\mu_k,\hat{H}_1^{\av}[N,L](\mu_k))\|_1 \nonumber \\
&+ \|H_2^{\av}(\mu_k,\hat{H}_1^{\av}[N,L](\mu_k)) - \hat{H}_2^{\av}[M](\mu_k,\hat{H}_1^{\av}[N,L](\mu_k))\|_1  \nonumber \\
&\leq \frac{\sqrt{2}K_1}{\sqrt{\rho}} \, \sqrt{\spn(H_1^{\av}(\mu_k) - \hat{H}_1^{\av}[N,L](\mu_k))} \nonumber \\
&+ \|H_2^{\av}(\mu_k,\hat{H}_1^{\av}[N,L](\mu_k)) - \hat{H}_2^{\av}[M](\mu_k,\hat{H}_1^{\av}[N,L](\mu_k))\|_1. \nonumber 
\end{align}
The last term is upper bounded by 
$$\frac{K_1 \sqrt{2 (\varepsilon_1 + \Delta^{\av})}}{\sqrt{\rho}} + \varepsilon_2$$
with probability at least $1-\frac{\delta}{K}$ by Theorem~\ref{av-Thm-H1} and Theorem~\ref{av-Thm-H2}. Therefore, with probability at least $1-\delta$ 
\begin{align}
\|\mu_K - \mu_*^{\av}\|_1 &\leq \sum_{k=0}^{K-1} K_{H^{\av}}^{K-(k+1)} \, \|\hat{H}^{\av}([N,L],M)(\mu_k) - H^{\av}(\mu_k)\|_1 + \|(H^{\av})^K(\mu_0) - \mu_*^{\av}\|_1 \nonumber \\
&\leq \sum_{k=0}^{K-1} K_{H^{\av}}^{K-(k+1)} \left(\frac{K_1 \sqrt{2  (\varepsilon_1 + \Delta^{\av})}}{\sqrt{\rho}} + \varepsilon_2\right) + \frac{(K_{H^{\av}})^K}{1-K_{H^{\av}}} \nonumber \\
&\leq \frac{K_1 \sqrt{2\, \Delta^{\av} }}{\sqrt{\rho} (1-K_{H^{\av}})} + \varepsilon. \nonumber 
\end{align}
This completes the proof.
\end{proof}

Now, we state the main result of this section. It states that, by using a learning algorithm, one can learn approximate mean-field equilibrium policy. By Theorem~\ref{av-old-main-cor}, this gives an approximate Nash-equilibrium for the finite-agent game.

\begin{corollary}\label{av-main-cor}
Fix any $(\varepsilon,\delta) \in (0,1)^2$. Suppose that $K,L,N,M$ satisfy the conditions in Theorem~\ref{av-main-result}. Let $\mu_K$ be the output of the earning algorithm with inputs $$\left(K, \{[N,L]\}_{k=0}^K, \{M\}_{k=0}^{K-1}, \mu_0 \right).$$ Define $\pi_K(x) \coloneqq \argmin_{a \in \sA} Q_K(x,a)$, where $Q_K \coloneqq \hat{H}_1^{\av}([N,L])(\mu_K)$. Then, with probability at least $1-\delta(1+\frac{1}{2K})$, the policy $\pi_K$ is a $\kappa^{\av}(\varepsilon,\Delta)$-mean-field equilibrium policy, where
\begin{align}
&\kappa^{\av}(\varepsilon,\Delta) = \sqrt{\frac{2}{\rho}\left(\frac{\rho \, (1-K_{H^{\av}})^2 \, \varepsilon^2}{32 (K_1)^2} + \Delta + 2 K_{H_1^{\av}} \left(\frac{K_1 \sqrt{2\, \Delta }}{\sqrt{\rho} (1-K_{H^{\av}})} + \varepsilon\right)\right)}. \nonumber
\end{align}
Therefore, by Theorem~\ref{av-old-main-cor}, an $N$-tuple of policies ${\bf \pi}^{(N)} = \{\pi_K,\pi_K,\ldots,\pi_K\}$ is a $\tau^{\av} \kappa^{\av}(\varepsilon,\Delta) + \sigma$-Nash equilibrium for the game with $N \geq N(\sigma)$ agents. 
\end{corollary}

\begin{proof}
By Theorem~\ref{av-main-result}, with probability at least $1-\delta(1+\frac{1}{2K})$, we have 
\begin{align}
\spn(Q_K - H_1^{\av}(\mu_*^{\av})) &\leq \spn(Q_K - H_1^{\av}(\mu_K)) + 2 \|H_1^{\av}(\mu_K) - H_1^{\av}(\mu_*^{\av})\|_{\infty} \nonumber \\
&\leq \varepsilon_1 + \Delta^{\av} + 2 K_{H_1^{\av}} \|\mu_K - \mu_*^{\av}\|_1 \nonumber \\
&\leq \varepsilon_1 + \Delta^{\av} + 2 K_{H_1^{\av}} \left(\frac{K_1 \sqrt{2 \, \Delta^{\av} }}{\sqrt{\rho} (1-K_{H^{\av}})} + \varepsilon\right) \nonumber \\
&= \frac{\rho \, (1-K_{H^{\av}})^2 \, \varepsilon^2}{32 (K_1)^2} + \Delta^{\av} + 2 K_{H_1^{\av}} \left(\frac{K_1 \sqrt{2, \Delta^{\av} }}{\sqrt{\rho} (1-K_{H^{\av}})} + \varepsilon\right). \nonumber
\end{align}
Let $\pi_K(x) \coloneqq \argmin_{a \in \sA} Q_K(x,a)$. Using the same analysis that leads to (\ref{av-perturbation2}), we can obtain the following bound:
\begin{align}
\sup_{x \in \sX} \|\pi_K(x)-\pi_*(x)\|^2 &\leq \frac{2}{\rho} \, \spn(Q_K-H_1^{\av}(\mu_*^{\av})). \nonumber 
\end{align}
Hence, with probability at least $1-\delta(1+\frac{1}{2K})$, the policy $\pi_K$ is a $\kappa(\varepsilon,\Delta)$-mean-field equilibrium policy.
\end{proof}

\begin{remark}
Note that, in Corollary~\ref{av-main-cor}, there is a constant $\Delta^{\av}$, which depends on the representation error $E({\cal F})^{\av}$. In general, $E({\cal F})^{\av}$ is negligible. Hence, in this case, we have the following error bound:
\begin{align}
&\kappa^{\av}(\varepsilon,0) = \sqrt{\frac{2}{\rho}\left(\frac{\rho \, (1-K_{H^{\av}})^2 \, \varepsilon^2}{32 (K_1)^2} + 2 K_{H_1^{\av}} \varepsilon\right)}. \nonumber
\end{align}
which goes to zero as $\varepsilon \rightarrow 0$.  
\end{remark}

\section{Numerical Examples}\label{num_ex}

In this section, we present two numerical examples in the case of discounted cost and average cost, respectively, to demonstrate the applicability of our learning algorithm.

\subsection{Discounted Cost}\label{num_ex_discount}

We consider the mean-field game that was introduced in Example~\ref{example1}, where we take $\sX = [0, 0.1, 0.,2 , \ldots, 1]$, $\sA=[0,1]$, $c_2(a) = \rho \, a^2$, and $c_1(x,\mu) = \eta \, x \, (1-\xi \, \langle \mu \rangle)$ with $\langle \mu \rangle$ denoting the mean of $\mu$. We also take 
$\rP[h(a,\mu,w)=0.1] = \kappa \, a \, (1-\gamma \, \langle \mu \rangle)$, $\rP[h(a,\mu,w)=0] = \kappa \, a \, \gamma \, \langle \mu \rangle$, and $\rP[h(a,\mu,w)=-0.1] = \kappa \, a$; that is, the state can only go one unit up, go one unit down, or remain the same. Here, the constants $\rho, \eta, \xi, \kappa, \gamma$ are all non-negative. As the conditions (i) and (ii) are clearly satisfied in Example~\ref{example1} for these particular choices of system components, Assumption~\ref{as1} holds true in this case. In the numerical experiments, we use the following values for the parameters:
$
	\eta=2, \quad \xi = 0.4, \quad \rho=1 \quad
	\kappa=1, \quad \gamma=0.4, \quad \beta=0.9.  
$
We run the learning algorithm using the following parameters: $N=10000, L=50, M=1000, K=50$. The output of the learning algorithm contains the average of the state-measure (i.e., mean-field distribution) and mean-field equilibrium policies for states $x=0.1$ and $x=0.6$. In the fitted $Q$-iteration algorithm, we pick the function class $\F$ as two-layer neural networks with $10$ hidden units. We use neural network fitting tool of MATLAB.
In particular, we use `fittnet', `train', and `net' functions of MATLAB, where `Levenberg-Marquardt' is picked as the training algorithm and the transfer function is chosen as `hyperbolic tangent sigmoid transfer function'. The parameters of the neural network fitting tool of MATLAB are set to default values. We also run the value iteration algorithm using MFE operator $H$ to find the true average of state-measure and mean-field equilibrium policies for states $x=0.1$ and $x=0.6$. Then, we compare the learned outputs with outputs of the value iteration algorithm. Figures~\ref{meanfield_comp}, \ref{policy_comp}, and \ref{cost_comp} show this comparison. It can be seen that learned outputs converge to the outputs of the value iteration algorithm. 

\begin{figure}[h]
\centering
	\includegraphics[scale=0.5]{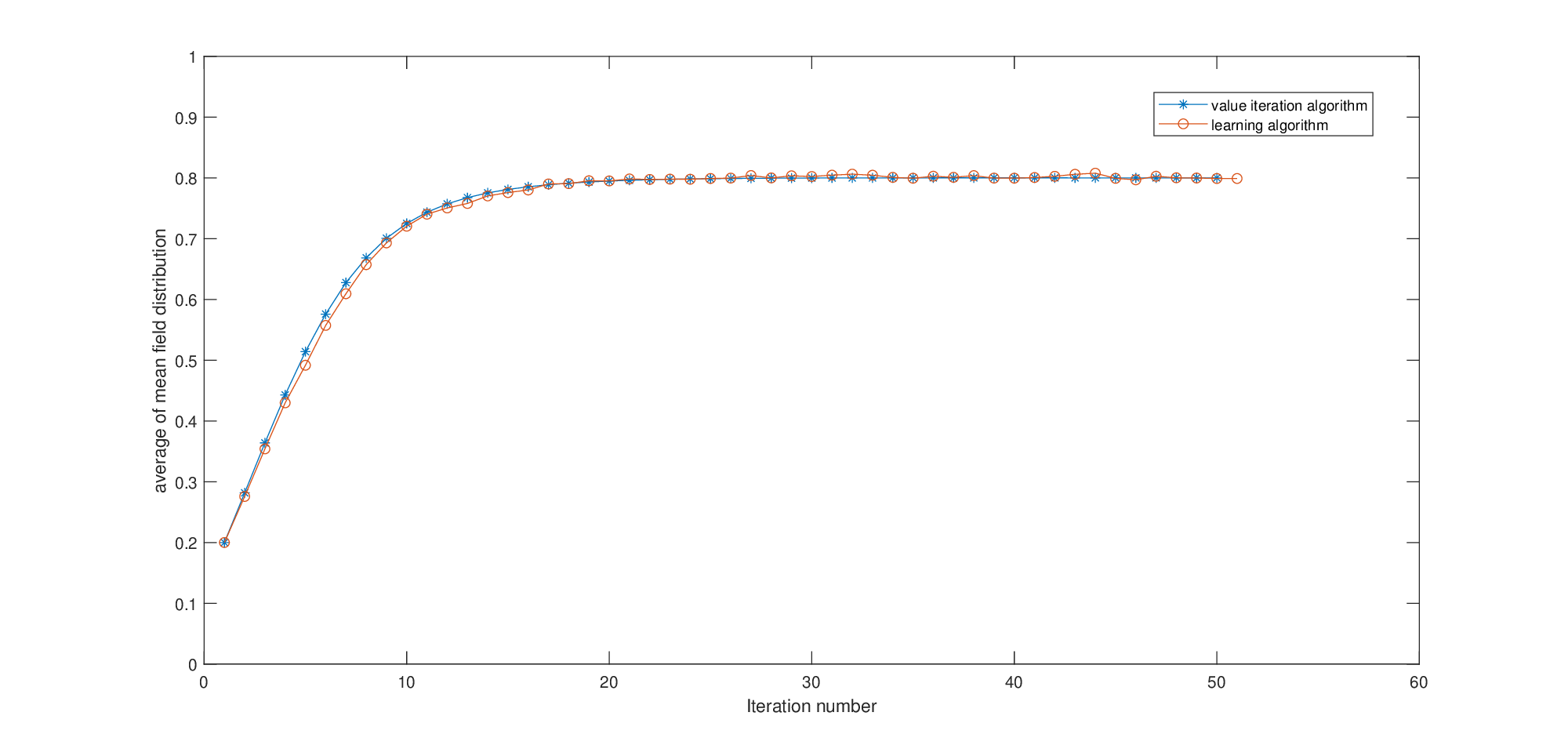}
	\caption{Comparison of state-measures: discounted-cost} \label{meanfield_comp}
\end{figure}
	
\begin{figure}[h]
\centering
	\includegraphics[scale=0.5]{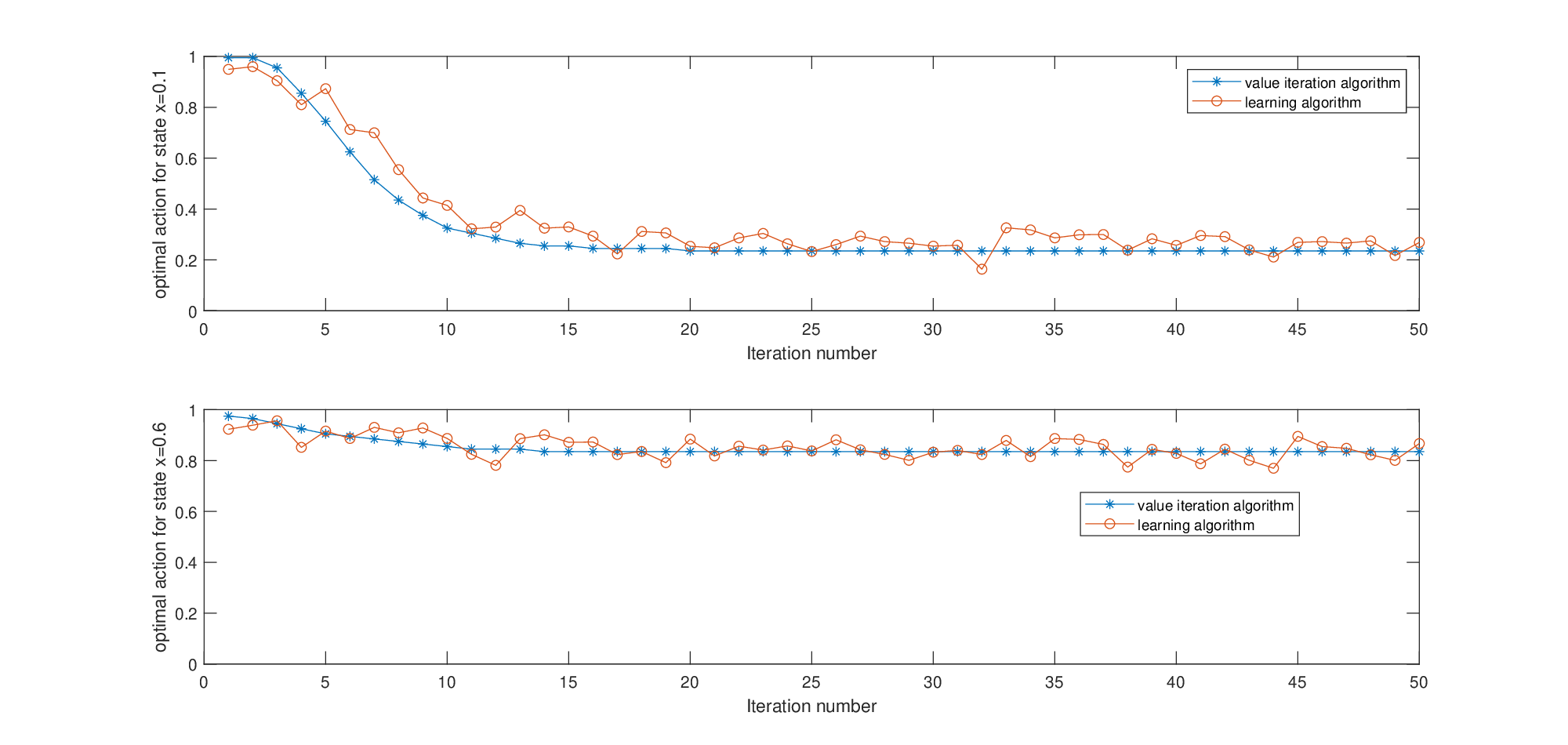}
	\caption{Comparison of policies: discounted-cost} \label{policy_comp}
\end{figure}

\begin{figure}[h]
\centering
	\includegraphics[scale=0.5]{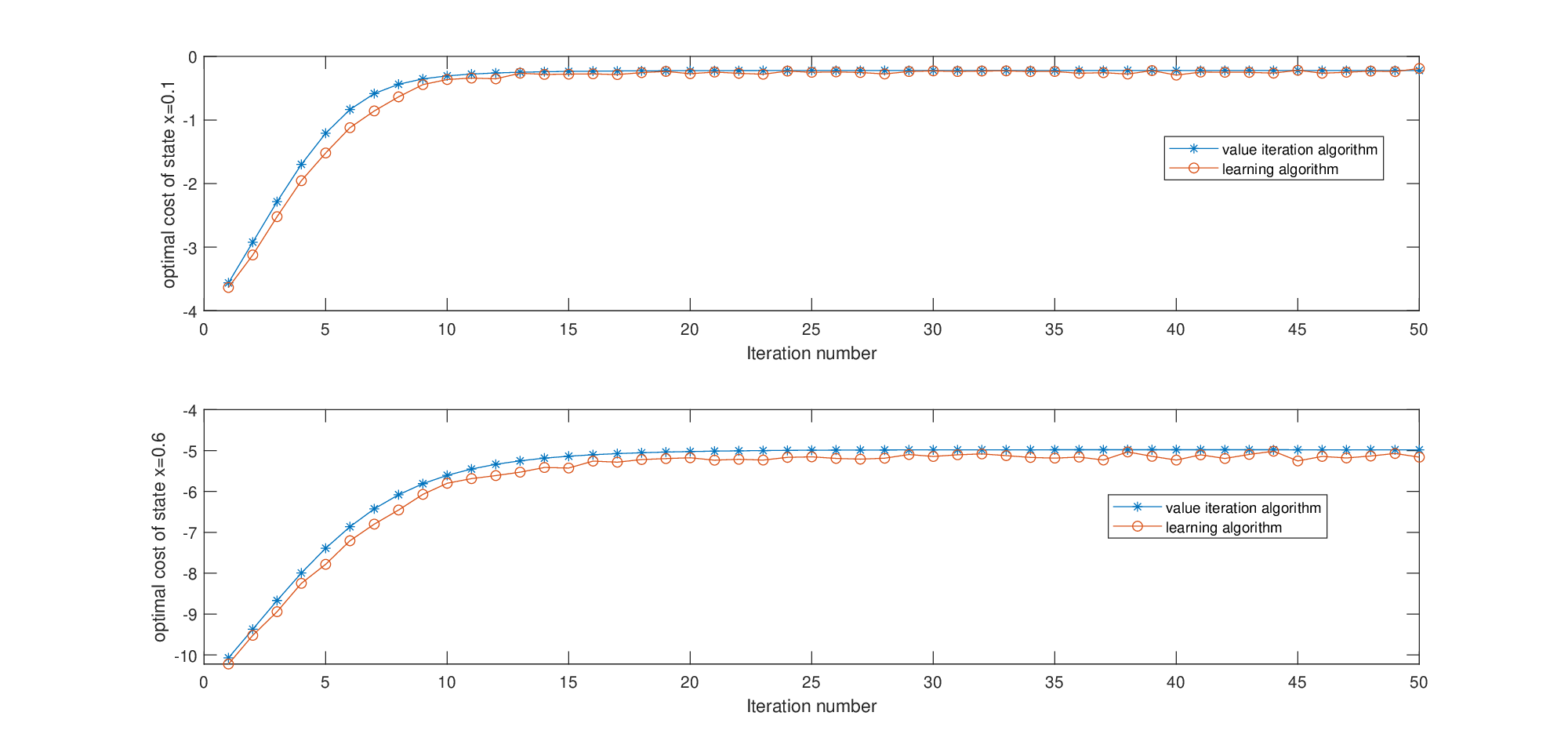}
	\caption{Comparison of costs: discounted-cost} \label{cost_comp}
\end{figure}

\subsection{Average Cost}\label{num_ex_average}
	
We consider a mean-field game with state space $\sX = \{0,1\}$ and action space $\sA = [0,1]$. The transition probability $p: \sX \times \sA \rightarrow \P(\sX)$ is independent of the mean-field term and is given by 
$$
p(\,\cdot\,|x,a) = l_0(\,\cdot\,|x,) \cdot a + l_1(\,\cdot\,|x) \cdot (1-a),
$$
where 
\begin{align}
	&l_0(1|0) = \eta, \quad l_0(1|1) = 1-\alpha, \nonumber \\
	&l_1(1|0)=\kappa, \quad l_1(1|1) = 1-\xi. \nonumber
\end{align}
The one-stage cost function $c:\sX \times \sA \times \P(\sX) \rightarrow [0,\infty)$ depends on the mean-field term and is defined to be
$$
	c(x,a,\mu) = \tau \, \langle \mu \rangle \, x+\lambda \, (1-\langle \mu \rangle) \, (1-a) + \gamma \, a^2, 
$$
where $\langle \mu \rangle$ is the mean of the distribution $\mu$ on $\sX$. It can be verified that Assumption~\ref{as1} holds in this example. We use the following values of the parameters:
\begin{align}
	\eta&=0.7, \quad \alpha = 0.1, \quad \kappa=0.1, \quad \xi=0.8 \nonumber \\
	\tau&=0.1, \quad \lambda=0.4, \quad \gamma=0.2. \nonumber 
\end{align}
With these parameters, it is also straightforward to check that  Assumption~\ref{av-as2}-(a) holds. We run the learning algorithm using the following parameters: $N=1000$, $L=50$, $M=1000$, $K=50$. Output of the learning algorithms contain the average of the state-measure (i.e., mean-field distribution) and mean-field equilibrium policies. In the fitted $Q$-iteration algorithm, we pick the function class $\F$ as two-layer neural networks with $20$ hidden units. We use the neural network fitting tool of MATLAB. In particular, we use `fittnet', `train', and `net' functions of MATLAB, where `Levenberg-Marquardt' is picked as the training algorithm and the transfer function is chosen as `hyperbolic tangent sigmoid transfer function'. The parameters of the neural network fitting tool of MATLAB are set to default values. We also run the value iteration algorithm using MFE operator $H^{\av}$ to find the correct average of state-measure and mean-field equilibrium policies. Then, we compare the learned outputs with the outputs of the value iteration algorithm. Figures~\ref{av-meanfield_comp} and \ref{av-policy_comp} show this comparison for the average-cost. It can be seen that learned outputs converge to the outputs of the value iteration algorithm.

\begin{figure}[h]
\centering
	\includegraphics[scale=0.5]{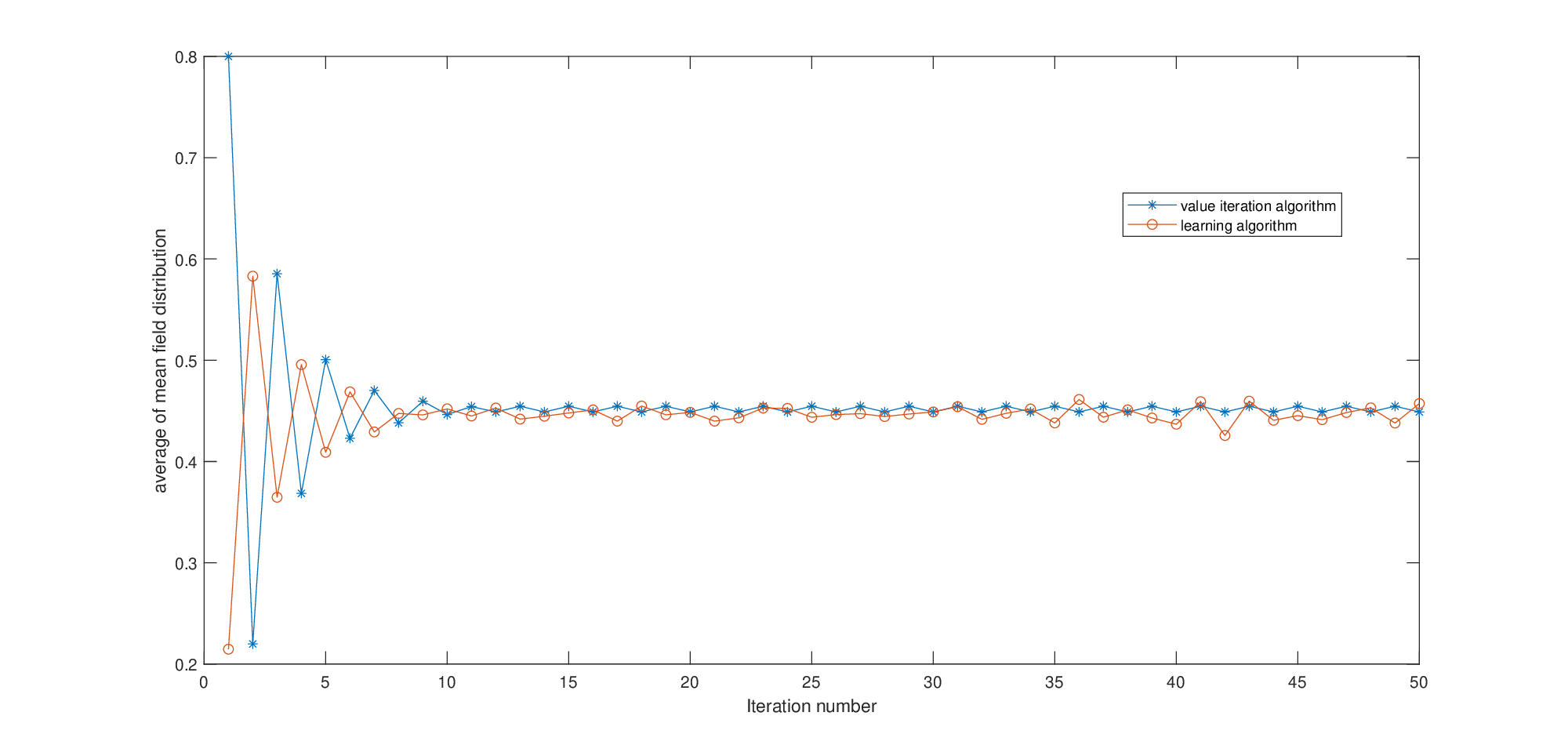}
	\caption{Comparison of state-measures: average-cost} \label{av-meanfield_comp}
\end{figure}
	
\begin{figure}[h]
\centering
	\includegraphics[scale=0.5]{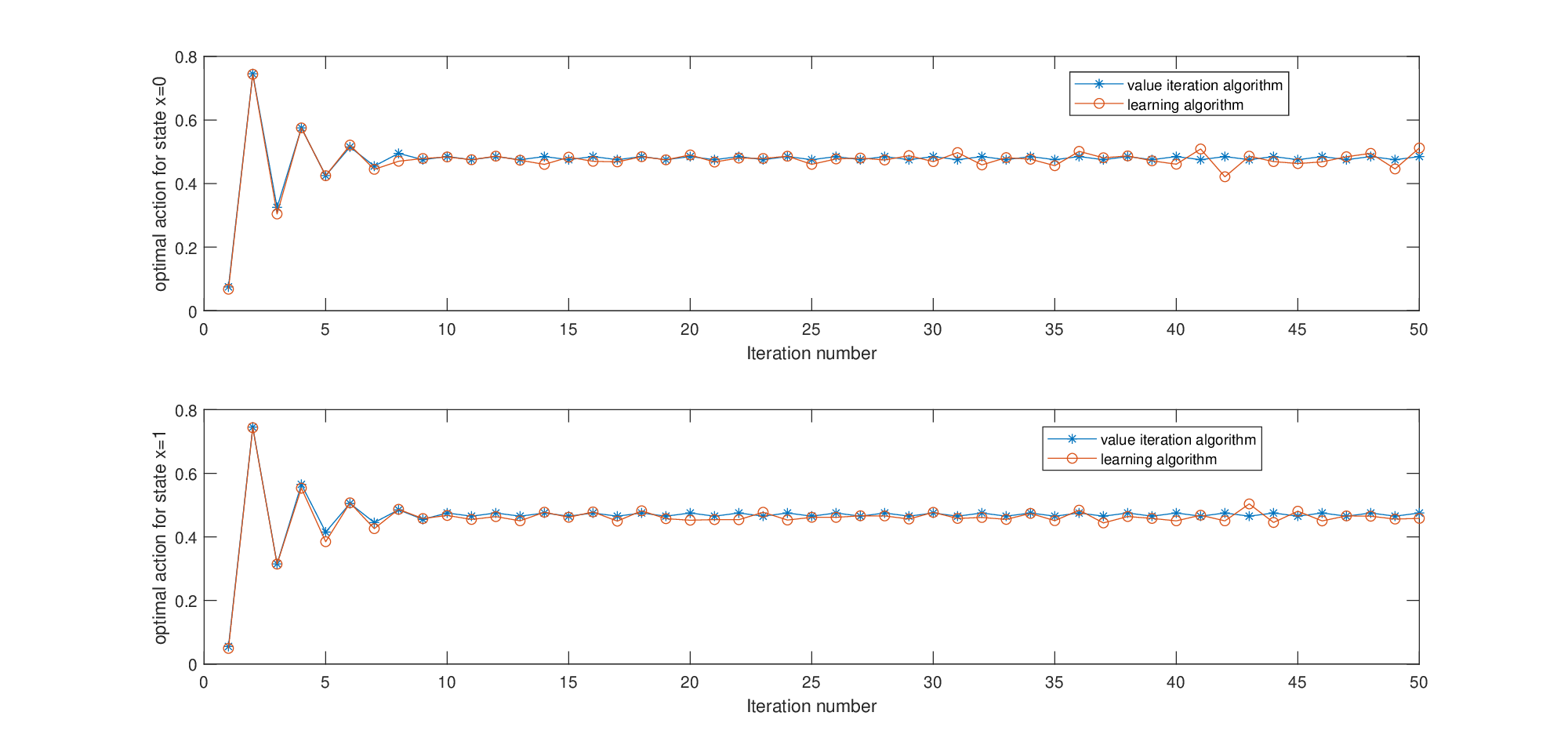}
	\caption{Comparison of policies: average-cost} \label{av-policy_comp}
\end{figure}
	

\section{Conclusion}\label{conc}

This paper has established a learning algorithm for discrete-time mean-field games subject to discounted-cost and average-cost criteria. Under certain regularity conditions on system components, we have proved that the policy obtained from the learning algorithm converges to the policy in mean-field equilibrium with some probabilistic convergence rate. We have then used the learned policy to construct an approximate Nash equilibrium for the finite-agent game problem. 

\section*{Appendix}

In this appendix, we state some auxiliary results that will be frequently used in the paper. The first result gives a bound on $l_{\infty}$-norm  of uniformly Lipschitz continuous function $g(x,a)$ with respect to the action $a$ in terms of its $l_2$-norm. 

\begin{lemma}\label{li-l2}
Let $g: \sX \times \sA \rightarrow \R$ be a uniformly Lipschitz continuous function of the action $a$ with Lipschitz constant $L$. Then, under Assumption~\ref{as1}-(c), we have 
$$
\|g\|_{\infty}  \leq \max\left( \left[\frac{m(\sA) \, (\dim_{\sA}+1)! \, \zeta_0}{\alpha \, (2/L)^{\dim_{\sA}}} \, \|g\|_{\nu} \right]^{1/(\dim_{\sA}+1)}, (\dim_{\sA}+1) \, \zeta_0 \, \|g\|_{\nu} \right),
$$
where $\alpha  > 0$ is the constant in Assumption~\ref{as1}-(c) and $\displaystyle \zeta_0 \coloneqq \frac{1}{\sqrt{\min_x \nu(x)}}$.
\end{lemma}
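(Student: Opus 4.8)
The plan is to reduce the inequality to a statement about a single $L$-Lipschitz function of the action variable and then compare its sup-norm with its $L^2$-norm by a layer-cake argument combined with Cauchy--Schwarz. First I would note that, since $\sX$ is finite and each $g(x,\cdot)$ is continuous on the compact set $\sA$, the quantity $s\coloneqq\|g\|_{\infty}$ is attained at some $(x^*,a^*)$. Writing $P\coloneqq\big(\int_{\sA}g(x^*,a)^2\,m_{\sA}(da)\big)^{1/2}$, the definition of $\|\cdot\|_{\nu}$ gives $\|g\|_{\nu}^2\ge(\min_x\nu(x))\,P^2$, hence $P\le\zeta_0\|g\|_{\nu}$. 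Since both entries of the claimed maximum are nondecreasing in their argument, it suffices to prove
\[
s\le\max\!\left(\Big[\tfrac{m(\sA)\,(\dim_{\sA}+1)!}{\alpha\,(2/L)^{\dim_{\sA}}}\,P\Big]^{1/(\dim_{\sA}+1)},\ (\dim_{\sA}+1)\,P\right)
\]
and then substitute $P\le\zeta_0\|g\|_{\nu}$.

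Next I would exploit Lipschitz continuity through the pointwise lower bound $|g(x^*,a)|\ge\psi(a)\coloneqq(s-L\|a-a^*\|)_+$. Cauchy--Schwarz yields $m(\sA)\,P\ge\int_{\sA}|g(x^*,a)|\,da\ge\int_{\sA}\psi\,da$, and a layer-cake (Cavalieri) expansion rewrites $\int_{\sA}\psi\,da=\int_0^{s}m\big(B(a^*,(s-t)/L)\cap\sA\big)\,dt$. Applying Assumption~\ref{as1}-(c) to each level set and substituting $r=(s-t)/L$ then gives
\[
m(\sA)\,P\ \ge\ L\int_0^{s/L}\min\big\{\alpha\,V\,r^{\dim_{\sA}},\,m(\sA)\big\}\,dr\ \eqqcolon\ I(s),
\]
where $V\coloneqq m(B(0,1))$ is the Lebesgue measure of the unit Euclidean ball in $\R^{\dim_{\sA}}$.

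Finally I would invert this monotone relation, and the piecewise form of $I$, with breakpoint at $\delta_0\coloneqq(m(\sA)/(\alpha V))^{1/\dim_{\sA}}$, is exactly what produces the maximum. In the regime $s/L\le\delta_0$ the minimum equals its first argument throughout, so $I(s)=\alpha V\,s^{\dim_{\sA}+1}/\big(L^{\dim_{\sA}}(\dim_{\sA}+1)\big)$; solving for $s$ and using the volume comparison $V\ge 2^{\dim_{\sA}}/\dim_{\sA}!$ (the Euclidean unit ball contains the $\ell_1$ cross-polytope $\{\|a\|_1\le1\}$) converts the ball volume into the factorial and delivers the first term. In the complementary regime $s/L>\delta_0$ a direct evaluation gives $I(s)=m(\sA)\big(s-\tfrac{\dim_{\sA}}{\dim_{\sA}+1}L\delta_0\big)$; combining this with $L\delta_0<s$ yields $s\le(\dim_{\sA}+1)P$, the second term.

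The hard part will be the bookkeeping in the middle step: carefully justifying the layer-cake identity on $\sA$, bounding each slice with the minimum appearing in Assumption~\ref{as1}-(c), and tracking constants so that the two regimes reproduce precisely the two entries of the stated maximum. The one genuinely non-routine ingredient is the volume comparison $V\ge 2^{\dim_{\sA}}/\dim_{\sA}!$, which is what supplies the factorial $(\dim_{\sA}+1)!$ and eliminates the Euclidean ball volume from the final bound.
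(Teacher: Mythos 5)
Your proof is correct, and the overall structure coincides with the paper's: you isolate the slice $g(x^*,\cdot)$ at the maximizing state, pass from $\|g\|_{\nu}$ to the single-state $L^2(m_{\sA})$-norm via the factor $\zeta_0=1/\sqrt{\min_x\nu(x)}$, and invoke a sup-norm-versus-$L^2$-norm comparison for $L$-Lipschitz functions on $\sA$. The difference is that the paper outsources that comparison entirely to \cite[Lemma D.2]{AnMuSz07-t}, whereas you reprove it from scratch via the cone lower bound $|g(x^*,a)|\ge (s-L\|a-a^*\|)_+$, Cauchy--Schwarz, the layer-cake identity, Assumption~\ref{as1}-(c) applied to each superlevel set, and the cross-polytope volume comparison $m(B(0,1))\ge 2^{\dim_{\sA}}/\dim_{\sA}!$; this is essentially the content of the cited lemma, so what your version buys is self-containedness rather than a new idea. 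I verified the two regimes: for $s/L\le\delta_0$ your computation gives $s^{\dim_{\sA}+1}\le \frac{m(\sA)(\dim_{\sA}+1)!}{\alpha(2/L)^{\dim_{\sA}}}P$, and for $s/L>\delta_0$ it gives $s\le(\dim_{\sA}+1)P$, which after $P\le\zeta_0\|g\|_{\nu}$ reproduce exactly the two entries of the stated maximum. Incidentally, your intermediate inequality places the exponent $\dim_{\sA}+1$ on $|g(x,a)|$ alone, which is the version consistent with the lemma's conclusion; the paper's displayed citation of \cite[Lemma D.2]{AnMuSz07-t} puts the exponent on the entire bracket, which appears to be a typographical slip there.
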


\proof
Under Assumption~\ref{as1}-(c), by following the same steps as in the proof of {\cite[Lemma D.2]{AnMuSz07-t}}, for all $(x,a) \in \sX \times \sA$, we obtain the following 
$$
\left( \int_{\sA} |g(x,\hat{a})|^2 \, m_{\sA}(d\hat{a}) \right)^{1/2} \geq \min\left( \left[\frac{\alpha \, (2/L)^{\dim_{\sA}}}{m(\sA) \, (\dim_{\sA}+1)!} \, |g(x,a)| \right]^{(\dim_{\sA}+1)}, \frac{ |g(x,a)|}{(\dim_{\sA}+1)} \right)
$$
Note that we also have 
\begin{align}
&\left( \sum_x \int_{\sA} |g(x,a)|^2 \, m_{\sA}(da) \, \nu(x) \right)^{1/2} \geq \sqrt{\min_x \nu(x)} \, \sup_x \, \left( \int_{\sA} |g(x,a)|^2 \, m_{\sA}(da) \right)^{1/2}.  \nonumber
\end{align}
Therefore, above inequalities lead to 
$$
\|g\|_{\infty}  \leq \max\left( \left[\frac{m(\sA) \, (\dim_{\sA}+1)! \, \zeta_0}{\alpha \, (2/L)^{\dim_{\sA}}} \, \|g\|_{\nu} \right]^{1/(\dim_{\sA}+1)}, (\dim_{\sA}+1) \, \zeta_0 \, \|g\|_{\nu} \right).
$$
\endproof

\begin{remark}\label{simplify}
In the paper, to simplify the notation, we will always assume that 
$$
\left[\frac{m(\sA) \ (\dim_{\sA}+1)! \, \zeta_0}{\alpha \, (2/L)^{\dim_{\sA}}} \, \|g\|_{\nu} \right]^{1/(\dim_{\sA}+1)} \geq (\dim_{\sA}+1) \, \zeta_0 \, \|g\|_{\nu}.
$$
Therefore, the bound in Lemma~\ref{li-l2} will always be in the following form:
$$
\|g\|_{\infty}  \leq \left[\frac{m(\sA) \ (\dim_{\sA}+1)! \, \zeta_0}{\alpha \, (2/L)^{\dim_{\sA}}} \, \|g\|_{\nu} \right]^{1/(\dim_{\sA}+1)}.
$$
\end{remark}

Before we state the next result, we need to give some definitions. Let $\sE$ be some set. Let ${\cal G}$ be a set of real-valued functions on $\sE$ taking values in $[0,K]$. For any $e^{1:N} \coloneqq \{e_i\}_{i=1}^N \in \sE^N$, define the following semi-metric on ${\cal G}$:
$$
d_{e^{1:N}}(g,h) \coloneqq \frac{1}{N} \, \sum_{i=1}^N |g(e_i) - h(e_i)|. 
$$
Then, for any $\varepsilon > 0$, let $N_1\left(\varepsilon,\{e_i\}_{i=1}^N,{\cal G}\right)$ denote the $\varepsilon$-covering number of ${\cal G}$ in terms of semi-metric $d_{e^{1:N}}$ \cite[pp. 14]{Vid10}. Moreover, let $V_{\cal G}$ denote the pseudo-dimension of the function class ${\cal G}$ \cite[Definition 4.2, pp. 120]{Vid10}. 

\begin{lemma}{\cite[Proposition E.3]{AnMuSz07-t}}\label{cov-num}
For any $e^{1:N}$, we have 
$$
N_1\left(\varepsilon,\{e_i\}_{i=1}^N,{\cal G}\right) \leq e \, (V_{\cal G}+1) \, \left(\frac{2 e K}{\varepsilon}\right)^{V_{\cal G}}.
$$
\end{lemma}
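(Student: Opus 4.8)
The plan is to reduce this functional covering-number bound to a Vapnik--Chervonenkis statement about the associated class of subgraph sets, and then to invoke the classical sharp packing estimate for set systems of bounded VC dimension. First I would recall that the pseudo-dimension $V_{\cal G}$ is, by definition, the VC dimension of the collection of subgraphs
$$
{\cal B} \coloneqq \{B_g : g \in {\cal G}\}, \qquad B_g \coloneqq \{(e,y) \in \sE \times [0,K] : y \leq g(e)\}.
$$
Then I would introduce the empirical probability measure $P_N = \frac{1}{N}\sum_{i=1}^N \delta_{e_i}$ on $\sE$ together with the product measure $\lambda \coloneqq P_N \otimes U$, where $U$ is the uniform law on $[0,K]$. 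A one-line Fubini computation gives, for any $g,h \in {\cal G}$,
$$
\lambda(B_g \triangle B_h) = \int_{\sE} \frac{|g(e) - h(e)|}{K}\, P_N(de) = \frac{1}{K}\, d_{e^{1:N}}(g,h).
$$
Consequently an $\varepsilon$-cover of ${\cal G}$ in the semi-metric $d_{e^{1:N}}$ corresponds exactly to an $(\varepsilon/K)$-cover of ${\cal B}$ in the $L_1(\lambda)$ pseudometric $\rho_\lambda(B,B') \coloneqq \lambda(B \triangle B')$, so it suffices to bound the covering number of the VC class ${\cal B}$.

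Second, I would pass from covering to packing (every $\varepsilon$-covering number is dominated by the corresponding $\varepsilon$-packing number) and carry out the standard probabilistic argument. Take a maximal $\eta$-separated family $B_1,\ldots,B_M$ in $\rho_\lambda$ with $\eta = \varepsilon/K$. Drawing $m$ i.i.d.\ points from $\lambda$, each point distinguishes a fixed separated pair with probability at least $\eta$, so the probability that all $m$ points fail to distinguish that pair is at most $(1-\eta)^m$; a union bound over the $\binom{M}{2}$ pairs shows that once $\binom{M}{2}(1-\eta)^m < 1$ there exists a sample of size $m$ on which all $M$ sets induce distinct traces. Since ${\cal B}$ has VC dimension $d = V_{\cal G}$, the Sauer--Shelah lemma caps the number of distinct traces on $m$ points by $\sum_{i=0}^d \binom{m}{i}$, whence $M \leq \sum_{i=0}^d \binom{m}{i}$, and optimizing the choice of $m$ (of order $\eta^{-1}\log M$) yields the bound with the correct exponent in $1/\eta = K/\varepsilon$.

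The main obstacle is sharpening the leading constants: the naive union-bound-plus-Sauer argument above already produces the right polynomial exponent $V_{\cal G}$ in $(1/\varepsilon)$, but extracting precisely the factor $e\,(V_{\cal G}+1)\,(2eK/\varepsilon)^{V_{\cal G}}$ rather than a cruder constant requires Haussler's refined one-inclusion-graph (random-permutation) packing bound for VC classes. In the write-up I would therefore present the reduction to ${\cal B}$ in full, and then quote this sharp VC packing estimate---which is exactly the content cited as \cite[Proposition E.3]{AnMuSz07-t}---as the black box supplying the constants. Combining the metric identity, the covering-to-packing comparison, and the packing bound then completes the proof.
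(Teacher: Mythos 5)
Your proposal is correct as far as it goes, but note that the paper itself gives no proof of this lemma at all: it is imported verbatim by citation to \cite[Proposition E.3]{AnMuSz07-t}, so there is no internal argument to compare against. Your reduction is the standard (and correct) derivation of that cited result: the identity $\lambda(B_g \triangle B_h) = d_{e^{1:N}}(g,h)/K$ for the product of the empirical measure with the uniform law on $[0,K]$ is right, it converts the $d_{e^{1:N}}$-covering problem for ${\cal G}$ into an $L_1(\lambda)$-covering problem for the subgraph class ${\cal B}$ at scale $\varepsilon/K$, and the covering-to-packing comparison is standard. The one honest caveat — which you flag yourself — is that the specific constants $e\,(V_{\cal G}+1)\,(2eK/\varepsilon)^{V_{\cal G}}$ do not come out of the union-bound-plus-Sauer argument you sketch in the second step (that route leaves an extra logarithmic factor); they require Haussler's sharp packing theorem for VC classes, which you then quote as a black box. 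So your write-up is a correct and more transparent decomposition than the paper's bare citation, but it is not more self-contained at the one quantitatively delicate step: both you and the authors ultimately lean on the same external result for the stated constants. If full self-containment were required, the missing piece would be a proof of Haussler's bound itself, which is substantially harder than everything else in the reduction.
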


Let $P(\,\cdot\,|x)$ be a transition probability on $\sX$ with the following contraction coefficient
$$
\theta_P \coloneqq \frac{1}{2} \sup_{x,z} \|P(\,\cdot\,|x)-P(\,\cdot\,|z)\|_1. 
$$
Then, the following result holds.

\begin{lemma}{\cite[Lemma A.2]{KoRa08}}\label{KoRa08}
Let $\mu,\nu \in \P(\sX)$. Then, 
$$
\sum_y \left| \sum_x P(y|x) \, \mu(x) - \sum_x P(y|x) \, \nu(x) \right| \leq \theta_P \, \|\mu-\nu\|_1. 
$$
\end{lemma}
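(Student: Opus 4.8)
The plan is to prove this Dobrushin-type total-variation contraction by reducing the claim to the defining property of the coefficient $\theta_P$ via the Jordan decomposition of the signed measure $\mu-\nu$. First I would set $d \coloneqq \mu-\nu$ and observe that, since $\mu$ and $\nu$ are both probability measures on the finite set $\sX$, we have $\sum_x d(x) = 0$. Writing $d = d^{+} - d^{-}$ for its positive and negative parts, this balance forces $\sum_x d^{+}(x) = \sum_x d^{-}(x) \eqqcolon m$, and directly from the definition of the $l_1$-norm one gets $m = \tfrac{1}{2}\|\mu-\nu\|_1$. If $m=0$ then $\mu=\nu$ and the inequality is trivial, so I would assume $m>0$ and introduce the two probability measures $\alpha \coloneqq d^{+}/m$ and $\beta \coloneqq d^{-}/m$ in $\P(\sX)$, so that $\mu-\nu = m\,(\alpha-\beta)$.

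The key step is then to exploit that $\alpha$ and $\beta$ are \emph{probability} measures in order to rewrite the pushforward difference as an average of pairwise kernel differences. Using $\sum_x \alpha(x) = \sum_z \beta(z) = 1$, I would write, for each fixed $y$,
\begin{align}
\sum_x P(y|x)\,\alpha(x) - \sum_z P(y|z)\,\beta(z) = \sum_{x,z} \alpha(x)\,\beta(z)\,\bigl(P(y|x) - P(y|z)\bigr). \nonumber
\end{align}
Summing the absolute value over $y$, applying the triangle inequality, and interchanging the finite sums yields
\begin{align}
\sum_y \Bigl| \sum_x P(y|x)\,\alpha(x) - \sum_z P(y|z)\,\beta(z) \Bigr| \leq \sum_{x,z} \alpha(x)\,\beta(z)\,\|P(\cdot|x)-P(\cdot|z)\|_1 \leq 2\,\theta_P, \nonumber
\end{align}
where the last inequality uses the definition $\theta_P = \tfrac{1}{2}\sup_{x,z}\|P(\cdot|x)-P(\cdot|z)\|_1$ together with $\sum_{x,z}\alpha(x)\,\beta(z) = 1$.

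Finally I would combine the two steps. Since $\mu-\nu = m\,(\alpha-\beta)$, the left-hand side of the claimed inequality equals $m$ times the quantity just bounded, hence is at most $2m\,\theta_P = \theta_P\,\|\mu-\nu\|_1$, which is exactly the assertion. The only genuinely delicate point, and the step I expect to require the most care, is the rewriting of the pushforward difference as the product-measure average $\sum_{x,z}\alpha(x)\,\beta(z)\,(P(y|x)-P(y|z))$: this is what converts a single global $l_1$-distance into the pairwise kernel distances that $\theta_P$ controls, and it relies crucially on the normalization $\alpha,\beta \in \P(\sX)$ produced by the Jordan decomposition. Everything else is a routine use of the triangle inequality and the finiteness of $\sX$ to justify the interchange of sums.
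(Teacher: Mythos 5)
Your proof is correct. The paper itself does not prove this lemma --- it is imported verbatim as Lemma A.2 of Kontorovich and Ramanan (2008) --- so there is no in-paper argument to compare against; your derivation is the standard one for the Dobrushin contraction coefficient. All the steps check out: the Jordan decomposition of $\mu-\nu$ with equal total masses $m=\tfrac{1}{2}\|\mu-\nu\|_1$ (using $\sum_x(\mu(x)-\nu(x))=0$), the rewriting of the pushforward difference as the product-measure average $\sum_{x,z}\alpha(x)\beta(z)\bigl(P(y|x)-P(y|z)\bigr)$ (valid precisely because $\alpha$ and $\beta$ are normalized), and the final triangle-inequality bound by $\sup_{x,z}\|P(\cdot|x)-P(\cdot|z)\|_1 = 2\theta_P$. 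Since $\sX$ is finite here, the interchanges of sums are trivially justified, and the degenerate case $m=0$ is handled. This is a complete and self-contained proof of the cited result.
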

In other words, if we define $\mu P(\,\cdot\,) \coloneqq \sum_x P(\,\cdot\,|x) \, \mu(x) \in \P(\sX)$ and $\nu P(\,\cdot\,) \coloneqq \sum_x P(\,\cdot\,|x) \, \nu(x) \in \P(\sX)$, then we have $\|\mu P-\nu P\|_1 \leq \theta_P \, \|\mu-\nu\|_1$. Indeed, the last inequality explains why $\theta_P$ is called contraction coefficient.

\acks{
This research was supported by The Scientific and Technological Research Council of Turkey (T\"{U}B\.{I}TAK) B\.{I}DEB 2232 Research Grant.}


\end{document}